\definecolor{cite_color}{rgb}{0.0, 0.58, 0.71}
\definecolor{db}{rgb}{0.0, 0.2, 0.7}
\newtheorem{thm}{Theorem}
\newtheorem{prop}[thm]{Proposition}
\newtheorem{rem}{Remark}
\def\tsc#1{\csdef{#1}{\textsc{\lowercase{#1}}\xspace}}
\renewcommand{\figurename}{Fig.}
\renewcommand*{\fnum@figure}[1]{\figurename~\thefigure.}
\begin{document}
\let\WriteBookmarks\relax
\let\printorcid\relax 
\setlength{\skip\footins}{24pt}
\def\floatpagepagefraction{1}
\def\textpagefraction{.001}
\shorttitle{}
\shortauthors{Sixu Li et~al.}

\title [mode = title]{Nonlinear Oscillatory Response of Automated Vehicle Car-following: Theoretical Analysis with Traffic State and Control Input Limits}

\author[1]{\textcolor{black}{Sixu Li}}

\credit{ Conceptualization, Methodology, Writing – original draft, Data curation, Software, Writing – review \& editing}


\author[1]{\textcolor{black}{Yang Zhou}}
\cormark[1]

\ead{yangzhou295@tamu.edu}

\credit{Conceptualization, Methodology, Writing – review \& editing, Supervision}

\address[1]{Zachry Department of Civil $\&$ Environmental Engineering, Texas A$\&$M University, College Station, TX 77843, USA}

\address{%
\vspace{-1em}%
\begingroup
\normalfont\rmfamily\upshape\small 
\begin{center}
\fbox{\parbox{0.99\linewidth}{%
\textbf{Published in Transportation Research Part B: Methodological, vol.~201.}\\
S.~Li and Y.~Zhou, ``Nonlinear oscillatory response of automated vehicle car-following:
Theoretical analysis with traffic state and control input limits,''
\textit{Transportation Research Part B}, 201 (2025), 103315.
\href{https://doi.org/10.1016/j.trb.2025.103315}{doi:10.1016/j.trb.2025.103315}.%
}}
\end{center}
\endgroup
\vspace{-0.1em}%
}

\cortext[cor1]{Corresponding author}

\begin{abstract}
This paper presents a framework grounded in the theory of describing function (DF) and incremental-input DF to theoretically analyze the nonlinear oscillatory response of automated vehicles (AVs) car-following (CF) amidst traffic oscillations, considering the limits of traffic state and control input. While prevailing approaches largely ignore these limits (i.e., saturation of acceleration/deceleration and speed) and focus on linear string stability analysis, this framework establishes a basis for theoretically analyzing the frequency response of AV systems with nonlinearities imposed by these limits. To this end, trajectories of CF pairs are decomposed into nominal and oscillatory trajectories, subsequently, the controlled AV system is repositioned within the oscillatory trajectory coordinates. Built on this base, DFs are employed to approximate the frequency responses of nonlinear saturation components by using their first harmonic output, thereby capturing the associated amplification ratio and phase shift. Considering the closed-loop nature of AV control systems, where system states and control input mutually influence each other, amplification ratios and phase shifts are balanced within the loop to ensure consistency. This balancing process may render multiple solutions, hence the incremental-input DF is further applied to identify the reasonable ones. The proposed method is validated by estimations from Simulink, and further comparisons with prevailing methods are conducted. Results confirm the alignment of our framework with Simulink results and exhibit its superior accuracy in analysis compared to the prevailing methods. Furthermore, the framework proves valuable in string stability analysis, especially when conventional linear methods offer misleading insights. 
\end{abstract}

\begin{keywords}
Automated vehicles \sep Car-following \sep Traffic oscillation \sep Saturation nonlinearity \sep Incremental-input describing function \sep Frequency domain analysis
\end{keywords}

\maketitle

\section{Introduction}

Traffic oscillation, alternatively termed stop-and-go traffic, is a recurrent pattern of deceleration and acceleration typically observed in congested traffic scenarios. This phenomenon significantly impairs traffic efficiency, compromises safety, and diminishes energy efficiency (\citep{bilbao2008costs,tian2024physically}). Extensive research has substantiated that a multitude of maneuvers can instigate traffic oscillations, including lane changing (\citep{ahn2007freeway,mauch2002freeway,laval2006lane}), merging, and diverging (\citep{ahn2010effects,cassidy2005increasing}). The rapid progression of vehicle automation technologies, particularly the advent of automated vehicles (AVs) (\citep{guanetti2018control,rajamani2011vehicle}), introduces a new traffic flow system: mixed traffic comprising a heterogeneous mix of AVs and human-driven vehicles (HDVs). This mixed traffic system is expected to reshape traditional traffic flow dynamics in unparalleled ways and improve traffic throughput, safety as well as energy efficiency during traffic oscillations (\cite{talebpour2016influence,li2024beyond,xu2022hierarchical}). Consequently, acquiring a profound understanding of the distinctive behaviors exhibited by AVs during traffic oscillations becomes imperative in order to model and analyze this new traffic flow system.

In recent decades, significant efforts have been made to analyze and understand the disturbances evolution of AV car-following (CF) in the context of oscillations. A predominant category of approaches involves conducting string stability analysis, which examines if traffic oscillations are dampened or remain bounded along a string of vehicles (\citep{feng2019string}). Various definitions of string stability have been proposed to this end. For instance, the original string stability (OSS) was defined in \citep{chu1974decentralized} by the boundedness and convergence of oscillations for all vehicles. The strong frequency domain string stability (SFSS) introduced in \citep{naus2010string} was defined as the $H_\infty$ norm of the transfer function of all vehicles being less than or equal to 1. In \citep{swaroop1996string}, the asymptotically time domain string stability (ATSS) was proposed, which intuitively means that the oscillations of all vehicles remain small if the oscillation of the leading vehicle starts small, and all oscillations converge asymptotically. The $L_p$ string stability (LPSS) was established in \citep{ploeg2013lp} as the $L_p$ norm of the output of all vehicles being bounded by $\kappa$ functions of the $L_p$ norms of control input and initial states. In string stability analysis, especially under the definitions of SFSS and LPSS, the methodology of mathematically deriving and analyzing the transfer function/frequency response of AV systems is prevalent, particularly due to the frequency response's ability to capture the amplification ratio and phase shift of oscillations (\citep{franklin2002feedback}). Using this approach, Naus et al. designed a cooperative adaptive cruise control system and established conditions on the transfer function for SFSS (\cite{naus2010string}). Xiao et al. examined the effect of parasitic delays and lags on the SFSS of vehicular strings equipped with adaptive cruise control (\citep{xiao2008stability}). Ploeg et al. established the conditions on the transfer function for LPSS and validated their findings through field experiments (\citep{ploeg2013lp}). More recently, the benefits of connectivity on SFSS with a multiple-predecessor following typology were investigated (\citep{darbha2018benefits,bian2019reducing}). In \citep{montanino2021string}, the stability of heterogeneous traffic flow was studied through the weak version of SFSS, and with linearized and parameter-uncertain CF models.

The aforementioned and similar work relied on linear models, which provide an elegant way of designing string-stable controllers in scenarios where the saturation limits on traffic state and control input (i.e., velocity and acceleration) are not reached. However, from an analysis perspective, such methods might produce unrealistic results. Consider a simple extreme example: suppose a vehicular string is string-unstable such that, at a specific frequency, the oscillation amplification ratio of each vehicle is greater than one, then as the number of vehicles increases, the amplitude of oscillations would approach infinity at the end of the vehicular string, making the acceleration/deceleration exceed the reasonable limit and vehicle may travel backward or surpass free flow speed. This highlights the necessity of analyzing the oscillatory response of AVs considering the presence of control and traffic state saturation limits, which are inherently nonlinear. Such limits are inherent in vehicular systems due to the physical limits on acceleration, deceleration, and velocity (\citep{stone2004automotive}). Other nontrivial sources of saturation arise from the clipping of control signals generated by an unconstrained controller and the imposition of control and/or state constraints by constrained optimal controllers. These saturations are typically imposed to accommodate the aforementioned physical limits or to curb overly aggressive behaviors of AVs (\citep{li2024sequencing,wang2014rolling}). It is crucial to note that saturation limits can be reached not only in string-unstable platoons, where oscillations are amplified along the vehicular string, but also in string-stable, heterogeneous platoons (\citep{naus2010string}). particularly when an AV (e.g., automated truck) has more restrictive limits compared to its predecessor. {For example, take an automated truck with a maximum absolute acceleration of $1\, \text{m/s}^2$ that follows a passenger car, which freely oscillates with a maximum absolute acceleration of $3\, \text{m/s}^2$. A string stable controller could generate an acceleration command ranging from $-2$ to $2\, \text{m/s}^2$, yet this still surpasses the acceleration limits of the automated truck.}

The existing body of literature is deficient in methodologies to derive the frequency response for the examination of disturbance amplification and phase shift of AVs during traffic oscillation in the presence of saturation nonlinearities. In \citep{zhou2022empirical}, acceleration saturation was considered, but its impact on string stability was studied numerically. In \citep{bingol2022string}, the focus was on the saturation of engine and braking forces. However, the main objective was to generate trajectories that avoid reaching saturation limits, rather than analyzing the system's response when these limits are reached. Recently, a notable study (\citep{zhou2023data}) offered a data-driven framework to quantitatively scrutinize the disturbance amplification ratio of automated vehicles in CF scenarios, however, it does not furnish theoretical insights into the behaviors observed. The describing functions (DF) method, perceived as an extension of the frequency response, is a nonlinear control technique that replaces the nonlinearity by its DF (i.e., a frequency-dependent equivalent, and possibly complex, gain) for analysis purposes by approximating the periodic output of nonlinear elements with the first harmonic alone (\citep{franklin2002feedback}). This method facilitates the analytical examination of systems with nonlinearity in the frequency domain. Li et al. pioneered the application of this technique for the analytical study of HDV behaviors during traffic oscillations  (\citep{li2011characterization,li2012prediction,li2014stop}). However, their framework is confined to a specific class of conventional car-following models (typically first-order differential with a single nonlinear element) and is not adaptable to automated vehicle systems, which are typically second-order or third-order systems exhibiting multiple saturation nonlinearities (\citep{swaroop1999constant,li2024sequencing}). Furthermore, their approach does not systematically address the stability of the forced oscillations solved by their framework. Recently, the DF method has been employed to derive theoretical results for assessing the outcomes of a data-driven framework analyzing the oscillation amplification behavior of AVs (\citep{zhou2023data}). However, this approach simplifies the AV system by incorporating the nonlinear element outside of the closed-loop, rather than integrating it within the loop. Moreover, the methodology presented is exclusively applicable to analyze the effect of a single saturation element.

{This paper aims to address the identified gaps by proposing a framework that, (1) for the first time, analytically discerns the oscillatory CF characteristics of \textit{AV closed-loop} systems laden with \textit{multiple} saturation nonlinearities, via a frequency response approach; and (2) it for the first time, examines the "realism" of solutions using oscillation stability analysis. Specifically, the trajectory decomposition concept proposed by \citep{li2011characterization} is utilized and the general CF problem of AVs considering both control and state saturation is modeled. The nonlinear saturation elements are replaced by their DF and a first-harmonic balancing is conducted to affirm the absence of limit cycles (self-sustained oscillation without external periodic excitement) and to solve for forced oscillation candidates of the AV. The incremental-input DF, representing the DF of a disturbance surrounding a specific oscillation candidate, is employed to scrutinize the stability of each candidate. This approach facilitates the identification of the stable oscillation and the acquisition of the frequency response. It is worth noting that the DF and incremental-input DF analysis serve as complementary steps in our framework. }The theoretical outcomes derived from this framework are compared with the linear theoretical method and the method proposed in \cite{zhou2023data}, validated by estimations obtained from Simulink, demonstrating the framework’s proficiency in accurately encapsulating the oscillation characteristics inherent to AVs.

The remainder of this paper is structured as follows. Section \ref{sec2} models the CF problem of AVs with saturation nonlinearities. Section \ref{sec3} presents the DF of control saturation and state saturation elements. The frequency response analysis based on the DF method is elucidated in Section \ref{sec4}. Section \ref{sec5} introduces the incremental-input DF and conducts stability analysis on forced oscillation candidates. Section \ref{sec6} compares the theoretical findings with other methods, using Simulink experiment results for validation, and provides a discussion on the observed outcomes. Finally, conclusions are drawn in Section \ref{sec7}.

\section{Modeling the CF problem of AVs\label{sec2}}

In this section, the modeling of the CF problem for closed-loop nonlinear AV systems is elucidated. Specifically, the methodology proposed by \citep{li2011characterization} is employed to dissect a vehicle trajectory into two components: a nominal trajectory and an oscillatory trajectory. This approach facilitates the modeling of closed-loop controlled systems of AVs, incorporating inherent nonlinearities.

Vehicle trajectories encapsulate both macroscopic and microscopic behaviors. The macroscopic behavior delineates equilibrium states, which are inherently tied to the fundamental diagram, while the microscopic behavior delineates the deviations of individual vehicles from these equilibrium states. We decompose a vehicle trajectory into a nominal trajectory, indicative of macroscopic behavior, and an oscillatory trajectory, indicative of microscopic behavior. As illustrated in Fig. \ref{fig:decomposition}, for any AV and its predecessor in the traffic, indexed by $n+1$ and $n$, respectively, their trajectories $p_{n+1}$ and $p_n$ are segregated into nominal and oscillatory trajectories. The nominal trajectories $\bar{p}_{n+1}$ and $\bar{p}_n$ encapsulate macroscopic behaviors, reflecting equilibrium spacing, velocity, etc., and the oscillatory trajectories $\tilde{p}_{n+1}$ and $\tilde{p}_n$ encapsulate the microscopic behaviors exhibited by the vehicles amidst traffic oscillations.

\begin{figure}[h]
    \centering
    \setlength{\abovecaptionskip}{0pt}
    \includegraphics[width=0.8\textwidth]{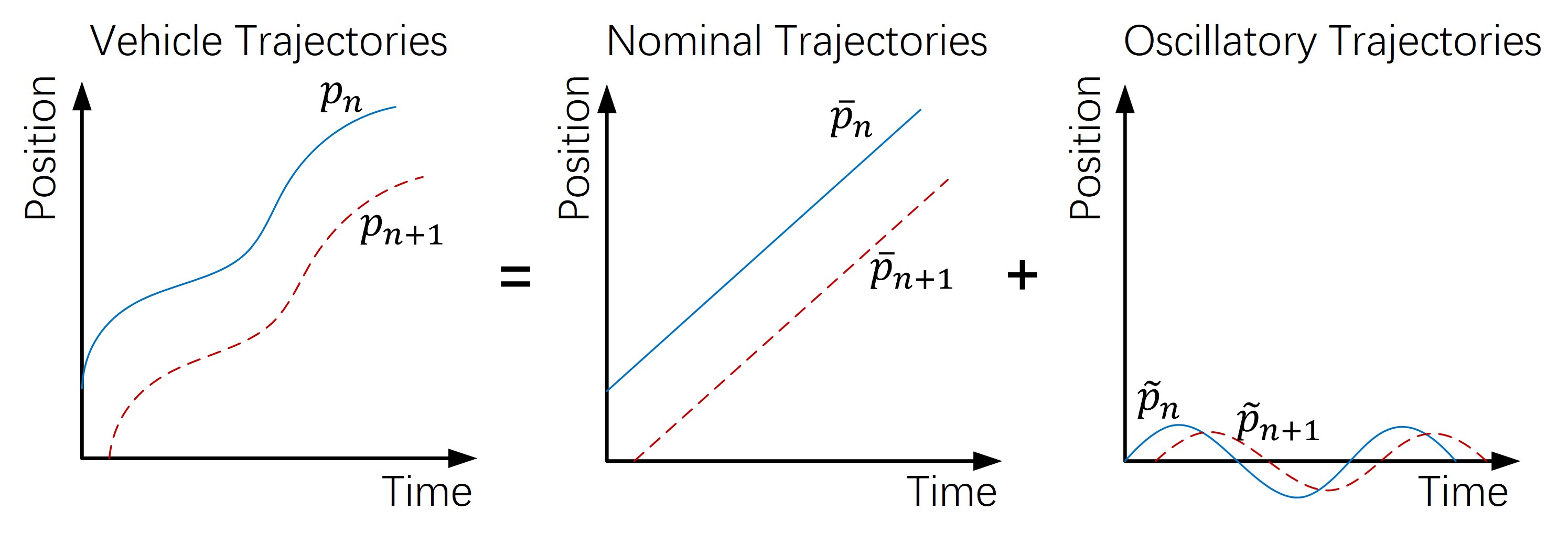}
    \caption{Vehicle trajectory decomposition}
    \label{fig:decomposition}
\end{figure}

{\begin{rem}
    The decomposition method was validated with field data in \citep{li2012prediction} and demonstrated proficiency in accurately representing fully developed oscillations. It has since been widely applied in traffic oscillation studies (e.g., \citep{rhoades2014heterogeneous,li2014stop,rhoades2016calibration,zhou2023data}). 
\end{rem}}

Considering a linear feedback CF control law of AVs:
\begin{equation}
{a_{n+1}(t)=k_d\Delta d_{n+1}(t)+k_v\Delta v_{n+1}(t)} \label{eq:generic control}
\end{equation}

\noindent where $a_{n+1}(t)$ is the acceleration of the AV $n+1$ at time $t$, $\Delta d_{n+1}(t)=p_n(t)-p_{n+1}(t)-d^*_{n+1}(t)$ is the spacing deviation of AV $n+1$ at time $t$, and $d^*_{n+1}(t)=l_{n+1}+\tau \dot{p}_{n+1}(t)$ is the desired spacing of AV $n+1$ at time $t$ wherein $l_{n+1}$ is a constant distance, $\tau$ is a constant time, and $\dot{p}_{n+1}(t)$ is the speed of AV $n+1$. $\Delta v_{n+1}(t)=\dot{p}_{n}(t)-\dot{p}_{n+1}(t)$ is the speed difference between AV $n+1$ and its predecessor at time $t$. $k_d$ and $k_v$ are the feedback gains on $\Delta d_{n+1}$ and $\Delta v_{n+1}$, respectively. It is imperative to note that $d^*_{n+1}(t)=l_{n+1}+\tau \dot{p}_{n+1}(t)$ is a generic form. For example, under the two most prevalent spacing policies—constant time gap policy (\citep{naus2010string,kianfar2015control}) and constant distance policy (\citep{swaroop1999constant,naus2010string,li2024sequencing})—$\tau$ equals to the pre-defined time gap and zero, respectively. Based on the trajectory decomposition shown in Fig. \ref{fig:decomposition}, we obtain:

\begin{equation}
\begin{aligned}
\label{eq:control decompose1}   
\Delta d_{n+1}(t)&=(\bar{p}_n(t)-\bar{p}_{n+1}(t))+(\tilde{p}_n(t)-\tilde{p}_{n+1}(t))-(l_{n+1}+\tau (v_e+\dot{\tilde{p}}_{n+1}(t)))\\  
&=\tilde{p}_n(t)-\tilde{p}_{n+1}(t)-\tau \dot{\tilde{p}}_{n+1}(t))
\end{aligned}
\end{equation}

\begin{equation}
\begin{aligned}
\label{eq:control decompose2}   
\Delta v_{n+1}(t)&=(\dot{\bar{p}}_n(t)-\dot{\bar{p}}_{n+1}(t))+(\dot{\tilde{p}}_n(t)-\dot{\tilde{p}}_{n+1}(t))\\  
&=\dot{\tilde{p}}_n(t)-\dot{\tilde{p}}_{n+1}(t)
\end{aligned}
\end{equation}

\noindent where $v_e$ is the equilibrium speed. Therefore, given $v_e$, based on Eqs. (\ref{eq:control decompose1}) and (\ref{eq:control decompose2}), an equivalent form of Eq. (\ref{eq:generic control}) can be obtained as:



\begin{equation}
a_{n+1}(t)=k_{1}(\tilde{p}_n(t)-\tilde{p}_{n+1}(t))+k_{2}\dot{\tilde{p}}_n(t)+k_{3}\dot{\tilde{p}}_{n+1}(t) \label{eq:new linear generic control}
\end{equation}

\noindent with $k_1=k_d$, $k_2=k_v$, and $k_3=-k_v-k_d\tau$. 

While Eq. (\ref{eq:new linear generic control}) illustrates that the control inputs (i.e., acceleration/deceleration) are solely dependent on the states of the oscillatory trajectory, it is also evident that these control inputs exclusively influence the oscillatory trajectory, given that the nominal trajectory progresses at a constant velocity, $v_e$. Consequently, the entire controlled AV system model with saturation can be transposed into the coordinate system of the oscillatory trajectory as follows:

\begin{equation}
\dot{\tilde{p}}_{n+1}(t)=\begin{cases} 
\tilde{v}_{min} & \text{if } \tilde{v}_{n+1}(t) < \tilde{v}_{min} \\
\tilde{v}_{n+1}(t) & \text{if } \tilde{v}_{min} \leq \tilde{v}_{n+1}(t) \leq \tilde{v}_{max} \\
\tilde{v}_{max} & \text{if } \tilde{v}_{n+1}(t) > \tilde{v}_{max} 
\end{cases}   \label{eq:AV model1}
\end{equation}

\begin{equation}
\dot{\tilde{v}}_{n+1}(t)=\begin{cases} 
{a}_{min} & \text{if } {a}_{n+1}(t) < {a}_{min} \\
{a}_{n+1}(t) & \text{if } {a}_{min} \leq {a}_{n+1}(t) \leq {a}_{max} \\
{a}_{max} & \text{if } {a}_{n+1}(t) > {a}_{max} 
\end{cases}   \label{eq:AV model2}
\end{equation}

\begin{equation}
a_{n+1}(t)=k_{1}(\tilde{p}_n(t)-\tilde{p}_{n+1}(t))+k_{2}\dot{\tilde{p}}_n(t)+k_{3}\dot{\tilde{p}}_{n+1}(t) \label{eq:AV model3}
\end{equation}

\noindent where $\tilde{v}_{min}$ and $\tilde{v}_{max}$ are the lower and upper limits of the oscillatory speed of AV $n+1$, respectively. These can be readily derived from the speed limits given the equilibrium speed. ${a}_{min}$ and ${a}_{max}$ represent the lower and upper limits of the acceleration of AV $n+1$, respectively. The controlled AV system Eqs. (\ref{eq:AV model1})-(\ref{eq:AV model3}) is graphically depicted in Fig. \ref{fig:block diagram}, where saturation 1 imposes acceleration limits and saturation 2 imposes oscillatory speed limits.  

\begin{rem}
\label{rem1}
    Please note that, within this paper, we denote unsaturated acceleration and oscillatory velocity with $a$ and $\tilde{v}$, respectively, while $\dot{\tilde{v}}$ and $\dot{\tilde{p}}$ represent the saturated acceleration and oscillatory velocity, respectively. Furthermore, given that $k_d>0$, $k_v>0$, and $\tau>0$ in controller design, it follows that $k_1>0$, $k_2>0$, and $k_3<0$.
\end{rem}

\begin{figure}[h]
    \centering
    \setlength{\abovecaptionskip}{0pt}
    \includegraphics[width=0.8\textwidth]{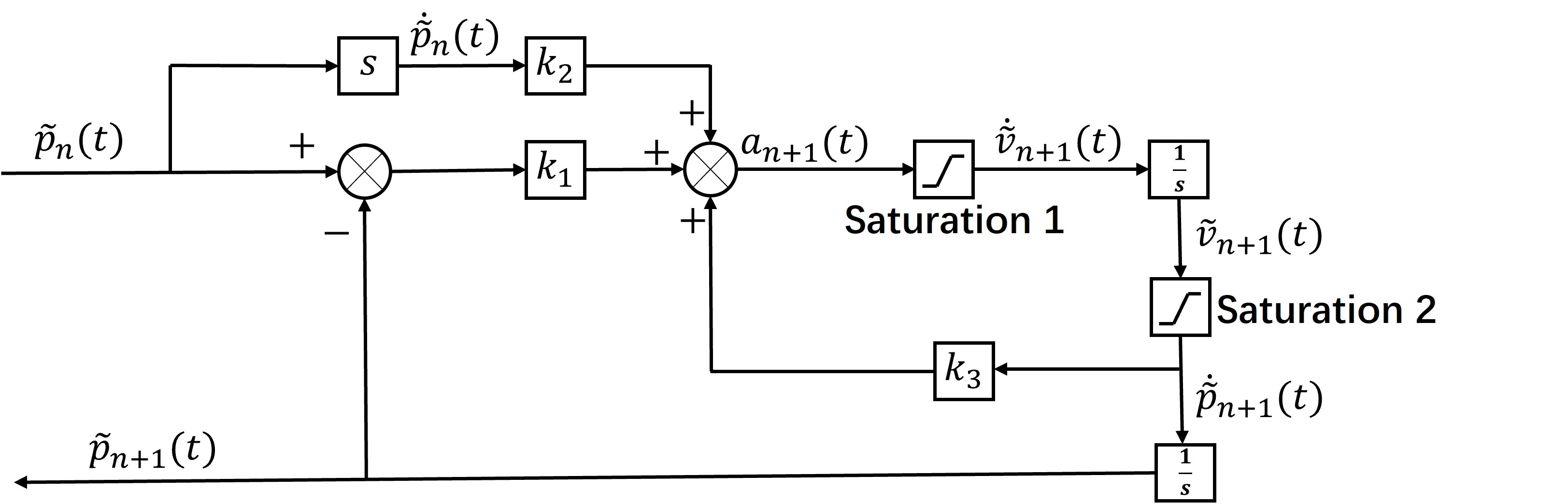}
    \caption{Controlled AV system block diagram}
    \label{fig:block diagram}
\end{figure}

{\begin{rem}
    While Eq. (\ref{eq:AV model3}) has also been applied to describe HDVs in the literature, they typically rely on strong assumptions about deterministic behavior and further linearization simplifications. To maintain a clear and rigorous scope, this study focuses on AVs.
\end{rem}}

\begin{rem}
In practice, time delays could impact the response of AV systems. As this paper does not address connectivity issues, focusing on actuation delays is sufficient. We can introduce an actuation delay into our framework via the block diagram shown in Fig. \ref{fig:block diagram delay}, where the added delay block signifies an actuation delay of $\tau$ seconds. A time delay is known to be linear time-invariant, with a frequency response magnitude of 1 and a phase shift of $-\omega\tau$ (\citep{franklin2002feedback}). Actuation delay can be effectively compensated in lower-level control design (\citep{xing2018smith,xu2020preview}), making it negligible for response analysis. Therefore, to simplify the presentation of our framework, we do not incorporate the delay component in our analysis that follows. It is worth noting that, the steps of our analysis procedure remain consistent even when incorporating the delay, as it merely represents an additional linear component without introducing any further nonlinearity.
\end{rem}

\begin{figure}[h]
    \centering
    \setlength{\abovecaptionskip}{0pt}
    \includegraphics[width=0.8\textwidth]{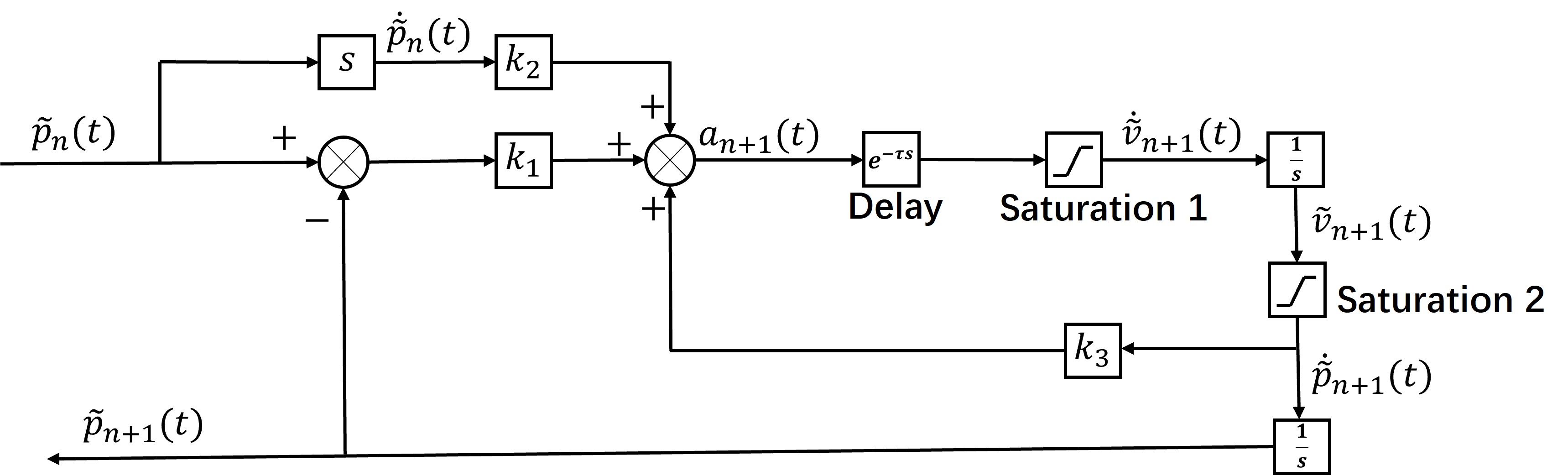}
    \caption{{Controlled AV system block diagram with actuation delay}}
    \label{fig:block diagram delay}
\end{figure}

{For clarity, before diving into the detailed derivations, we provide a brief overview of the main implementation steps, along with the equation numbers of key results, as shown in Fig. \ref{fig:steps}. In this figure, Cases 1 to 3 correspond to systems with only control saturation, systems with only state saturation, and systems with both control and state saturation, respectively. Step 1 involves calculating the DF of nonlinear elements (i.e., acceleration and/or velocity saturation elements) to represent them in the frequency domain. Step 2 requires substituting the DFs into the closed-loop frequency response analysis to solve for potential solutions of the AV response. In Step 3, the IDF of perturbations concerning the potential solutions identified in Step 2 is calculated. Step 4 involves substituting the IDFs into a closed-loop analysis to determine whether the perturbations amplify or attenuate, which in turn indicates the stability of the potential solutions identified in Step 2.}
 
\begin{figure}[h]
    \centering
    \setlength{\abovecaptionskip}{0pt}
    \includegraphics[width=0.99\textwidth]{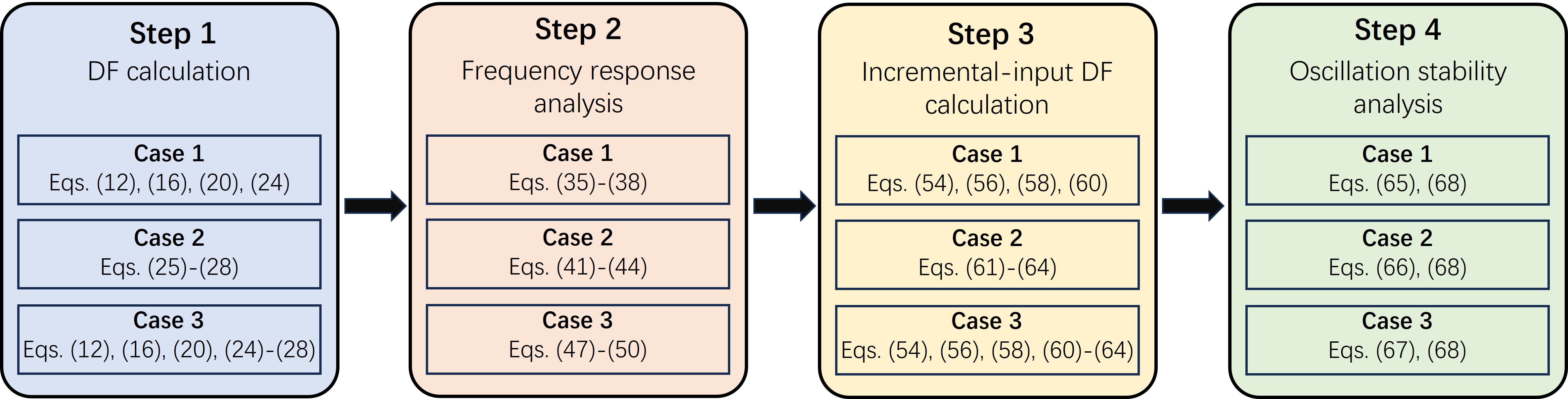}
    \caption{{Implementation steps and key equations overview}}
    \label{fig:steps}
\end{figure}

\section{DF analysis\label{sec3}}

 With the CF problem of AVs modeled, we now introduce the DF method, which serves as the foundation of our analysis. The DF method represents a nonlinear element subjected to a sinusoidal input by a quasi-linear optimal gain corresponding to each magnitude of the input to the nonlinearity (\citep{sastry2013nonlinear}). Given that microscopic behaviors during well-developed traffic oscillations are periodic and exhibit a predominant frequency (\citep{li2010measurement}), the oscillatory microscopic behavior can be approximated {by employing a sinusoidal trajectory characterized by this predominant frequency} (\citep{li2011characterization}). This makes the DF method particularly apt for analyzing traffic oscillations. In this section, Subsection \ref{sec3.1} provides an introduction to the DF method, while Subsection \ref{sec3.2} delineates the DF of control saturation and state saturation (corresponding to Saturation 1 and Saturation 2 in Fig. \ref{fig:block diagram}, respectively). 

 \subsection{The DF method \label{sec3.1}}

The fundamental idea of the DF method is to approximate the output of a nonlinear element by its first harmonics (i,e.,  the components in the Fourier series of the output that resonate at the input frequency). This approximation is grounded in two assumptions:  (1) The linear parts of the system exhibit a low-pass characteristic, filtering out terms with frequencies higher than the input. The integration elements within the AV system fulfill this requirement; (2) The output of the nonlinear element does not contain sub-harmonics (terms with a frequency lower than the input frequency), which is a practical assumption for "weak" nonlinearities like saturation (\citep{atherton2011introduction,vander1968multiple}).
 
{\begin{rem}Given these assumptions, the input-output properties of AV systems do not alter the frequency of traffic oscillations, that is, an AV will response at the same frequency as the oscillation it is subjected to.\end{rem}}

Consider the Fourier series of the output $y(t)$ of a nonlinear element subjected to a sinusoidal input $Bsin(\omega t)$:
\begin{equation}
y(t)=\sum_{k=0}^\infty Y_{k,1}\sin(k\omega t)+Y_{k,2}\cos(k\omega t) \label{eq:Fourier series}
\end{equation}
where the coefficients of the first harmonics, $Y_{1,1}$ and $Y_{1,2}$, are calculated by:
\begin{equation}
Y_{1,1}=\frac{1}{\pi}\int_0^{2\pi}y(t)\sin(\omega t)d(\omega t) \label{eq:Y11}
\end{equation}

\begin{equation}
Y_{1,2}=\frac{1}{\pi}\int_0^{2\pi}y(t)\cos(\omega t)d(\omega t) \label{eq:Y12}
\end{equation}

Based on the first harmonics, the DF serves as a complex optimal gain to characterize the impact of the nonlinear element on a sinusoidal input traversing through it, and is obtained as (\citep{atherton2011introduction,li2011characterization,zhou2023data}):
\begin{equation}
N(B)=\frac{Y_{1,1}+jY_{1,2}}{B} \label{eq:DF general}
\end{equation}

The amplification and phase shift imposed on the sinusoidal input can be computed using $|N(B)|=\frac{\sqrt{Y_{1,1}^2+Y_{1,2}^2}}{B}$ and $\angle N(B)=\tan^{-1}(\frac{Y_{1,2}}{Y_{1,1}})$, respectively (\citep{franklin2002feedback}). 

{\begin{rem}
    The DF representation has been proved to be an optimal representation in a form analogous to frequency response (\citep{vander1968multiple}). This approach has been validated in various fields (\citep{maraini2018nonlinear, shtessel2014sliding,davidson2013application}), including traffic oscillation analysis of HDVs (\citep{li2011characterization,rhoades2014heterogeneous}).
\end{rem}}

 \subsection{DFs of control and state saturation \label{sec3.2}}

In this subsection, we conduct an analysis of the DF for both the control saturation and state saturation elements within a controlled AV system.

\subsubsection{Control saturation \label{sec3.2.1}}
For the control saturation element (Saturation 1 in Fig. \ref{fig:block diagram}), let the input of it be denoted as $a_{n+1}(t)=B_{a}sin(\omega t)$, considering the acceleration limits as $[a_{min}, a_{max}]$. The DF of the control saturation, $N_{a,sat}(B_{a})$, can be characterized into four cases corresponding to different limit activeness, depending on the value of $B_{a}$, given the known values of $a_{min}$ and $a_{max}$:

\textbf{Case (1)}: limit inactive (i.e., $B_a\leq-a_{min}$ and $B_a\leq a_{max}$), as illustrated in Fig. \ref{fig:case_1} (a). 

\begin{figure}[h]
    \centering
    \setlength{\abovecaptionskip}{0pt}
    \subcaptionbox{}
    {\includegraphics[width=0.29\textwidth]{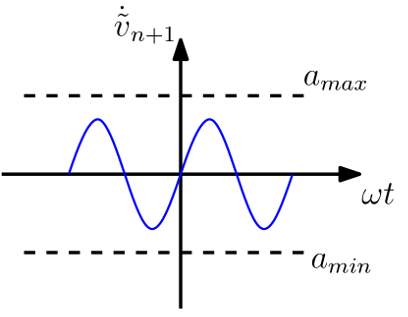}}
    \subcaptionbox{}
    {\includegraphics[width=0.3\textwidth]{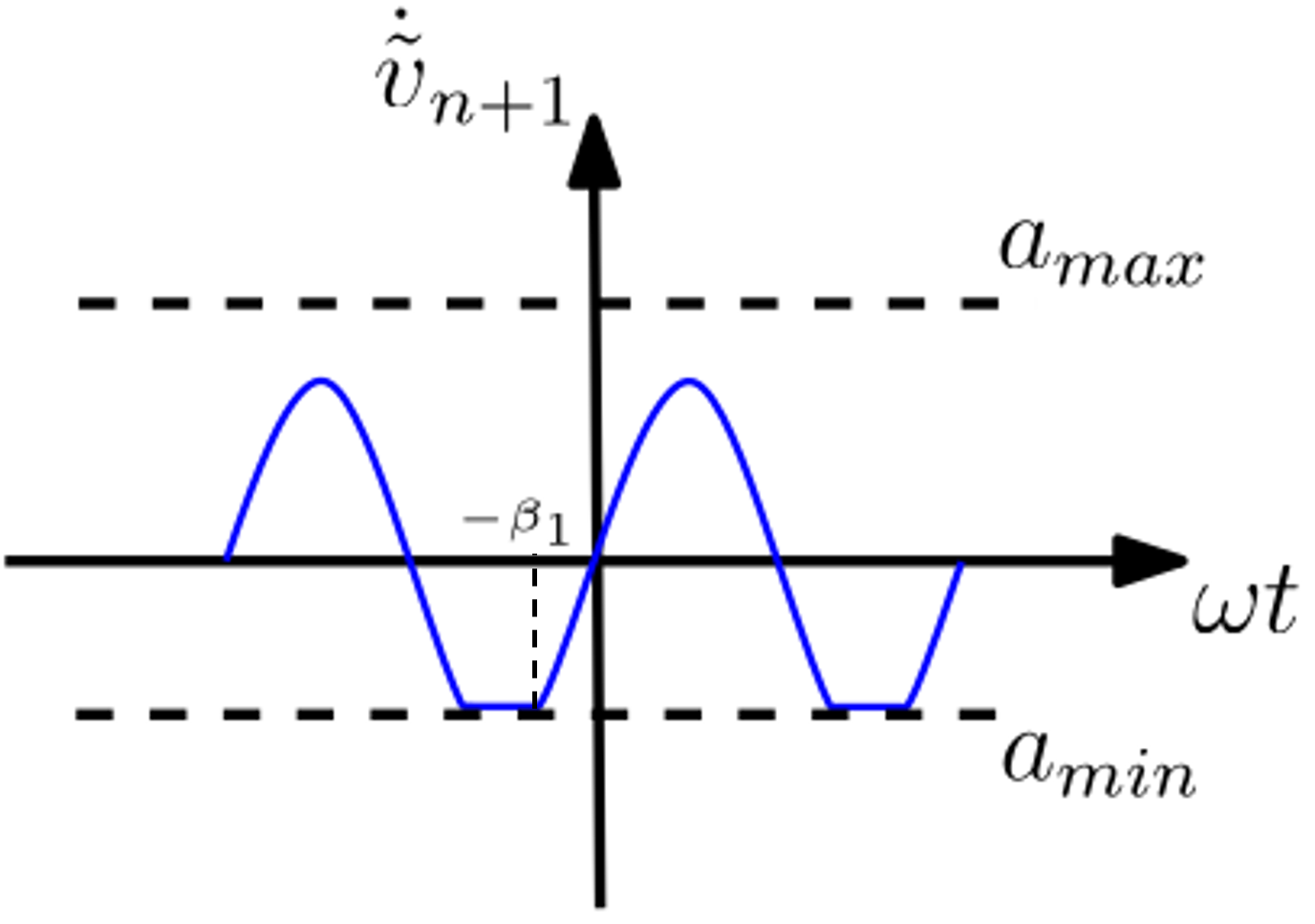}}\\
    \subcaptionbox{}
    {\includegraphics[width=0.29\textwidth]{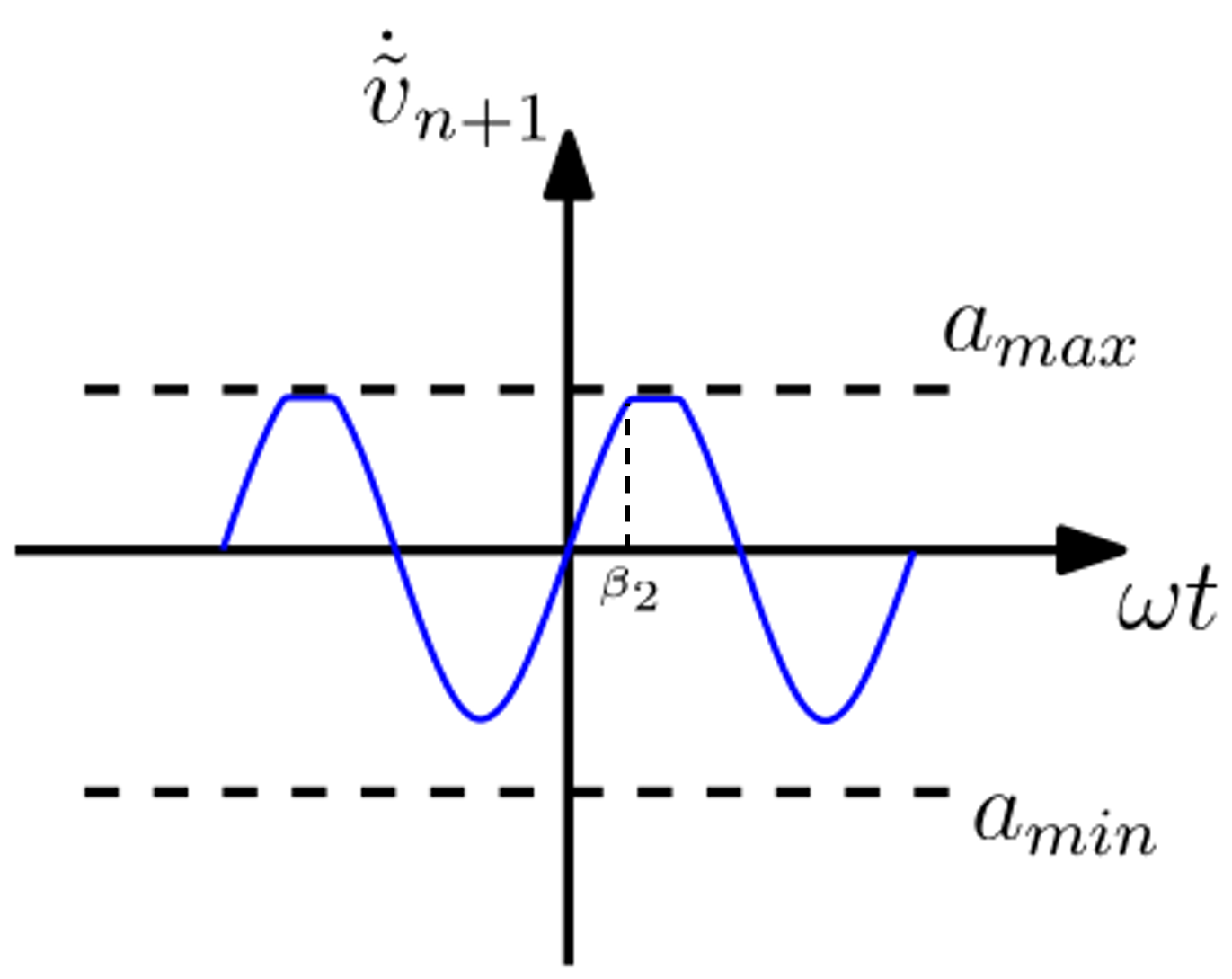}}
    \subcaptionbox{}
    {\includegraphics[width=0.3\textwidth]{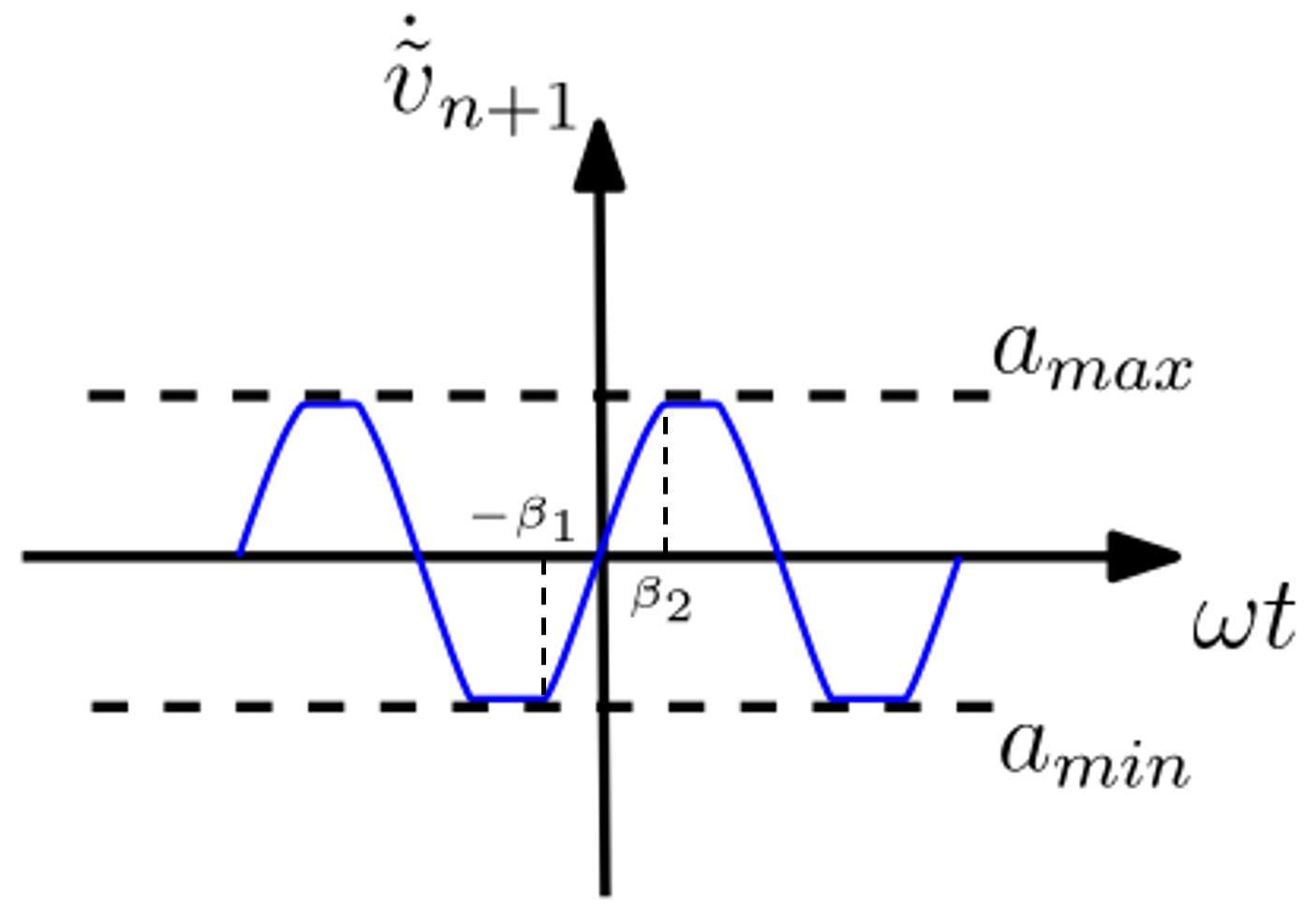}}
    \caption{{Cases of control saturation: (a) Case (1); (b) Case (2); (c) Case (3); (d) Case (4)}}
    \label{fig:case_1}
\end{figure}

The control saturation element doesn't change the input signal, hence

\begin{equation}
\begin{aligned}
N_{a,sat}(B_{a}) &= 1 
\end{aligned}
\label{eq:a DF 1}
\end{equation}

\textbf{Case (2)}: limit activeness by $a_{min}$ (i.e., $-a_{min}\leq B_a\leq a_{max}$), as illustrated in Fig. \ref{fig:case_1}(b), where $\beta_1=sin^{-1}(\frac{-a_{min}}{B_{a}})$.

Within one period, $\dot{\tilde{v}}_{n+1}(t)$ becomes
\begin{equation}
\dot{\tilde{v}}_{n+1}(t) = 
\begin{cases} 
B_a \sin(\omega t) & \text{if } -\beta_1 < \omega t \leq \pi + \beta_1 \\
a_{\min} & \text{if } \pi + \beta_1 < \omega t \leq 2\pi - \beta_1
\end{cases}
\end{equation}

\noindent by Eqs. (\ref{eq:Y11})-(\ref{eq:DF general}), we have
\begin{equation}
Y_{1,1} = \frac{B_a}{2} - \frac{1}{\pi} a_{\min} \sqrt{1 - \left(\frac{a_{\min}^2}{B_a^2}\right)} - \frac{B_a}{\pi} \sin^{-1}\left(\frac{a_{\min}}{B_a}\right)
\end{equation}

\begin{equation}
Y_{1,2} = 0
\end{equation}

\begin{equation}
N_{a,sat}(B_a) = \frac{1}{2} - \frac{1}{\pi B_a} a_{\min} \sqrt{1 - \left(\frac{a_{\min}^2}{B_a^2}\right)} - \frac{1}{\pi} \sin^{-1}\left(\frac{a_{\min}}{B_a}\right) \label{control DF2}
\end{equation}

\textbf{Case (3)}: limit activeness by $a_{max}$ (i.e., $a_{max}\leq B_a\leq -a_{min}$), as illustrated in Fig. \ref{fig:case_1}(c), where $\beta_2=sin^{-1}(\frac{a_{max}}{B_{a}})$.

Within one period, $\dot{\tilde{v}}_{n+1}(t)$ becomes
\begin{equation}
\dot{\tilde{v}}_{n+1}(t) = 
\begin{cases} 
B_a \sin(\omega t) & \text{if } 0 < \omega t \leq \beta_2 \\
a_{\max} & \text{if } \beta_2 < \omega t \leq \pi - \beta_2 \\
B_a \sin(\omega t) & \text{if } \pi - \beta_2 < \omega t \leq 2\pi \\
\end{cases}
\end{equation}

\noindent by Eqs. (\ref{eq:Y11})-(\ref{eq:DF general}), we have
\begin{equation}
Y_{1,1} = \frac{B_a}{2} + \frac{1}{\pi} a_{\max} \sqrt{1 - \left(\frac{a_{\max}^2}{B_a^2}\right)} + \frac{B_a}{\pi} \sin^{-1}\left(\frac{a_{\max}}{B_a}\right)
\end{equation}

\begin{equation}
Y_{1,2} = 0
\end{equation}

\begin{equation}
N_{a,sat}(B_a) = \frac{1}{2} + \frac{1}{\pi B_a} a_{\max} \sqrt{1 - \left(\frac{a_{\max}^2}{B_a^2}\right)} + \frac{1}{\pi} \sin^{-1}\left(\frac{a_{\max}}{B_a}\right) \label{control DF3}
\end{equation}

\textbf{Case (4)}: limit activeness by both $a_{max}$ and $a_{min}$ (i.e., $B_a\geq a_{max}$ and $B_a\geq -a_{min}$), as illustrated in Fig. \ref{fig:case_1}(d).

Within one period, $\dot{\tilde{v}}_{n+1}(t)$ becomes
\begin{equation}
\dot{\tilde{v}}_{n+1}(t) = 
\begin{cases} 
B_a \sin(\omega t) & \text{if } 0 < \omega t \leq \beta_2 \\
a_{\max} & \text{if } \beta_2 < \omega t \leq \pi - \beta_2 \\
B_a \sin(\omega t) & \text{if } \pi - \beta_2 < \omega t \leq \pi+\beta_1 \\
a_{\min} & \text{if } \pi + \beta_1 < \omega t \leq 2\pi - \beta_1\\
B_a \sin(\omega t) & \text{if } 2\pi - \beta_1 < \omega t \leq 2\pi
\end{cases}
\end{equation}

\noindent by Eqs. (\ref{eq:Y11})-(\ref{eq:DF general}), we have
\begin{equation}
Y_{1,1} = \frac{1}{\pi} \left( a_{\max} \sqrt{1 - \left(\frac{a_{\max}^2}{B_a^2}\right)} - a_{\min} \sqrt{1 - \left(\frac{a_{\min}^2}{B_a^2}\right)} + B_a \sin^{-1}\left(\frac{a_{\max}}{B_a}\right) - B_a \sin^{-1}\left(\frac{a_{\min}}{B_a}\right) \right)
\end{equation}

\begin{equation}
Y_{1,2} = 0
\end{equation}

\begin{equation}
N_{a,sat}(B_a) = \frac{1}{\pi} \left( \frac{a_{\max}}{B_a} \sqrt{1 - \left(\frac{a_{\max}^2}{B_a^2}\right)} - \frac{a_{\min}}{B_a} \sqrt{1 - \left(\frac{a_{\min}^2}{B_a^2}\right)} +  \sin^{-1}\left(\frac{a_{\max}}{B_a}\right) - \sin^{-1}\left(\frac{a_{\min}}{B_a}\right) \right) \label{control DF4}
\end{equation}

{\begin{rem}
    The DF analysis performs best under symmetric limits but remains applicable to cases with moderate asymmetry, as shown in Cases (2) and (3) (illustrated in Fig. \ref{fig:case_1}(b)-(c)). For highly asymmetric limits in acceleration or velocity, the dual-input describing function (DIDF) method can be used to introduce a bias term, effectively reconstructing symmetry (Vander Velde et al., Chapter 6). The process of applying DIDF follows similarly for both acceleration and velocity limits, and DIDF has been successfully applied in various fields to handle systems with asymmetric nonlinear effects (\citep{zacharias1981influence,chen2009modeling,shah2019large}). However, since this paper introduces a novel framework for the first time, we prioritize clarity and foundational development over added complexity. Therefore, we focus on pure or nearly symmetric limits, leaving the explicit treatment of highly asymmetric cases for future work.
\end{rem}
\begin{rem}
    The cases in this subsection only describe how a signal passes through the saturation element without considering the closed-loop effect. The closed-loop response is inherently much more complex than the open-loop case due to feedback interactions (illustrated in Fig. 2). A full characterization of the closed-loop behavior requires analyzing how signals propagate throughout the feedback loop. This is systematically addressed in Section 4, where we examine the overall system response.
\end{rem}}

\subsubsection{State saturation \label{sec3.2.2}}
Similar, the DF of the state saturation (Saturation 2 in Fig. \ref{fig:block diagram}) with input $\tilde{v}_{n+1}(t)=B_v \sin(\omega t)$ and its output limited within $[\tilde{v}_{min},\tilde{v}_{max}]$ can be derived in the same manner:

\textbf{Case (1)}: limit inactive (i.e., $B_v\leq-\tilde{v}_{min}$ and $B_v\leq \tilde{v}_{max}$).

\begin{equation}
\begin{aligned}
N_{v,sat}(B_{v}) &= 1 \label{state DF1}
\end{aligned}
\end{equation}

\textbf{Case (2)}: limit activeness by $\tilde{v}_{min}$ (i.e., $-\tilde{v}_{min}\leq B_v\leq \tilde{v}_{max}$).

\begin{equation}
N_{v,sat}(B_v) = \frac{1}{2} - \frac{1}{\pi B_v} \tilde{v}_{\min} \sqrt{1 - \left(\frac{\tilde{v}_{\min}^2}{B_v^2}\right)} - \frac{1}{\pi} \sin^{-1}\left(\frac{\tilde{v}_{\min}}{B_v}\right) \label{state DF2}
\end{equation}

\textbf{Case (3)}: limit activeness by $\tilde{v}_{max}$ (i.e., $\tilde{v}_{max}\leq B_v\leq -\tilde{v}_{min}$).

\begin{equation}
N_{v,sat}(B_v) = \frac{1}{2} + \frac{1}{\pi B_v} \tilde{v}_{\max} \sqrt{1 - \left(\frac{\tilde{v}_{\max}^2}{B_v^2}\right)} + \frac{1}{\pi} \sin^{-1}\left(\frac{\tilde{v}_{\max}}{B_v}\right) \label{state DF3}
\end{equation}

\textbf{Case (4)}: limit activeness by both $\tilde{v}_{max}$ and $\tilde{v}_{min}$ (i.e., $B_v\geq \tilde{v}_{max}$ and $B_v\geq -\tilde{v}_{min}$).

\begin{equation}
N_{v,sat}(B_v) = \frac{1}{\pi} \left( \frac{\tilde{v}_{\max}}{B_v} \sqrt{1 - \left(\frac{\tilde{v}_{\max}^2}{B_v^2}\right)} - \frac{\tilde{v}_{\min}}{B_v} \sqrt{1 - \left(\frac{\tilde{v}_{\min}^2}{B_v^2}\right)} +  \sin^{-1}\left(\frac{\tilde{v}_{\max}}{B_v}\right) - \sin^{-1}\left(\frac{\tilde{v}_{\min}}{B_v}\right) \right) \label{state DF4}
\end{equation}

The deductions in \ref{sec3.2.1} and \ref{sec3.2.2} reveal two significant characteristics of a saturation nonlinear element: (1) Given the saturation limits, the DF of saturation is independent of the frequency of its input and solely depends on the amplitude of the input; (2) The phase shift of a single sinusoidal input passing through a saturation, computed by $\tan^{-1}(\frac{Y_{1,2}}{Y_{1,1}})$, is invariably zero.

For illustration purposes, we plot the DFs of symmetric control saturation elements (i.e., $a_{min}=-a_{max}$) with different limits, in relation to the input amplitude, as depicted in Fig. \ref{fig:N_sat,a}. It is observed that the DFs initiate at a value of 1 and subsequently diminish as the input amplitude escalates beyond the established limits.

\begin{figure}[h]
    \centering
    \setlength{\abovecaptionskip}{0pt}
    \includegraphics[width=0.6\textwidth]{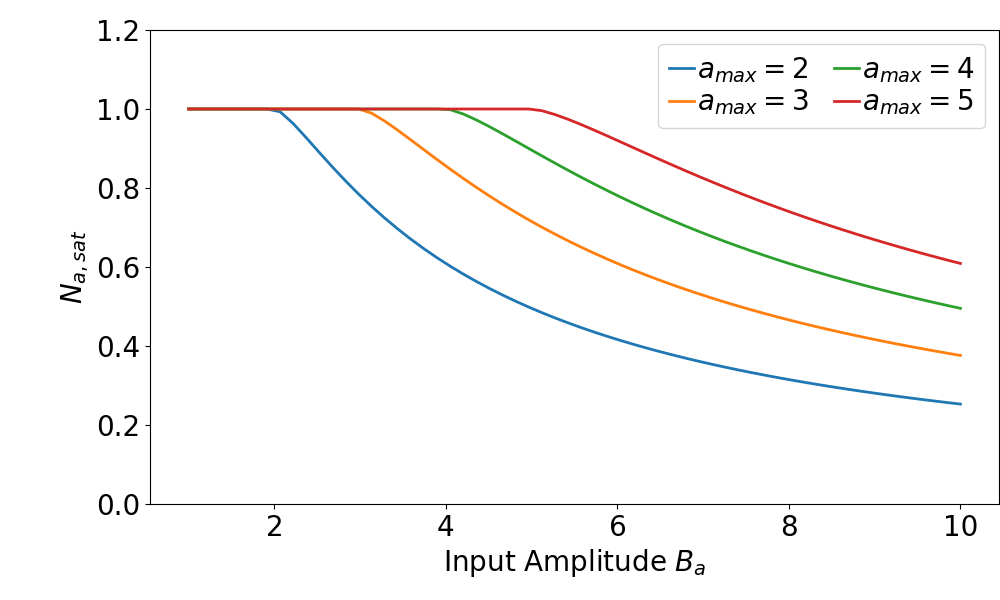}
    \caption{DF of symmetric control saturation vs. input amplitude for different values of $a_{max}$}
    \label{fig:N_sat,a}
\end{figure}

\section{Frequency response analysis based on the DF method \label{sec4}}

Having derived the DFs, each nonlinear element can be substituted by its corresponding DF, allowing us to approximate the output with its first harmonic components, as illustrated in  Fig. \ref{fig:DF block diagram}. Leveraging this DF method, the first harmonics balance is applied to analyze the frequency response (the steady response subjected to a sinusoidal input) of the AV, by equating the first harmonic terms through the block diagram. By the equating, oscillation candidates are found. Specifically, we first analyze the limit cycles to ensure there is no intrinsic oscillation in the controlled AV system in Subsection \ref{sec4.1}. Subsequently, Subsection \ref{sec4.2} conducts the analysis on the frequency response of the closed-loop AV system with various saturation. 

\begin{figure}[h]
    \centering
    \setlength{\abovecaptionskip}{0pt}
    \includegraphics[width=0.8\textwidth]{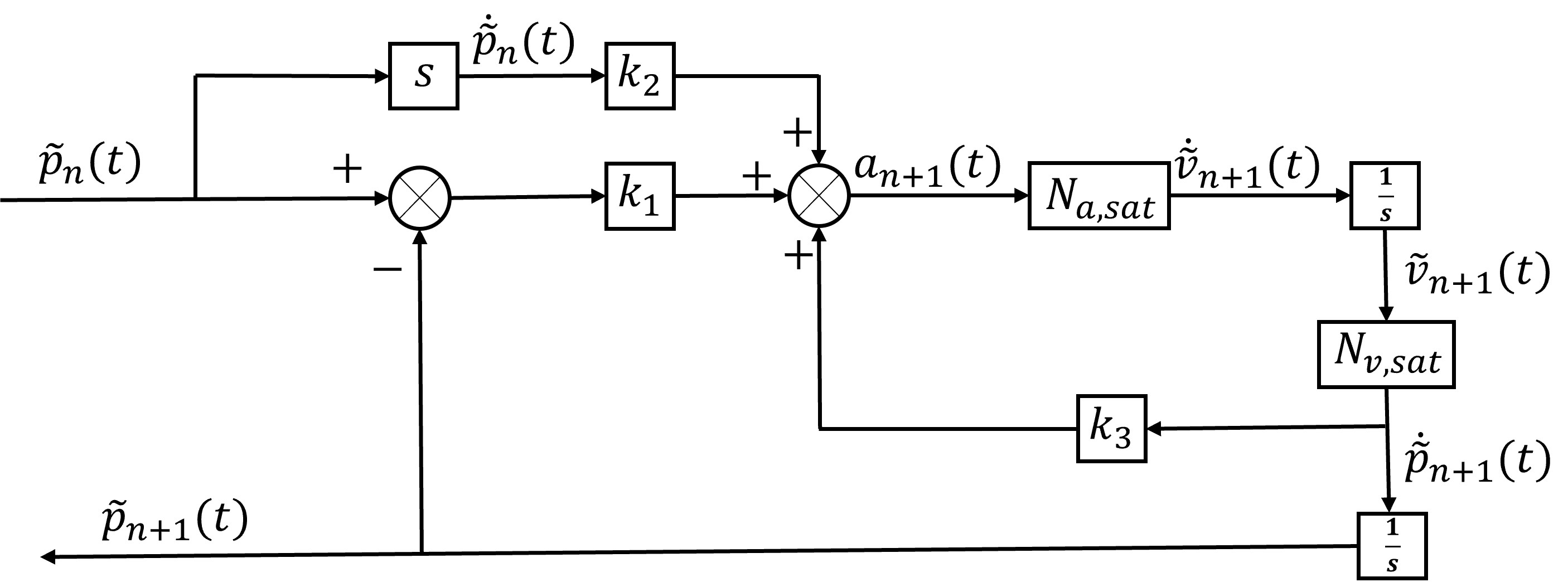}
    \caption{AV system block diagram with DF substitutions}
    \label{fig:DF block diagram}
\end{figure}

\subsection{Non-existence of limit cycles \label{sec4.1}}

Prior to analyzing the frequency response of the closed-loop system utilizing the DF method, it is imperative to demonstrate that limit cycles (system intrinsic oscillation occurring when $\tilde{p}_n(t)=0$) are non-existent within such a system, meaning that when the predecessor operates at the equilibrium states without oscillation, the following AV does not initiate a stable oscillation on its own. This proof ensures that the following vehicle oscillation is directly caused by the leading vehicle rather than the system intrinsic oscillation, as indicated by (\citep{vander1968multiple,atherton2011introduction}).

\begin{prop}
For a controlled AV system characterized by Eqs. (\ref{eq:AV model1})-(\ref{eq:AV model3}), limit cycles do not exist.
\end{prop}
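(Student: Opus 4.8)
The plan is to apply the standard describing-function test for limit cycles: replace the two saturation elements by their DFs and show that the resulting first-harmonic balance equation has no solution at any real angular frequency $\omega>0$, which (by the references cited in Section~\ref{sec4.1}) certifies the absence of a self-sustained oscillation. First I would specialize the model to the limit-cycle scenario, in which the predecessor is at equilibrium, so $\tilde p_n(t)\equiv 0$ and therefore $\dot{\tilde p}_n(t)\equiv 0$. The control law Eq.~(\ref{eq:AV model3}) then reduces to $a_{n+1}=-k_1\tilde p_{n+1}+k_3\dot{\tilde p}_{n+1}$, while the forward path from $a_{n+1}$ still consists, in order, of Saturation~1, an integrator yielding $\tilde v_{n+1}$, Saturation~2, and a second integrator yielding $\tilde p_{n+1}$ (cf. Fig.~\ref{fig:block diagram}); note in particular that the feedback acts on the \emph{saturated} velocity $\dot{\tilde p}_{n+1}$ and its integral, not on the unsaturated signal $\tilde v_{n+1}$.

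Next I would substitute Saturation~1 and Saturation~2 by their describing functions $N_1:=N_{a,sat}(B_a)$ and $N_2:=N_{v,sat}(B_v)$, where $B_a,B_v>0$ are the amplitudes of the first-harmonic signals entering them. The two structural facts needed, both read off directly from the closed-form expressions in Section~\ref{sec3.2} (and visible in Fig.~\ref{fig:N_sat,a}), are that each DF is real with zero phase shift and that each DF is strictly positive for every admissible amplitude — it equals $1$ below its threshold and decreases monotonically toward, but never attains, $0$. Hence $N_1N_2>0$ in every one of the four activeness cases. Carrying the first-harmonic loop relations around the diagram of Fig.~\ref{fig:DF block diagram} at $s=j\omega$ gives $\tilde p_{n+1}=(N_1N_2/s^{2})\,a_{n+1}$ and $\dot{\tilde p}_{n+1}=(N_1N_2/s)\,a_{n+1}$, so that a nontrivial self-oscillation requires a purely imaginary root of
\[
s^{2}-(k_3N_1N_2)\,s+k_1N_1N_2=0 .
\]

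Finally I would close with a sign argument using Remark~\ref{rem1}. This quadratic has all-positive coefficients — leading coefficient $1$, constant term $k_1N_1N_2>0$ since $k_1>0$ and $N_1N_2>0$, middle coefficient $-k_3N_1N_2>0$ since $k_3<0$ — hence it is Hurwitz and has no root on the imaginary axis; equivalently, matching the imaginary part of the harmonic-balance equation forces $k_3N_1N_2\omega=0$, which is impossible for $\omega>0$. Thus no oscillation candidate exists and the controlled AV system admits no limit cycle. The main obstacle is not this concluding step, which is elementary and robust (it uses only $k_1>0$, $k_3<0$, and $N_1N_2>0$, never the precise magnitudes of the DFs); rather, the care lies in correctly reducing the block diagram of Fig.~\ref{fig:block diagram} to the single loop equation above — pinning down the placement of the two integrators relative to the two saturations and confirming that $k_3$, not $k_2$, multiplies the fed-back $\dot{\tilde p}_{n+1}$ — and in justifying positivity of $N_{a,sat}$ and $N_{v,sat}$ uniformly across all four cases of Section~\ref{sec3.2}, so that the conclusion holds irrespective of the amplitudes the loop would settle at.
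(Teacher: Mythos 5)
Your proposal is correct and follows essentially the same route as the paper: replace both saturations by their (real, zero-phase) describing functions, carry the first-harmonic balance around the loop with $\tilde p_n\equiv 0$, and observe that the imaginary part of the balance forces $k_3 N_1 N_2\,\omega=0$, which is impossible for a nontrivial amplitude since $k_3<0$ and the saturation DFs are strictly positive. The paper phrases this as an unachievable phase balance between a $\sin(\omega t)$ term and a $\sin(\omega t-\tfrac{\pi}{2})$ term rather than as a Hurwitz quadratic, but the content is identical; your explicit justification that $N_{a,sat}N_{v,sat}>0$ in all four activeness cases merely makes precise a step the paper leaves implicit.
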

\begin{proof}
This can be proved through the principle of first harmonics balance. With $\tilde{p}_n(t)=0$, suppose a limit cycle exists such that there's a self-sustained oscillation $\tilde{p}_{n+1}(t)=A\sin (\omega t)$, we proceed to equate the first harmonic terms throughout the loop:

\begin{equation}
\begin{aligned}
 A\sin(\omega t)=&-k_1 |G_{int}|^2|N_{a,sat}||N_{v,sat}|A\sin (\omega t+2\angle G_{int}+\angle N_{a,sat}+\angle N_{sat,v})\\
 &+k_3 |G_{der}||G_{int}|^2|N_{a,sat}||N_{v,sat}|A\sin (\omega t+\angle G_{der}+2\angle G_{int}+\angle N_{a,sat}+\angle N_{sat,v})
\end{aligned}
\end{equation}

\noindent where $G_{int}=\frac{1}{\omega}$ and $\angle G_{int}=-\frac{\pi}{2}$ are the amplification and phase shift caused by an integral element, $G_{der}=\omega$ and $\angle G_{der}=\frac{\pi}{2}$ are the amplification and phase shift caused by a derivative element. It is shown in Section \ref{sec3} that $\angle N_{a,sat}=\angle N_{sat,v}=0$. Therefore, we further have:

\begin{equation}
\begin{aligned}
 A\sin(\omega t)=\frac{-k_1 |N_{a,sat}||N_{v,sat}|A\sin (\omega t-\pi)}{\omega^2}+\frac{k_3 |N_{a,sat}||N_{v,sat}|A\sin (\omega t-\frac{\pi}{2})}{\omega} \label{limit cycle balance}
\end{aligned}
\end{equation}

Note that $\sin(\omega t-\pi)=-\sin(\omega t)$, therefore an alternative of Eq. (\ref{limit cycle balance}) is:
\begin{equation}
\begin{aligned}
 (A-\frac{k_1 |N_{a,sat}||N_{v,sat}|A}{\omega^2})\sin (\omega t)=\frac{k_3 |N_{a,sat}||N_{v,sat}|A}{\omega}\sin (\omega t-\frac{\pi}{2}) \label{limit cycle balance2}
\end{aligned}
\end{equation}

From equation Eq. (\ref{limit cycle balance2}), we can find that unless $A=0$, a phase balance cannot be achieved between the two sides of the equation $\forall ~t$. Consequently, limit cycles are non-existent for the system described by Eqs. (\ref{eq:AV model1})-(\ref{eq:AV model3}).
\end{proof}

For closed-loop systems incorporating only one of the saturation elements, the non-existence of limit cycles can be proved similarly. By this proof, we find that the oscillation of the following AV is only caused by the leading vehicle's oscillation, rather than the intrinsic system oscillation. Based on the property, we further conduct the frequency response analysis as below.

\subsection{Frequency response analysis\label{sec4.2}}
Based on the DF method, the principle of first harmonics balance is also instrumental in deriving the conditions that a frequency response must meet. We investigate systems with solely control saturation, solely state saturation, and, ultimately, systems integrating both control and state saturation, to sequentially demonstrate the idea. The signs of the parameters discussed in Remark \ref{rem1} should be kept in mind throughout the analysis of this subsection.

\subsubsection{Systems with solely control saturation \label{sec4.2.1}}
Systems incorporating only control saturation as the singular nonlinear element can be conceptualized as omitting $N_{v,sat}$ in Fig. \ref{fig:DF block diagram} or, equivalently, by maintaining $N_{v,sat}=1$ consistently. Suppose the input to the AV (i.e., the oscillatory trajectory of its predecessor) is represented as $\tilde{p}_n(t)=R\sin(\omega t)$. Additionally, let the input to the control saturation be $a_{n+1}(t)=B_a\sin(\omega t+\phi_a)$, where $B_a$ denotes the amplitude of $a_{n+1}(t)$ and $\phi_a$ represents the phase shift of $a_{n+1}(t)$ relative to $\tilde{p}_n(t)$. Both $B_a$ and $\phi_a$ are unknown variables to be determined. We proceed with the analysis by equating the first harmonic terms of the input and output of the control saturation:

\begin{equation}
\begin{aligned}
B_a \sin(\omega t + \phi_a) =&  - |G_{int}|^2 k_1 |N_{a,sat}(B_a)| B_a \sin(\omega t + \phi_a + \angle N_{a,sat}(B_a) + 2 \angle G_{int}) + k_{2} |G_{der}| R \sin(\omega t + \angle G_{der}) \\
& + k_{3} |G_{der}| |G_{int}|^2 |N_{a,sat}(B_a)| B_a \sin(\omega t + \phi_a + \angle N_{a,sat}(B_a) + \angle G_{der} + 2 \angle G_{int})+k_1 R \sin(\omega t) \label{control balance 1}
\end{aligned}
\end{equation}

\noindent where as mentioned in Subsection \ref{sec4.1}, $|G_{int}|=\frac{1}{\omega}$ and $\angle G_{int}=-\frac{\pi}{2}$ are the amplification and phase shift caused by an integral element, $|G_{der}|=\omega$ and $\angle G_{der}=\frac{\pi}{2}$ are the amplification and phase shift caused by a derivative element. Additionally, it is shown in Section \ref{sec3} that $\angle N_{a,sat}(B_a)=0$. Substituting these values into Eq. (\ref{control balance 1}):
\begin{equation}
\begin{aligned}
B_a \sin(\omega t + \phi_a) =& k_1 R \sin(\omega t) -  \frac{k_1 |N_{a,sat}(B_a)| B_a}{\omega^2} \sin(\omega t + \phi_a -\pi) + \omega k_{2} R \sin(\omega t + \frac{\pi}{2}) 
\\& + \frac{k_{3}|N_{a,sat}(B_a)| B_a}{\omega} \sin(\omega t + \phi_a -\frac{\pi}{2}) \label{control balance 2}
\end{aligned}
\end{equation}

Further by $\sin(\omega t-\pi)=-\sin(\omega t)$ and $\sin(\omega t-\frac{\pi}{2})=-\cos(\omega t)$:
\begin{equation}
B_a (1 - \frac{k_1 |N_{a,sat}(B_a)|}{\omega^2}) \sin(\omega t + \phi_a) + \frac{k_{3} |N_{a,sat}(B_a)| B_a}{\omega} \cos(\omega t + \phi_a) = k_1 R \sin(\omega t) + \omega k_{2} R \cos(\omega t) \label{control subtitute sin-pi}
\end{equation}

Finally, by the trigonometric sum and difference identities, or more directly, by the auxiliary angle method (\citep{clarke1953}), Eq. (\ref{control subtitute sin-pi}) can be simplified and classified into two cases depending on the sign of $1 - \frac{k_1 |N_{a,sat}(B_a)|}{\omega^2}$:
\begin{equation}
\begin{aligned}
&\sqrt{B_a^2 \left(1 - \frac{k_1 |N_{a,sat}(B_a)|}{\omega^2}\right)^2 + \left(\frac{k_3 |N_{a,sat}(B_a)| B_a}{\omega}\right)^2} \sin\left(\omega t + \phi_a - \tan^{-1}\left(\frac{-\omega k_3 |N_{a,sat}(B_a)|}{\omega^2 - k_1 |N_{a,sat}(B_a)|}\right)\right) \\&= \sqrt{(k_1 R)^2 + (\omega k_2 R)^2} \sin\left(\omega t + \tan^{-1}\left(\frac{k_2 \omega}{k_1}\right)\right) \quad\quad \text{if } 1 - \frac{k_1 |N_{a,sat}(B_a)|}{\omega^2} \geq 0 \label{control triag1}
\end{aligned}
\end{equation}

\begin{equation}
\begin{aligned}
&\sqrt{B_a^2 \left(1 - \frac{k_1 |N_{a,sat}(B_a)|}{\omega^2}\right)^2 + \left(\frac{k_3 |N_{a,sat}(B_a)| B_a}{\omega}\right)^2} \sin\left(\omega t + \phi_a - \tan^{-1}\left(\frac{-\omega k_3 |N_{a,sat}(B_a)|}{\omega^2 - k_1 |N_{a,sat}(B_a)|}\right)+\pi\right) \\&= \sqrt{(k_1 R)^2 + (\omega k_2 R)^2} \sin\left(\omega t + \tan^{-1}\left(\frac{k_2 \omega}{k_1}\right)\right) \quad\quad \text{if } 1 - \frac{k_1 |N_{a,sat}(B_a)|}{\omega^2} < 0 \label{control triag2}
\end{aligned}
\end{equation}

By equating the amplitude and phase respectively of the two sides of Eq. (\ref{control triag1}) or Eq. (\ref{control triag2}) at each frequency point, we can solve for candidates of oscillation characterized by $B_a$ and $\phi_a$ (it is possible to have multiple candidates at each frequency). At a specific frequency $\omega$, let $B_{a,s}$ and $\phi_{a,s}$ be a pair of solutions obtained from Eq. (\ref{control triag1}) or Eq. (\ref{control triag2}), the corresponding potential frequency responses of the closed-loop system are calculated as follows: 
\begin{equation}
\begin{aligned}
|F(\omega,B_{a,s},\phi_{a,s})|=\frac{A_{\tilde{p},{n+1}}}{R}=\frac{B_{a,s}|N_{a,sat}(B_{a,s})|}{\omega^2R} \label{control FR mag}
\end{aligned}
\end{equation}

\begin{equation}
\begin{aligned}
\angle F(\omega,B_{a,s},\phi_{a,s})=\angle\tilde{p}_{n+1}-0=\phi_{a,s}-\pi \label{control FR phase}
\end{aligned}
\end{equation}

where $F(\omega,B_{a,s},\phi_{a,s})$ is the potential frequency response at the frequency point $\omega$ corresponding to candidates $B_{a,s}$ and $\phi_{a,s}$. $|F(\omega,B_{a,s},\phi_{a,s})|$ and $\angle F(\omega,B_{a,s},\phi_{a,s})$ are the magnitude and phase of the potential frequency response, respectively (\citep{franklin2002feedback}). $A_{\tilde{p},{n+1}}$ and $\angle\tilde{p}_{n+1}$ represent the amplitude and phase of $\tilde{p}_{n+1}$, respectively. The magnitude of frequency response captures the damping ratio of the oscillation, characterized by the amplitude ratio of $\tilde{p}_{n+1}$ and $\tilde{p}_n$. The phase of frequency response captures the phase difference between $\tilde{p}_{n+1}$ and $\tilde{p}_n$, where the phase of $\tilde{p}_n$ is set as the reference thus equals zero. An equivalent complex representation of the frequency response is $F(\omega,B_{a,s},\phi_{a,s})=|F(\omega,B_{a,s},\phi_{a,s})|e^{j\angle F(\omega,B_{a,s},\phi_{a,s})}$.

\subsubsection{Systems with solely state saturation \label{sec4.2.2}}
Similarly, systems incorporating only state saturation as the singular nonlinear element can be conceptualized as omitting $N_{a,sat}$ in Fig. \ref{fig:DF block diagram} or, equivalently, by maintaining $N_{a,sat}=1$ consistently. Suppose the input to the AV (i.e., the oscillatory trajectory of its predecessor) is represented as $\tilde{p}_n(t)=R\sin(\omega t)$. And let the input to the state saturation be $\tilde{v}_{n+1}(t)=B_v\sin(\omega t+\phi_v)$, where $B_v$ denotes the amplitude of $\tilde{v}_{n+1}(t)$ and $\phi_v$ represents the phase shift of $\tilde{v}_{n+1}(t)$ relative to $\tilde{p}_n(t)$. Both $B_v$ and $\phi_v$ are unknown variables to be determined. The analysis is conducted by equating the first harmonic terms of the input and output of the state saturation:

\begin{equation}
\begin{aligned}
B_v \sin(\omega t + \phi_v)=&|G_{int}| k_1 R \sin(\omega t + \angle G_{int}) - |G_{int}|^2 k_1 |N_{v, sat}(B_v)| B_v \sin(\omega t + \phi_v + \angle N_{v, sat}(B_v) + 2 \angle G_{int})\\&  + k_3 |G_{der}| |G_{int}|^2 |N_{v, sat}(B_v)| B_v \sin(\omega t + \phi_v + \angle N_{v, sat}(B_v) + \angle G_{der} + 2 \angle G_{int}) \\
&+ k_2 |G_{der}| |G_{int}| R \sin(\omega t + \angle G_{der} + \angle G_{int}) \label{state balance 1}
\end{aligned}
\end{equation}

Substituting the values of $|G_{int}|=\frac{1}{\omega}$, $\angle G_{int}=-\frac{\pi}{2}$, $|G_{der}|=\omega$, $\angle G_{der}=\frac{\pi}{2}$ , $\angle N_{sat,v}=0$ into Eq. (\ref{state balance 1}) and rearranging with $\sin(\omega t-\pi)=-\sin(\omega t)$ and $\sin(\omega t-\frac{\pi}{2})=-\cos(\omega t)$:
\begin{equation}
\begin{aligned}
-\frac{k_1 R \cos(\omega t)}{\omega} + \frac{k_1 |N_{v, sat}(B_v)| B_v \sin(\omega t + \phi_v)}{\omega^2} + k_2 R \sin(\omega t) - \frac{k_3 |N_{v, sat}(B_v)| B_v \cos(\omega t + \phi_v)}{\omega} = B_v \sin(\omega t + \phi_v) \label{state subtitute sin-pi}
\end{aligned}
\end{equation}

Finally, by the trigonometric sum and difference identities or the auxiliary angle method, Eq. (\ref{state subtitute sin-pi}) can be simplified and classified into two cases depending on the sign of $1 - \frac{k_1 |N_{v, sat}(B_v)|}{\omega^2}$:
\begin{equation}
\begin{aligned}
&\sqrt{B_v^2 \left(1 - \frac{k_1 |N_{v, sat}(B_v)|}{\omega^2}\right)^2 + \left(\frac{k_3 |N_{v, sat}(B_v)| B_v}{\omega}\right)^2} \sin\left(\omega t + \phi_v - \tan^{-1}\left(\frac{-\omega k_3 |N_{v, sat}(B_v)|}{\omega^2 - k_1 |N_{v, sat}(B_v)|}\right)\right) \\
&= \sqrt{\left(\frac{-k_1 R}{\omega}\right)^2 + (k_2 R)^2} \sin\left(\omega t - \tan^{-1}\left(\frac{k_1}{k_2 \omega}\right)\right) \quad\quad \text{if } 1 - \frac{k_1 |N_{v, sat}(B_v)|}{\omega^2} \geq 0 \label{state triag1}
\end{aligned}
\end{equation}

\begin{equation}
\begin{aligned}
&\sqrt{B_v^2 \left(1 - \frac{k_1 |N_{v, sat}(B_v)|}{\omega^2}\right)^2 + \left(\frac{k_3 |N_{v, sat}(B_v)| B_v}{\omega}\right)^2} \sin\left(\omega t + \phi_v - \tan^{-1}\left(\frac{-\omega k_3 |N_{v, sat}(B_v)|}{\omega^2 - k_1 |N_{v, sat}(B_v)|}\right)+\pi\right) \\
&= \sqrt{\left(\frac{-k_1 R}{\omega}\right)^2 + (k_2 R)^2} \sin\left(\omega t - \tan^{-1}\left(\frac{k_1}{k_2 \omega}\right)\right) \quad\quad \text{if } 1 - \frac{k_1 |N_{v, sat}(B_v)|}{\omega^2} < 0 \label{state triag2}
\end{aligned}
\end{equation}

By equating the amplitude and phase respectively of the two sides of Eq. (\ref{state triag1}) or Eq. (\ref{state triag2}) at each frequency point, we can solve for candidates of oscillation (again, possibly multiple ones) characterized by $B_v$ and $\phi_v$. Similar to Eqs. (\ref{control FR mag})-(\ref{control FR phase}), the potential frequency response corresponding to a candidate pair $B_{v,s}$ and $\phi_{v,s}$, at frequency $\omega$ is obtained as:

\begin{equation}
\begin{aligned}
|F(\omega,B_{v,s},\phi_{v,s})|=\frac{A_{\tilde{p},{n+1}}}{R}=\frac{B_{v,s}|N_{v,sat}(B_{v,s})|}{\omega R} \label{state FR mag}
\end{aligned}
\end{equation}

\begin{equation}
\begin{aligned}
\angle F(\omega,B_{v,s},\phi_{v,s})=\angle\tilde{p}_{n+1}-0=\phi_{v,s}-\frac{\pi}{2} \label{state FR phase}
\end{aligned}
\end{equation}

\subsubsection{Systems with both control and state saturation \label{sec4.2.3}}
Systems incorporating both control and state saturation are exactly depicted in Fig. \ref{fig:DF block diagram}. Suppose the input to the AV (i.e., the oscillatory trajectory of its predecessor) is represented as $\tilde{p}_n(t)=R\sin(\omega t)$. And let the input to the control saturation be $a_{n+1}(t)=B_a\sin(\omega t+\phi_a)$. Both $B_a$ and $\phi_a$ are unknown variables to be determined. Suppose the input to the state saturation is $\tilde{v}_{n+1}(t)=B_v\sin(\omega t+\phi_v)$, and it can be obtained that $B_v=\frac{B_a|N_{a,sat}(B_a)|}{\omega}$, $\phi_v=\phi_a-\frac{\pi}{2}$ The analysis is conducted by equating the first harmonic terms of the input and output of the control saturation:

\begin{equation}
\begin{aligned}
B_a &\sin(\omega t + \phi_a)\\
&=k_1 R \sin(\omega t)  + k_2 |G_{der}| R \sin(\omega t + \angle G_{der}) \\
&- k_1 |G_{int}|^2 |N_{v, sat}(B_v)| |N_{a, sat}(B_a)| B_a \sin(\omega t + \phi_a + 2 \angle G_{int}+\angle N_{v, sat}(B_v)+\angle N_{a, sat}(B_a))\\
&+ k_3 |G_{der}| |G_{int}|^2 |N_{v, sat}(B_v)| |N_{a, sat}(B_a)| B_a \sin(\omega t + \phi_a + \angle G_{der}+\angle N_{v, sat}(B_v)+\angle N_{a, sat}(B_a) + 2 \angle G_{int}) \label{control state balance 1}
\end{aligned}
\end{equation}

Substituting the values of $|G_{int}|=\frac{1}{\omega}$, $\angle G_{int}=-\frac{\pi}{2}$, $|G_{der}|=\omega$, $\angle G_{der}=\frac{\pi}{2}$ , $\angle N_{sat,v}=0$ into Eq. (\ref{control state balance 1}) and rearranging with $\sin(\omega t-\pi)=-\sin(\omega t)$ and $\sin(\omega t-\frac{\pi}{2})=-\cos(\omega t)$:
\begin{equation}
\begin{aligned}
B_a \sin(\omega t + \phi_a)=&k_1 R \sin(\omega t) + \frac{k_1 |N_{v, sat}(B_v)| |N_{a, sat}(B_a)| B_a \sin(\omega t + \phi_a)}{\omega^2} + \omega k_2 R \cos(\omega t) \\
&- \frac{k_3 |N_{v, sat}(B_v)| |N_{a, sat}(B_a)| B_a \cos(\omega t + \phi_a)}{\omega}  \label{both subtitute sin-pi}
\end{aligned}
\end{equation}

Finally, by the trigonometric sum and difference identities or the auxiliary angle method, Eq. (\ref{both subtitute sin-pi}) can be simplified and classified into two cases depending on the sign of $1 - \frac{k_1 |N_{v, sat}(B_v)||N_{a, sat}(B_a)|}{\omega^2}$:
\begin{equation}
\begin{aligned}
&\sqrt{\left(k_1 R\right)^2 + (\omega k_2 R)^2} \sin\left(\omega t + \tan^{-1}\left(\frac{k_2 \omega}{k_1}\right)\right)\\
=&\sqrt{B_a^2 \left(1 - \frac{k_1 |N_{v, sat}(B_v)||N_{a, sat}(B_a)|}{\omega^2}\right)^2 + \left(\frac{k_3 |N_{v, sat}(B_v)||N_{a, sat}(B_a)| B_a}{\omega}\right)^2} \\
&\cdot\sin\left(\omega t + \phi_a -\tan^{-1}\left(\frac{-\omega k_3 |N_{v, sat}(B_v)||N_{a, sat}(B_a)|}{\omega^2 - k_1 |N_{v, sat}(B_v)||N_{a, sat}(B_a)|}\right)\right)   \quad\quad \text{if } 1 - \frac{k_1 |N_{v, sat}(B_v)||N_{a, sat}(B_a)|}{\omega^2} \geq 0 \label{both triag1}
\end{aligned}
\end{equation}

\begin{equation}
\begin{aligned}
&\sqrt{\left(k_1 R\right)^2 + (\omega k_2 R)^2} \sin\left(\omega t + \tan^{-1}\left(\frac{k_2 \omega}{k_1}\right)\right)\\
=&\sqrt{B_a^2 \left(1 - \frac{k_1 |N_{v, sat}(B_v)||N_{a, sat}(B_a)|}{\omega^2}\right)^2 + \left(\frac{k_3 |N_{v, sat}(B_v)||N_{a, sat}(B_a)| B_a}{\omega}\right)^2} \\
&\cdot\sin\left(\omega t + \phi_a -\tan^{-1}\left(\frac{-\omega k_3 |N_{v, sat}(B_v)||N_{a, sat}(B_a)|}{\omega^2 - k_1 |N_{v, sat}(B_v)||N_{a, sat}(B_a)|}\right)+\pi\right)   \quad\quad \text{if } 1 - \frac{k_1 |N_{v, sat}(B_v)||N_{a, sat}(B_a)|}{\omega^2} < 0 \label{both triag2}
\end{aligned}
\end{equation}

By substituting $B_v=\frac{B_a|N_{a,sat}(B_a)|}{\omega}$ and equating the amplitude and phase respectively of the two sides of Eq. (\ref{both triag1}) or Eq. (\ref{both triag2}) at each frequency point, we can solve for candidates of oscillation characterized by $B_a$ and $\phi_a$. The potential frequency response corresponds to a candidate pair $B_{a,s}$ and $\phi_{a,s}$, at frequency $\omega$, similar to Eqs. (\ref{control FR mag})-(\ref{control FR phase}) and Eqs. (\ref{state FR mag})-(\ref{state FR phase}), becomes: 
\begin{equation}
\begin{aligned}
|F(\omega,B_{a,s},\phi_{a,s})|=\frac{A_{\tilde{p},n+1}}{R}=\frac{B_{a,s}|N_{a,sat}(B_{a,s})||N_{v,sat}(\frac{B_{a,s}|N_{a,sat}(B_{a,s})|}{\omega})|}{\omega^2R} \label{both FR mag}
\end{aligned}
\end{equation}

\begin{equation}
\begin{aligned}
\angle F(\omega,B_{a,s},\phi_{a,s})=\angle\tilde{p}_{n+1}-0=\phi_{a,s}-\pi \label{both FR phase}
\end{aligned}
\end{equation}

\begin{rem}
    \label{rem 2}
    An important point to emphasize is that while the DF frequency response analysis extends beyond conventional linear analysis by considering saturation limits, it encompasses the linear analysis when these limits are not reached. We prove this in the following proposition.
\end{rem}

\begin{prop}
    When saturation limits are not reached, the DF frequency response analysis degenerates to the linear frequency response analysis.
\end{prop}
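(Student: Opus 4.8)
The plan is to recognize that the hypothesis ``saturation limits are not reached'' places \emph{both} nonlinear elements in Case~(1) of Section~\ref{sec3.2}, i.e.\ $B_a\le\min\{-a_{\min},a_{\max}\}$ and $B_v\le\min\{-\tilde v_{\min},\tilde v_{\max}\}$, so that $N_{a,sat}(B_a)\equiv 1$ and $N_{v,sat}(B_v)\equiv 1$ with zero phase shift (Eqs.~(\ref{eq:a DF 1}) and~(\ref{state DF1})). Substituting these unit gains into the first-harmonic balance should collapse the DF equations to the algebraic relation obtained by putting $s=j\omega$ directly in the \emph{linear} model~(\ref{eq:AV model1})--(\ref{eq:AV model3}) with the saturation branches deleted, i.e.\ the classical transfer function $(k_1+k_2 s)/(s^2-k_3 s+k_1)$ relating $\tilde p_{n+1}$ to $\tilde p_n$. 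I would prove the statement for the most general configuration (both saturations, Section~\ref{sec4.2.3}); the control-only and state-only cases of Sections~\ref{sec4.2.1}--\ref{sec4.2.2} then follow by additionally fixing $N_{v,sat}\equiv 1$ or $N_{a,sat}\equiv 1$ and are strictly easier.

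Concretely, I would first set $N_{a,sat}=N_{v,sat}=1$ in Eq.~(\ref{both subtitute sin-pi}) and apply the auxiliary-angle reduction exactly as in Eqs.~(\ref{both triag1})--(\ref{both triag2}). The amplitude-matching half then reads $R^2(k_1^2+\omega^2 k_2^2)=\tfrac{B_a^2}{\omega^4}\bigl[(\omega^2-k_1)^2+\omega^2 k_3^2\bigr]$, whose unique positive root is $B_a=\omega^2 R\sqrt{(k_1^2+\omega^2 k_2^2)/\bigl((\omega^2-k_1)^2+\omega^2 k_3^2\bigr)}$; feeding it into Eq.~(\ref{both FR mag}) with unit describing functions yields $|F(\omega)|=B_a/(\omega^2 R)$, which is precisely $|(k_1+j\omega k_2)/((k_1-\omega^2)-j\omega k_3)|$, the linear magnitude. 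For the phase, the phase-matching half of Eq.~(\ref{both triag1}) (resp.~Eq.~(\ref{both triag2})) gives $\phi_a=\tan^{-1}(\omega k_2/k_1)+\tan^{-1}\!\bigl(-\omega k_3/(\omega^2-k_1)\bigr)$ (resp.~the same expression minus $\pi$), and Eq.~(\ref{both FR phase}) then produces $\angle F(\omega)=\phi_a-\pi$; I would verify this coincides with $\tan^{-1}(\omega k_2/k_1)-\arg\bigl((k_1-\omega^2)-j\omega k_3\bigr)$, the linear phase.

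The one delicate point is the branch/quadrant bookkeeping for $\tan^{-1}$: I must check that the two-case split of Eqs.~(\ref{both triag1})--(\ref{both triag2}) on the sign of $1-k_1/\omega^2$ aligns exactly with the quadrant of the denominator $(k_1-\omega^2)-j\omega k_3$. Since $k_3<0$ by Remark~\ref{rem1}, its imaginary part $-\omega k_3$ is positive, so the denominator lies in the first quadrant when $\omega^2<k_1$ and in the second when $\omega^2>k_1$; one then verifies that the extra $+\pi$ in Eq.~(\ref{both triag2}), together with the $-\pi$ of Eq.~(\ref{both FR phase}), reproduces the argument of a second-quadrant number exactly (and symmetrically for $\omega^2<k_1$), using $\tan^{-1}(-x)=-\tan^{-1}(x)$ and reading angles modulo $2\pi$. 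All remaining manipulations are routine trigonometric simplification. I would close by noting that, once $N_{a,sat}\equiv N_{v,sat}\equiv 1$, the balance equations become linear in $B_a$, so the multiplicity of oscillation candidates flagged in Section~\ref{sec4.2} collapses to the single root found above; hence the DF procedure returns exactly the classical linear frequency response and the incremental-input stability screening of Section~\ref{sec5} is vacuous, which is the asserted degeneration.
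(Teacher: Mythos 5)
Your proof is correct, but it takes a genuinely different route from the paper's. The paper substitutes $N_{a,sat}=N_{v,sat}=1$ into the harmonic-balance relation Eq.~(\ref{control state balance 1}), observes that the resulting time-domain identity is exactly the linear governing equation $\ddot{\tilde{p}}_{n+1}=k_1(\tilde p_n-\tilde p_{n+1})+k_2\dot{\tilde p}_n+k_3\dot{\tilde p}_{n+1}$, and then shows that its Fourier transform yields the same complex ratio $(j\omega k_2+k_1)/(-\omega^2-j\omega k_3+k_1)$ as the Laplace-transform/transfer-function derivation with $s=j\omega$. You instead run the full DF solution pipeline of Section~\ref{sec4.2.4}: you solve the amplitude-matching half of Eqs.~(\ref{both triag1})--(\ref{both triag2}) explicitly for $B_a$, feed it into Eq.~(\ref{both FR mag}), and separately match phases, checking that the two-case split on the sign of $1-k_1/\omega^2$ aligns with the quadrant of $(k_1-\omega^2)-j\omega k_3$ (your quadrant analysis using $k_3<0$ from Remark~\ref{rem1} is right, and the residual $2\pi$ discrepancy in the Eq.~(\ref{both triag2}) branch is correctly absorbed by reading angles modulo $2\pi$). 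What your route buys is twofold: it verifies that the concrete numerical procedure the paper actually executes (not just the balance equation it is derived from) returns the linear answer, and it establishes uniqueness of the oscillation candidate when the DFs are unity — a point the paper's proof does not address but which matters, since Section~\ref{sec4.2} warns of possible multiple solutions. What the paper's route buys is brevity and immunity from the $\tan^{-1}$ branch bookkeeping. One small caveat: calling the incremental-input screening of Section~\ref{sec5} ``vacuous'' is a slight overstatement — with $N_{a,inc}=N_{v,inc}=1$ the locus collapses to a point and the Nyquist test reduces to the ordinary linear closed-loop stability check, which still must hold; but since the candidate is unique this does not affect the degeneration claim.
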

\begin{proof}
    Consider the controlled AV systems with both control and state saturation. When the saturation limits are not reached, by Eqs. (\ref{eq:a DF 1}) and (\ref{state DF1}), $N_{a,sat}(B_a)=N_{v,sat}(B_v)=1$. Therefore, Eq. (\ref{control state balance 1}) degenerates to
    \begin{equation}
\begin{aligned}
B_a \sin(\omega t + \phi_a)
=&k_1 R \sin(\omega t)  + k_2 |G_{der}| R \sin(\omega t + \angle G_{der}) - k_1 |G_{int}|^2  B_a \sin(\omega t + \phi_a + 2 \angle G_{int})\\
&+ k_3 |G_{der}| |G_{int}|^2 B_a \sin(\omega t + \phi_a + \angle G_{der} + 2 \angle G_{int}) \label{control state balance 1 simplify}
\end{aligned}
\end{equation}

Furthermore, without the effect of nonlinearies, $B_a \sin(\omega t + \phi_a)=\Ddot{\tilde{p}}_{n+1}$. By applying the Fourier transform (\citep{franklin2002feedback}) on Eq. (\ref{control state balance 1 simplify}), the following frequency domain relation is obtained:

\begin{equation}
    -\omega^2 \tilde{p}_{n+1}(\omega)=k_1\tilde{p}_n(\omega)+j\omega k_2\tilde{p}_n(\omega)-k_1\tilde{p}_{n+1}(\omega)+j\omega k_3 \tilde{p}_{n+1}(\omega)
\end{equation}
where $\tilde{p}_{n+1}(\omega)$ and $\tilde{p}_n(\omega)$ are the Fourier transform of $\tilde{p}_{n+1}$ and $\tilde{p}_n$, respectively. Hence, the frequency response is obtained:
\begin{equation}
    F(\omega)=\frac{\tilde{p}_{n+1}(\omega)}{\tilde{p}_n(\omega)}=\frac{j\omega k_2+k_1}{-\omega^2-j\omega k_3+k_1}
    \label{frequency response simplify}
\end{equation}

We next derive the linear frequency response analysis through the well-known Laplace transform approach (\citep{franklin2002feedback}). By Eqs. (\ref{eq:AV model1})-(\ref{eq:AV model3}), without the saturation limits, the governing equation of the system becomes:
\begin{equation}
\Ddot{\tilde{p}}_{n+1}(t)=k_{1}(\tilde{p}_n(t)-\tilde{p}_{n+1}(t))+k_{2}\dot{\tilde{p}}_n(t)+k_{3}\dot{\tilde{p}}_{n+1}(t) \label{eq:AV model3 simplify}
\end{equation}

Applying the Laplace transform to the above equation yields:
\begin{equation}
s^2{\tilde{p}}_{n+1}(s)=k_{1}(\tilde{p}_n(s)-\tilde{p}_{n+1}(s))+k_{2}s{\tilde{p}}_n(s)+k_{3}s{\tilde{p}}_{n+1}(s) \label{laplace simplify}
\end{equation}
where $\tilde{p}_{n+1}(s)$ and $\tilde{p}_n(s)$ are the Laplace transform of $\tilde{p}_{n+1}$ and $\tilde{p}_n$, respectively. Hence, the transfer function is obtained:
\begin{equation}
    T(s)=\frac{\tilde{p}_{n+1}(s)}{\tilde{p}_n(s)}=\frac{k_2s+k_1}{s^2-k_3s+k_1}
\end{equation}
The frequency response is obtained by substituting $s=j\omega$ into the transfer function:
\begin{equation}
    T(s=j\omega)=\frac{\tilde{p}_{n+1}(s=j\omega)}{\tilde{p}_n(s=j\omega)}=\frac{j\omega k_2+k_1}{-\omega^2-j\omega k_3+k_1}
\end{equation}
which is clearly identical to Eq. (\ref{frequency response simplify}).

With similar proofs for systems with only one saturation, the proposition is proved.
\end{proof}

\subsubsection{Solving for forced oscillation candidates  \label{sec4.2.4}}
For the three types of systems analyzed above, given the values of $B_{a,s}$ or $B_{v,s}$, the variables $\phi_{a,s}$ or $\phi_{v,s}$ can be analytically determined with ease by equating the phase of both sides of equations Eqs. (\ref{control triag1})-(\ref{control triag2}), Eqs. (\ref{state triag1})-(\ref{state triag2}), or Eqs. (\ref{both triag1})-(\ref{both triag2}) within the studied range of $\omega$. $B_{a,s}$ or $B_{v,s}$ can be deduced by equating the amplitude of the two sides of the aforementioned equations. However, the nonlinearity introduced by the DFs can render the analytical solution of these equations challenging. Several studies have proposed graphical methods to identify the roots of such equations (\citep{vander1968multiple,atherton2011introduction,li2012prediction,maraini2018nonlinear}). Nevertheless, implementing these graphical solutions for heterogeneous AVs can be cumbersome and repetitive. Fortunately, iterative strategies, such as the Brent-Dekker method  (\citep{dekker1969finding,brent2013algorithms}), offer numerical solutions. These strategies are largely dependent on an initial guess, hence, we sweep the initial guess across the range $[0,B_{ini,max}]$ to identify all possible solutions, where $B_{ini,max}$ represents a predetermined maximum value of the initial guess.

While the DF method enables us to derive the necessary conditions and identify oscillation candidates, it does not provide an efficient way to ensure the stability of these candidates (i.e., if the closed-loop system will converge back to the same oscillation after a small perturbation is introduced). A stability analysis on the oscillation candidates is necessary as it validates if the oscillation candidates can be observed in practice and helps us discard unstable solutions. To address this, we introduce the incremental-input DF in Section \ref{sec5} to assess the stability of the identified candidates.

\section{Incremental-input DF analysis and oscillation stability analysis \label{sec5}}
In this section, the incremental-input DF is employed to examine the stability of the candidates of oscillation identified in Section \ref{sec4}. Precisely, Subsection \ref{sec5.1} introduces the concept of the incremental-input DF. Subsequently, Subsection \ref{sec5.2} elaborates on the derivation of incremental-input DFs. Conclusively, Subsection \ref{sec5.3} delves into the oscillation candidates stability analysis.

\subsection{The incremental-input DF \label{sec5.1}}
The incremental-input DF, a specialized form of the two-sinusoidal-input describing function (\citep{gibson1963new,west1956dual}), characterizes the impact exerted by a nonlinear element on a sinusoidal perturbation centered around a designated forced oscillation candidate (\citep{bonenn1963relative}). The decay of such perturbations signifies the stability of the pinpointed candidate oscillation. Given that our investigation does not account for sub-harmonics, and the AV systems exhibit a low-pass nature that filters out higher-frequency components, we focus on a category of perturbations consistent in frequency with $\tilde{p}_n(t)$. Consequently, the generic input to a nonlinear saturation element, $u(t)$, is expressed as follows:
\begin{equation}
\begin{aligned}
u(t)=B_s\sin(\omega t)+\epsilon\sin(\omega t+\theta) \label{incremental input}
\end{aligned}
\end{equation}

\noindent where the first term is the predominant input which is one of the oscillation candidates acquired from the analysis in Section \ref{sec4}. It's shifted on the time axis so that its first cycle starts at $t=0$, simplifying notation without affecting the analysis due to our interest in the phase relation between primary and incremental inputs. The second term represents the incremental input which can be viewed as a perturbation to the predominant input. In this context, $\epsilon\ll B_s$, indicating the perturbation's magnitude is notably less than the predominant input. Additionally, $\theta$ represents the phase shift of the incremental input in relation to the predominant input, which can span any value in the interval $[0,2\pi]$. 
{\begin{rem}
     Due to the low-pass characteristic of AV systems and the assumption of no subharmonics on the saturation, the output of the saturation remains sinusoidal with the same frequency as the external input to the AV system (i.e., the oscillatory position of the AV’s predecessor). The input to the nonlinear element, being a linear combination of the external input and feedback signals, also remains sinusoidal at the same frequency by trigonometric identities. Therefore, the predominant and perturbation terms are both considered sinusoidal with the same frequency, and perturbations are analyzed with all possible phases.
\end{rem}}

By expanding and regrouping Eq. (\ref{incremental input}) and omitting higher order terms in $\frac{\epsilon}{B_s}$, the incremental-input DF $N_{inc}(B_s,\theta)$, which is a complex ratio of the output caused by $\epsilon \sin(\omega t+\phi)$ and $B_s \sin(\omega t)$ itself, can be readily derived following \citep{vander1968multiple,bonenn1963relative}:
\begin{equation}
\begin{aligned}
N_{inc}(B_s,\theta)=N(B_s)+\frac{B_s}{2}\cdot\frac{dN(B_s)}{dB_s}(1+e^{-j2\theta}) \label{incremental DF}
\end{aligned}
\end{equation}

\noindent where $N(B_s)$ denotes the DF of a nonlinear element with an input of amplitude $B_s$, as elaborated in Subsection \ref{sec3.1}. The incremental-input DF is shown to be invariant to both the frequency of the oscillation candidate under scrutiny and the magnitude of the incremental-input.

\subsection{Incremental-input DFs of control and state saturation\label{sec5.2}}
In this subsection, we conduct an analysis of the incremental-input DF for both the control saturation and state saturation elements.

\subsubsection{Control saturation \label{sec5.2.1}}
For the control saturation element, let the predominant input of it be denoted as $a_{n+1}(t)=B_{a,s}sin(\omega t)$, where $B_{a,s}$ can be derived from Eqs. (\ref{control triag1})-(\ref{control triag2}) or Eqs. (\ref{both triag1})-(\ref{both triag2}). Considering the acceleration limits as $[a_{min}, a_{max}]$. It has been shown that the DF of the control saturation, $N_{a,sat}(B_{a,s})$, can be characterized into four cases depending on different limit activeness. Analogously, the incremental-input DF, $N_{inc}(B_{a,s},\theta_a)$, can also be classified into these four scenarios:

\textbf{Case (1)}: limit inactive (i.e., $B_{a,s}\leq-a_{min}$ and $B_{a,s}\leq a_{max}$).

\begin{equation}
\begin{aligned}
N_{a,sat}(B_{a,s}) &= 1 
\end{aligned}
\end{equation}

\noindent and the corresponding incremental-input DF can be obtained by Eq. (\ref{incremental DF}):

\begin{equation}
\begin{aligned}
N_{a,inc}(B_{a,s},\theta_a)&=N_{a,sat}(B_{a,s})+\frac{B_{a,s}}{2}\cdot\frac{dN_{a,sat}(B_{a,s})}{dB_{a,s}}(1+e^{-j2\theta_a})\\
&=1\label{control incremental DF1}
\end{aligned}
\end{equation}

\textbf{Case (2)}: limit activeness by $a_{min}$ (i.e., $-a_{min}\leq B_{a,s}\leq a_{max}$). 
\begin{equation}
N_{a,sat}(B_{a,s}) = \frac{1}{2} - \frac{1}{\pi B_{a,s}} a_{\min} \sqrt{1 - \left(\frac{a_{\min}^2}{B_{a,s}^2}\right)} - \frac{1}{\pi} \sin^{-1}\left(\frac{a_{\min}}{B_{a,s}}\right)
\end{equation}

\noindent correspondingly, by Eq. (\ref{incremental DF}):
\begin{equation}
\begin{aligned}
N_{a,inc}(B_{a,s},\theta_a)=&N_{a,sat}(B_{a,s})+\frac{B_{a,s}}{2}\cdot\frac{dN_{a,sat}(B_{a,s})}{dB_{a,s}}(1+e^{-j2\theta_a})\\
=&\frac{1}{2} - \frac{1}{\pi B_{a,s}} a_{\min} \sqrt{1 - \left(\frac{a_{\min}^2}{B_{a,s}^2}\right)} - \frac{1}{\pi} \sin^{-1}\left(\frac{a_{\min}}{B_{a,s}}\right)+(\frac{a_{min}\sqrt{B_{a,s}^2-a_{min}^2}}{B_{a,s}^2\pi})(1+e^{-j2\theta_a})\\
=&\frac{1}{2}+\frac{e^{-j2\theta_a}}{B_{a,s}\pi}a_{min}\sqrt{1-\frac{a_{min}^2}{B_{a,s}^2}}-\frac{1}{\pi}sin^{-1}(\frac{a_{min}}{B_{a,s}})\label{control incremental DF2}
\end{aligned}
\end{equation}

\textbf{Case (3)}: limit activeness by $a_{max}$ (i.e., $a_{max}\leq B_{a,s}\leq -a_{min}$).
\begin{equation}
N_{a,sat}(B_{a,s}) = \frac{1}{2} + \frac{1}{\pi B_{a,s}} a_{\max} \sqrt{1 - \left(\frac{a_{\max}^2}{B_{a,s}^2}\right)} + \frac{1}{\pi} \sin^{-1}\left(\frac{a_{\max}}{B_{a,s}}\right)
\end{equation}

\noindent correspondingly, by Eq. (\ref{incremental DF}):
\begin{equation}
\begin{aligned}
N_{a,inc}(B_{a,s},\theta_a)=&N_{a,sat}(B_{a,s})+\frac{B_{a,s}}{2}\cdot\frac{dN_{a,sat}(B_{a,s})}{dB_{a,s}}(1+e^{-j2\theta_a})\\
=&\frac{1}{2} + \frac{1}{\pi B_{a,s}} a_{\max} \sqrt{1 - \left(\frac{a_{\max}^2}{B_{a,s}^2}\right)} + \frac{1}{\pi} \sin^{-1}\left(\frac{a_{\max}}{B_{a,s}}\right)+(-\frac{a_{max}\sqrt{B_{a,s}^2-a_{max}^2}}{B_{a,s}^2\pi})(1+e^{-j2\theta_a})\\
=&\frac{1}{2}-\frac{e^{-j2\theta_a}}{B_{a,s}\pi}a_{max}\sqrt{1-\frac{a_{max}^2}{B_{a,s}^2}}+\frac{1}{\pi}sin^{-1}(\frac{a_{max}}{B_{a,s}})\label{control incremental DF3}
\end{aligned}
\end{equation}

\textbf{Case (4)}: limit activeness by both $a_{max}$ and $a_{min}$ (i.e., $B_{a,s}\geq a_{max}$ and $B_{a,s}\geq -a_{min}$). 

\begin{equation}
N_{a,sat}(B_{a,s}) = \frac{1}{\pi} \left( \frac{a_{\max}}{B_{a,s}} \sqrt{1 - \left(\frac{a_{\max}^2}{B_{a,s}^2}\right)} - \frac{a_{\min}}{B_{a,s}} \sqrt{1 - \left(\frac{a_{\min}^2}{B_{a,s}^2}\right)} +  \sin^{-1}\left(\frac{a_{\max}}{B_{a,s}}\right) - \sin^{-1}\left(\frac{a_{\min}}{B_{a,s}}\right) \right)
\end{equation}

\noindent correspondingly, by Eq. (\ref{incremental DF}):
\begin{equation}
\begin{aligned}
N_{a,inc}(B_{a,s},\theta_a)=&N_{a,sat}(B_{a,s})+\frac{B_{a,s}}{2}\cdot\frac{dN_{a,sat}(B_{a,s})}{dB_{a,s}}(1+e^{-j2\theta_a})\\
=&\frac{1}{\pi} \left( \frac{a_{\max}}{B_{a,s}} \sqrt{1 - \left(\frac{a_{\max}^2}{B_{a,s}^2}\right)} - \frac{a_{\min}}{B_{a,s}} \sqrt{1 - \left(\frac{a_{\min}^2}{B_{a,s}^2}\right)} +  \sin^{-1}\left(\frac{a_{\max}}{B_{a,s}}\right) - \sin^{-1}\left(\frac{a_{\min}}{B_{a,s}}\right) \right)\\
&+(\frac{-a_{max}\sqrt{B_{a,s}^2-a_{max}^2}+a_{min}\sqrt{B_{a,s}^2-a_{min}^2}}{B_{a,s}\pi})(1+e^{-j2\theta_a})\\
=&\frac{e^{-j2\theta_a}}{B_{a,s}\pi}(a_{max}\sqrt{1-\frac{a_{max}^2}{B_{a,s}^2}}-a_{min}\sqrt{1-\frac{a_{min}^2}{B_{a,s}^2}})+\frac{1}{\pi}(sin^{-1}(\frac{a_{max}}{B_{a,s}})-sin^{-1}(\frac{a_{min}}{B_{a,s}}))\label{control incremental DF4}
\end{aligned}
\end{equation}

\subsubsection{State saturation \label{sec5.2.2}}
Similarly, consider the predominant input to the state saturation $\tilde{v}_{n+1}(t)=B_{v,s} \sin(\omega t)$, where $B_{v,s}$ can be obtained by solving Eqs. (\ref{state triag1})-(\ref{state triag2}) or Eqs. (\ref{both triag1})-(\ref{both triag2}), and the output of the saturation limited within $[\tilde{v}_{min},\tilde{v}_{max}]$. The incremental-input DF of the state saturation, $N_{v,inc}(B_{v,s},\theta_v)$, can be obtained as follows:

\textbf{Case (1)}: limit inactive (i.e., $B_{v,s}\leq-\tilde{v}_{min}$ and $B_{v,s}\leq \tilde{v}_{max}$).

\begin{equation}
\begin{aligned}
N_{v,inc}(B_{v,s},\theta_v)=1\label{state incremental DF1}
\end{aligned} 
\end{equation}

\textbf{Case (2)}: limit activeness by $\tilde{v}_{min}$ (i.e., $-\tilde{v}_{min}\leq B_{v,s}\leq \tilde{v}_{max}$).

\begin{equation}
\begin{aligned}
N_{v,inc}(B_{v,s},\theta_v)=\frac{1}{2}+\frac{e^{-j2\theta_v}}{B_{v,s}\pi}\tilde v_{min}\sqrt{1-\frac{\tilde v_{min}^2}{B_{v,s}^2}}-\frac{1}{\pi}sin^{-1}(\frac{\tilde v_{min}}{B_{v,s}})\label{state incremental DF2}
\end{aligned}
\end{equation}

\textbf{Case (3)}: limit activeness by $\tilde{v}_{max}$ (i.e., $\tilde{v}_{max}\leq B_{v,s}\leq -\tilde{v}_{min}$).

\begin{equation}
\begin{aligned}
N_{v,inc}(B_{v,s},\theta_v)=\frac{1}{2}-\frac{e^{-j2\theta_v}}{B_{v,s}\pi}\tilde v_{max}\sqrt{1-\frac{\tilde v_{max}^2}{B_{v,s}^2}}+\frac{1}{\pi}sin^{-1}(\frac{\tilde v_{max}}{B_{v,s}})\label{state incremental DF3}
\end{aligned}
\end{equation}

\textbf{Case (4)}: limit activeness by both $\tilde{v}_{max}$ and $\tilde{v}_{min}$ (i.e., $B_{v,s}\geq \tilde{v}_{max}$ and $B_{v,s}\geq -\tilde{v}_{min}$).

\begin{equation}
\begin{aligned}
N_{v,inc}(B_{v,s},\theta_v)=\frac{e^{-j2\theta_v}}{B_{v,s}\pi}(\tilde v_{max}\sqrt{1-\frac{\tilde v_{max}^2}{B_{v,s}^2}}-\tilde v_{min}\sqrt{1-\frac{\tilde v_{min}^2}{B_{v,s}^2}})+\frac{1}{\pi}(sin^{-1}(\frac{\tilde v_{max}}{B_{v,s}})-sin^{-1}(\frac{\tilde v_{min}}{B_{v,s}}))\label{state incremental DF4}
\end{aligned}
\end{equation}

\subsection{Oscillation stability analysis based on the incremental-input DF\label{sec5.3}}
With the incremental-input DFs derived, the stability of each oscillation candidate obtained by the approach in Subsection \ref{sec4.2} can be analyzed by viewing the incremental-input as a perturbation around the potential oscillation. We first present a representation of the AV system that's equivalent to the one shown in Fig. \ref{fig:DF block diagram}, as showcased in Fig. \ref{fig:equivalent system diag}. Notably, this representation diverges in the way the feedback term of $\dot{\tilde{p}}_{n+1}$ in Eq. (\ref{eq:AV model3}) is split into $k_3\dot{\tilde{p}}_{n+1}=k_2\dot{\tilde{p}}_{n+1}+(k_3-k_2)\dot{\tilde{p}}_{n+1}$.
\begin{figure}[h]
    \centering
    \setlength{\abovecaptionskip}{0pt}
    \includegraphics[width=0.99\textwidth]{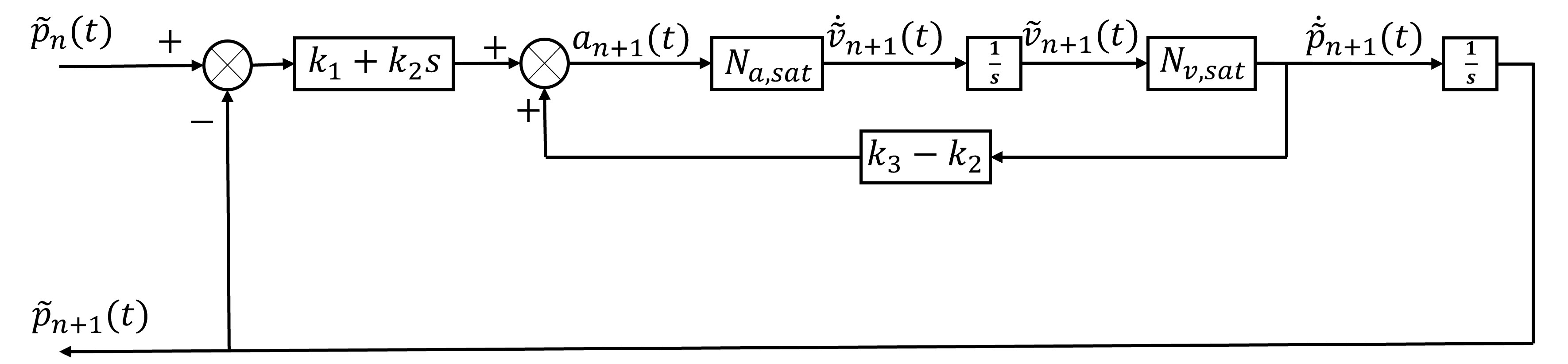}
    \caption{Block diagram equivalent to Fig. \ref{fig:DF block diagram}}
    \label{fig:equivalent system diag}
\end{figure}

Based on Fig. \ref{fig:equivalent system diag}, the incremental system that the perturbation passes through is depicted in Fig. \ref{fig:incremental-input DF block diagram}. Within this system, the transfer function of elements inside the dashed box is denoted by $H(s)$. By scrutinizing this incremental system, a potential oscillation of Fig. \ref{fig:equivalent system diag} is stable if a perturbation around it decays when transiting through Fig. \ref{fig:incremental-input DF block diagram}. Leveraging the principles of the Nyquist method  (\citep{west1956dual}) in control theory, the decay of the perturbation is checked by examining its open-loop counterpart. Stability is achieved when the open-loop incremental frequency response locus avoids enclosing the coordinate (-1,0) in the complex plane (A comprehensive exposition of closed-loop/open-loop, frequency response, the Nyquist method, and locus can be found in \citep{franklin2002feedback}). 
\begin{figure}[h]
    \centering
    \setlength{\abovecaptionskip}{0pt}
    \includegraphics[width=0.9\textwidth]{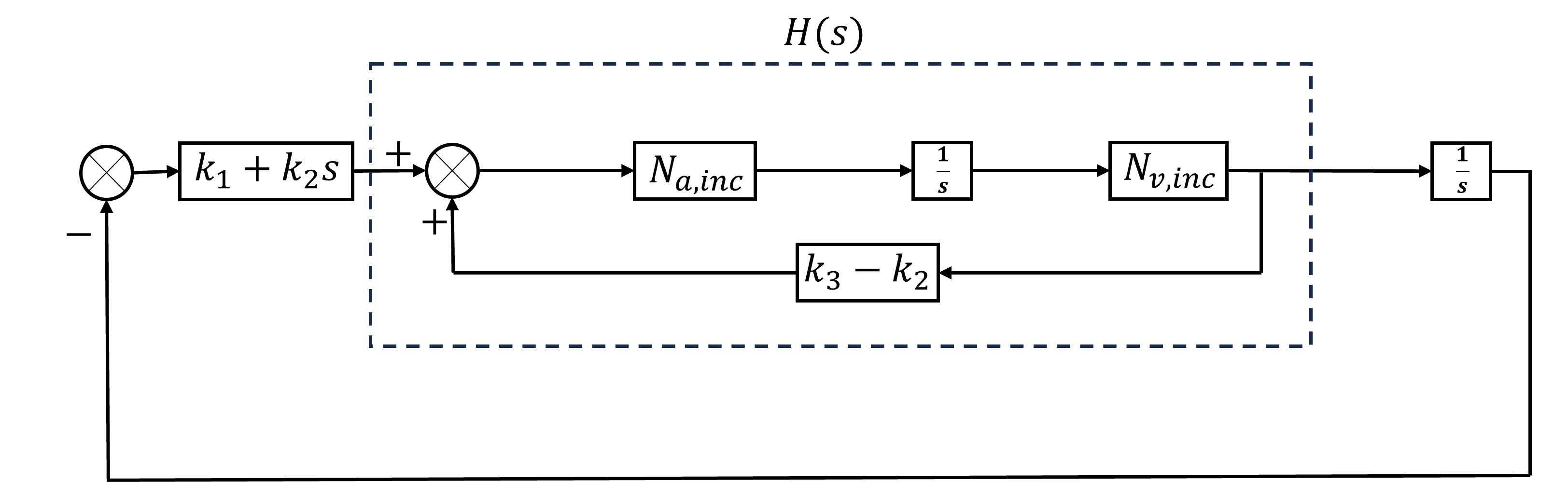}
    \caption{Incremental system corresponding to the AV system}
    \label{fig:incremental-input DF block diagram}
\end{figure}

In an AV system featuring only the control saturation, the corresponding transfer function $H(s)$ can be obtained by removing $N_{v,inc}$ from Fig. \ref{fig:incremental-input DF block diagram}:  

\begin{equation}
\begin{aligned}
H(s)=\frac{N_{a,inc}(B_{a,s},\theta_a)}{s-(k_3-k_2)N_{a,inc}(B_{a,s},\theta_a)} \label{incremental characteristic control}
\end{aligned}
\end{equation}

In contrast, for an AV system with just the state saturation, the transfer function $H(s)$ can be obtained by excluding $N_{a,inc}$ from Fig. \ref{fig:incremental-input DF block diagram}is:  

\begin{equation}
\begin{aligned}
H(s)=\frac{N_{v,inc}(B_{v,s},\theta_v)}{s-(k_3-k_2)N_{v,inc}(B_{v,s},\theta_v)} \label{incremental characteristic state}
\end{aligned}
\end{equation}

As for AV systems containing both control and state saturation, $H(s)$ becomes:  

\begin{equation}
\begin{aligned}
H(s)=\frac{N_{a,inc}(B_{a,s},\theta_a)\cdot N_{v,inc}(B_{v,s},\theta_v)}{s-(k_3-k_2)N_{v,inc}(B_{a,s},\theta_a)\cdot N_{v,inc}(B_{v,s},\theta_v)} \label{incremental characteristic both}
\end{aligned}
\end{equation}

\noindent where $B_{v,s}=\frac{B_{a,s}|N_{a,sat}(B_{a,s})|}{\omega}$ as detailed in Subsection \ref{sec4.2.3}, and $\theta_v=\theta_a+\angle N_{a,inc}(B_{a,s},\theta_a)-\frac{\pi}{2}$ wherein $\angle N_{a,inc}(B_{a,s},\theta_a)=tan^{-1}(\frac{Im[N_{a,inc}(B_{a,s},\theta_a)]}{Re[N_{a,inc}(B_{a,s},\theta_a)]})$. 

With $H(s)$, the open-loop transfer function of the incremental system is obtained as follows: 
\begin{equation}
    T_o(s)=(k_1+k_2s)\cdot H(s)\cdot\frac{1}{s}
\end{equation}
The open-loop incremental frequency response is obtained by substituting $s=j\omega$ into $T_o(s)$. The locus can be plotted by varying $\theta_a$ from 0 to $2\pi$ for systems with $H(s)$ defined by Eqs. (\ref{incremental characteristic control}) and (\ref{incremental characteristic both}), or varying $\theta_v$ from 0 to $2\pi$ for systems with $H(s)$ defined by Eq. (\ref{incremental characteristic state}). It has been shown that the locus of an incremental-input DF traces a circle in the complex plane (\citep{west1956dual,bonenn1958stability}), with linear transformations merely affecting its center and radius. Figure \ref{fig:locus example} illustrates loci that indicate stable oscillation candidates (where the open-loop incremental frequency response locus does not encircle (-1,0)) and unstable oscillation candidates (where it does encircle (-1,0)).

\begin{figure}[h]
    \centering
    \setlength{\abovecaptionskip}{0pt}
    \subcaptionbox{}{\includegraphics[width=0.4\textwidth]{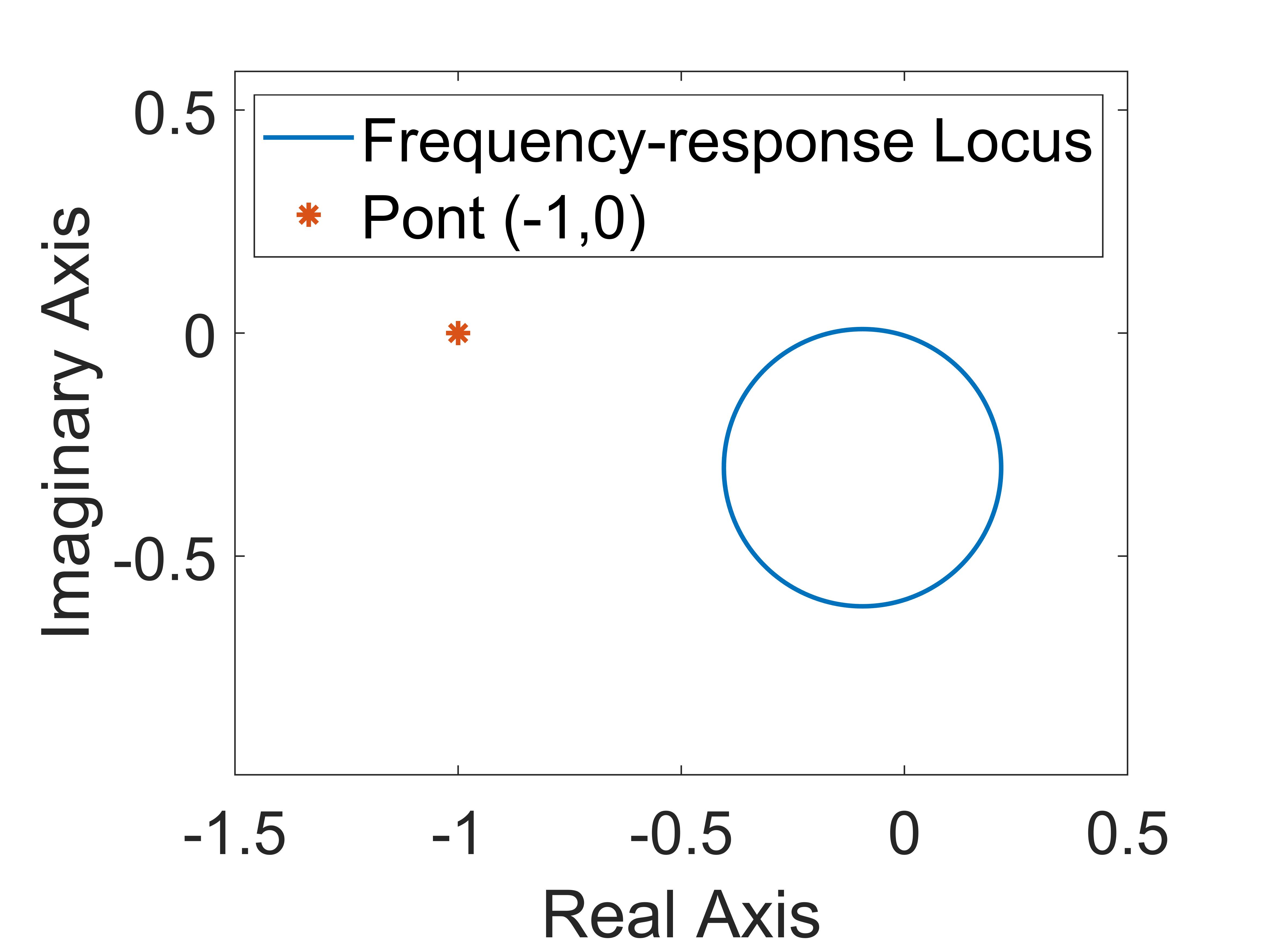}}
    \subcaptionbox{}{\includegraphics[width=0.4\textwidth]{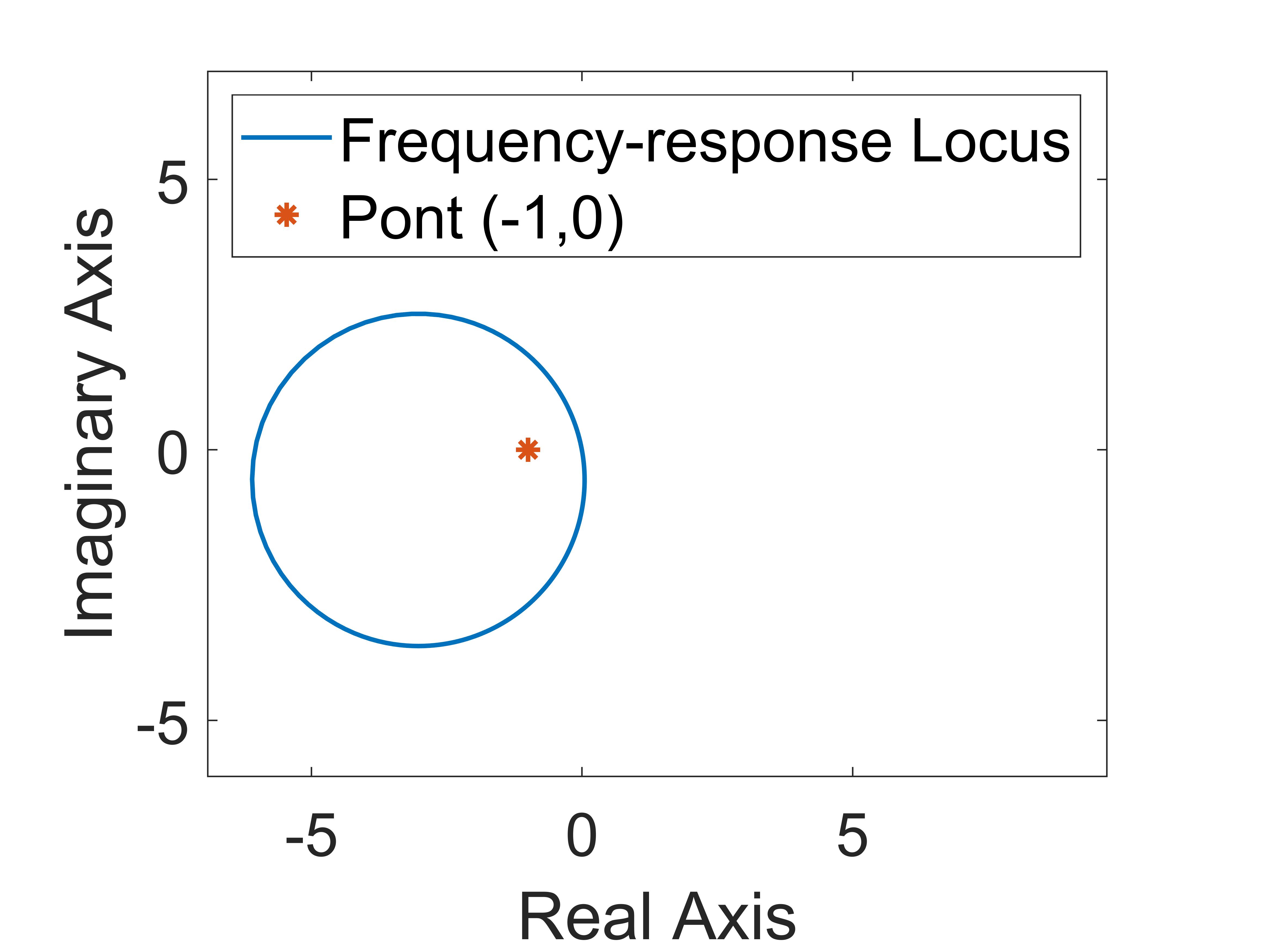}}
    \caption{Examples of open-loop incremental frequency response locus: (a) stable closed-loop; (b) unstable closed-loop}
    \label{fig:locus example}
\end{figure}

Thus far, the theoretical deductions of the entire analysis procedure in Fig. \ref{fig:steps} have been presented. 

\section{Simulation and results\label{sec6}}

To demonstrate the effectiveness and accuracy of the proposed theoretical analysis on the frequency response of AV in traffic oscillations, we validate the frequency responses calculated based on the proposed method with the frequency responses estimated from a Simulink AV model characterized by Eqs. (\ref{eq:AV model1})-(\ref{eq:AV model3}). The results are also compared with other methodologies. Specifically, Subsection \ref{sec6.1} compares the theoretical results of the proposed method with state-of-the-art methods for AV systems containing one saturation. Subsection \ref{sec6.3} extends the comparison to AV systems with both traffic state and control input saturation. Furthermore, Subsection \ref{sec6.new} provides a comprehensive analysis of how the nonlinear frequency response varies as the predecessor's oscillation changes. Finally, Subsection \ref{sec6.4} discusses the usage of the proposed method in cases where conventional string stability analysis by the linear method gives misleading conclusions.  

A linear feedback controller employing a constant time gap spacing policy $d_{n+1}^*=l_{n+1}+\tau \dot{\tilde{p}}_{n+1}$ and control law $a_{n+1}(t)=k_d\Delta d_{n+1}(t)+k_v\Delta v_{n+1}(t)$ is considered. Unless otherwise specified, the default parameters are selected as follows. $\tau=1s$ represents the pre-defined time gap and $k_d=1s^{-2}$, $k_v=2s^{-1}$ act as the feedback gains. Correspondingly, it can be calculated that $k_1=k_d=1s^{-2}$, $k_2=k_v=2s^{-1}$, and $k_3=-k_v-k_d\tau=-3s^{-1}$. Similar to \citep{li2010measurement}, we apply the Fourier transform (\cite{frigo1998fftw}) to trajectories extracted from the NGSIM dataset to obtain the frequency spectrum inherent to the oscillatory trajectories. It was found that the frequencies predominantly fall within the range of $[0, 0.1]$ Hz. Therefore, we consider that investigating the frequency responses within the range of 0 to 0.5 Hz is sufficient to cover the entire frequency spectrum.
{\begin{rem}
     By the two assumptions stated in Subsection \ref{sec3.1}, and considering AVs' capacity to reduce rather than create oscillations (\citep{cui2017stabilizing}), it is appropriate to use HDV-generated oscillations (from NGSIM data) to establish the relevant frequency range.
\end{rem}}

\subsection{Simulation of system containing one saturation\label{sec6.1}}
We first investigate AV systems integrated with a single control saturation, which is equivalent to letting $\tilde{v}_{max}=\infty$ and $\tilde{v}_{min}=-\infty$ in Eq. (\ref{eq:AV model1}) or removing Saturation 2 from Fig. \ref{fig:block diagram}. For this type of systems, the proposed method (referred as the IDF method hereafter) characterizes the control saturation with Eqs. (\ref{eq:a DF 1}), (\ref{control DF2}), (\ref{control DF3}), and (\ref{control DF4}); obtains the amplitude and phase of oscillation candidates by solving Eqs. (\ref{control triag1})-(\ref{control triag2}); analyzes the stability of each oscillation candidate with the approach proposed in Subsection \ref{sec5.3}, with $H(s)$ represented by Eq. (\ref{incremental characteristic control}) and $N_{a,inc}$ characterized by Eqs. (\ref{control incremental DF1}), (\ref{control incremental DF2}), (\ref{control incremental DF3}), and (\ref{control incremental DF4}); and calculates the magnitude and phase of the frequency response using Eqs. (\ref{control FR mag}) and (\ref{control FR phase}), respectively. The acceleration limits are chosen as $a_{max}=-a_{min}=a_{bound}=5m/s^2$. Since the impact of the saturation essentially hinges on the correlation between the amplitude of the leader trajectory and the saturation limit, we investigate instances with varied ratios of $R/a_{bound}$. We compare the magnitude and phase of frequency response as determined by the IDF method, the linear theoretical approach, and the approximation technique presented in \citep{zhou2023data} (referred to as the DF approximation hereafter, note that it is different from the DF method presented in this paper). The Simulink estimations are used for validation. These comparisons are visually represented in Fig. \ref{fig:control FR}.

\begin{figure}[h]
    \centering
    \setlength{\abovecaptionskip}{0pt}
    \subcaptionbox{}{\includegraphics[width=0.4\textwidth]{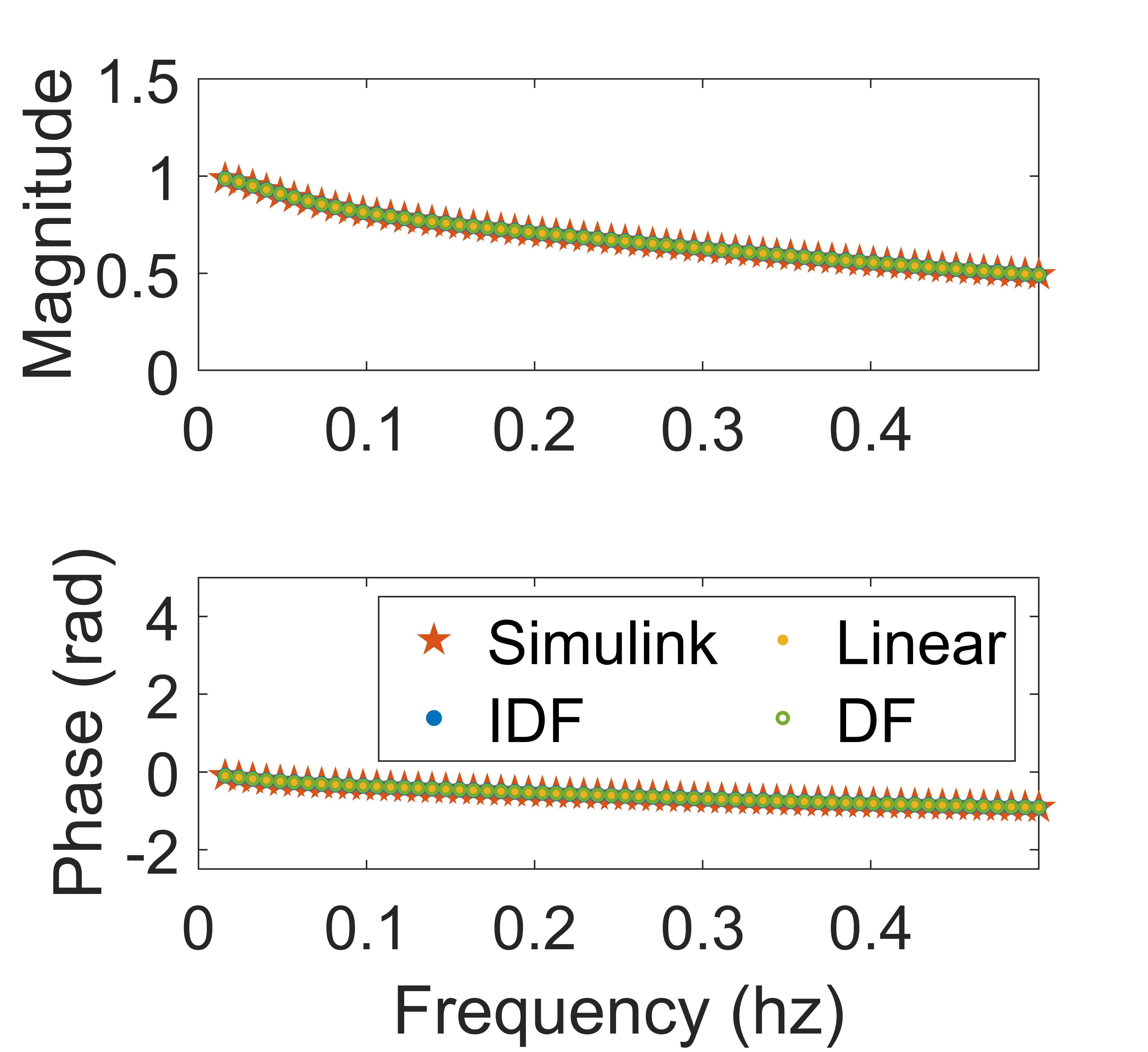}}
    \subcaptionbox{}{\includegraphics[width=0.4\textwidth]{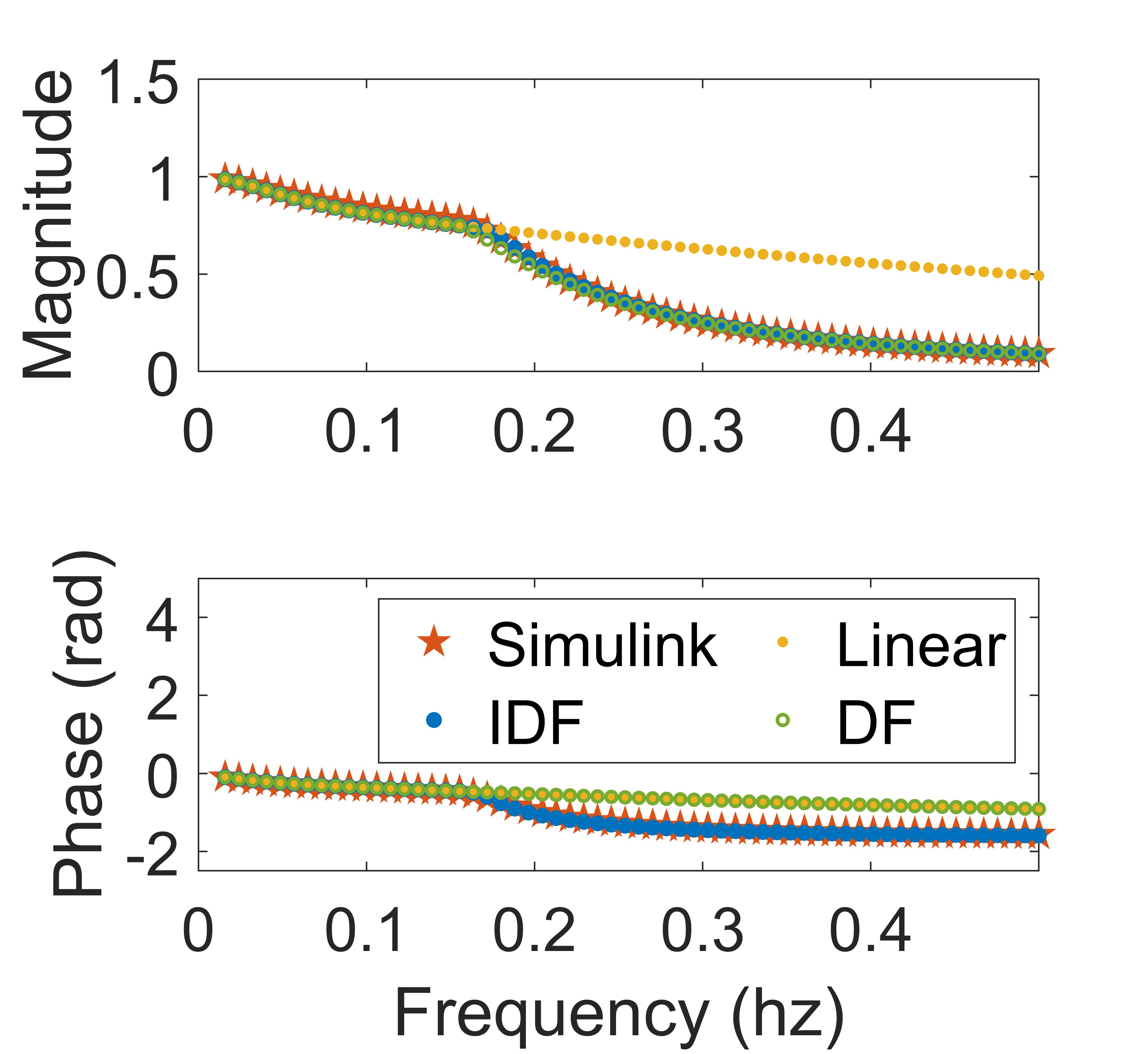}}
    \subcaptionbox{}{\includegraphics[width=0.4\textwidth]{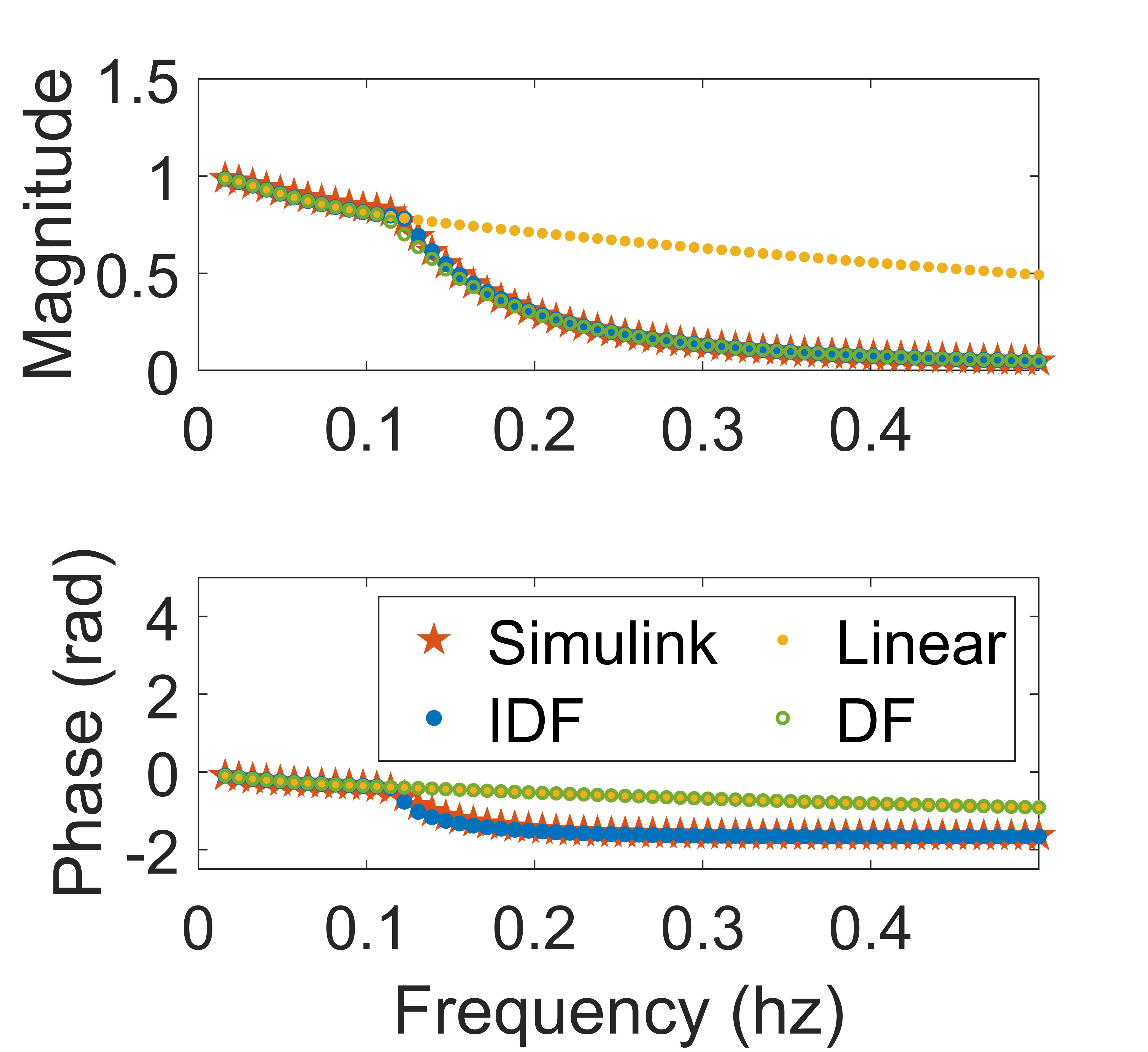}}
    \subcaptionbox{}{\includegraphics[width=0.4\textwidth]{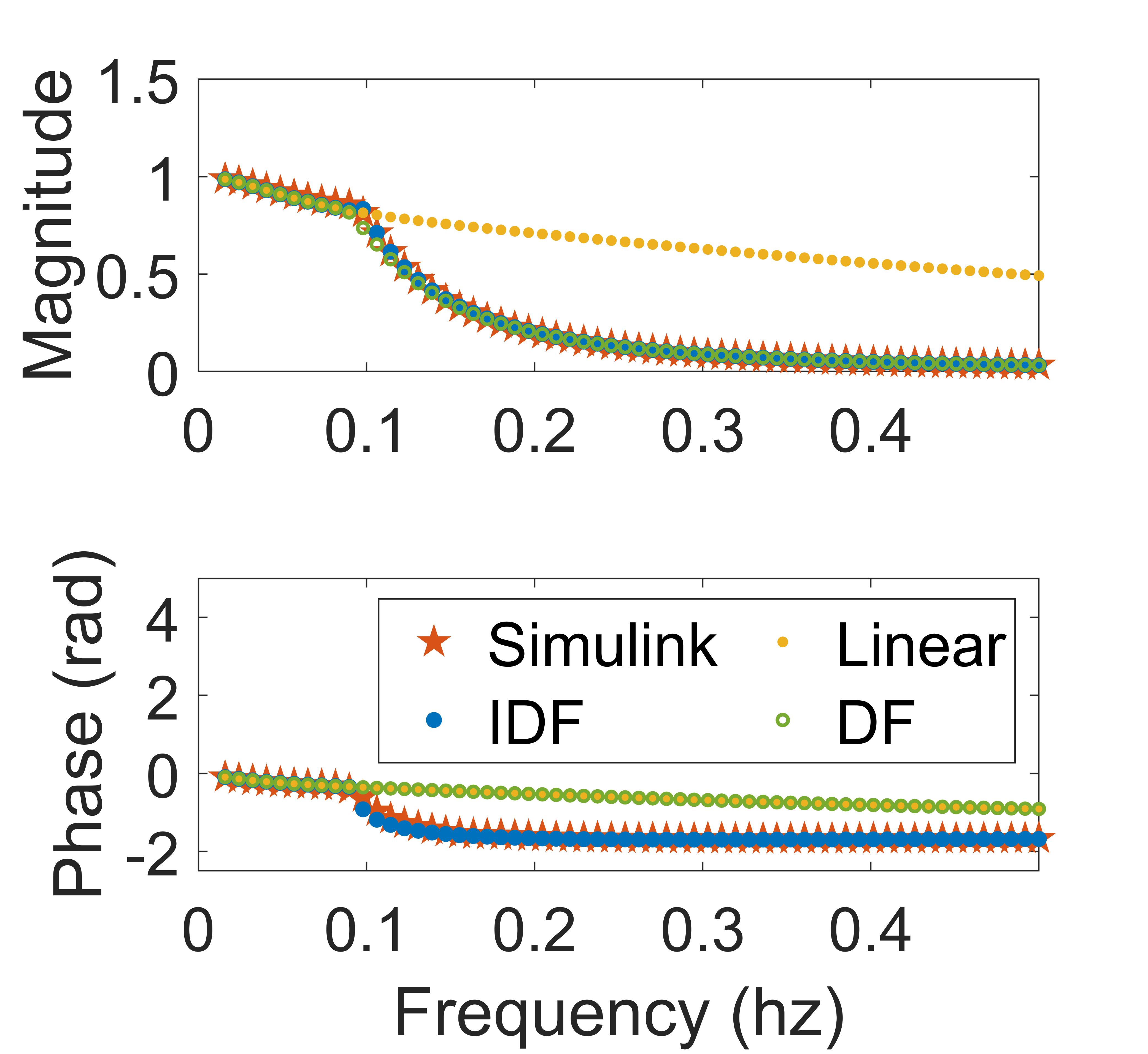}}
    \caption{Comparison of frequency responses of systems containing solely control saturation derived by different methods: (a) $R/a_{bound}=0.1$; (b) $R/a_{bound}=1.4$; (c) $R/a_{bound}=2.7$; (d) $R/a_{bound}=4$}
    \label{fig:control FR}
\end{figure}

Similarly, for AV systems integrated with a singular state saturation, the system is equivalent to letting ${a}_{max}=\infty$ and ${a}_{min}=-\infty$ in Eq. (\ref{eq:AV model2}) or removing Saturation 1 from Fig. \ref{fig:block diagram}. The IDF method characterizes the state saturation with Eqs. (\ref{state DF1}), (\ref{state DF2}), (\ref{state DF3}), and (\ref{state DF4}); derives the amplitude and phase of oscillation candidates by solving Eqs. (\ref{state triag1})-(\ref{state triag2}); analyzes the stability of each oscillation candidate with the approach proposed in Subsection \ref{sec5.3} with $H(s)$ represented by Eq. (\ref{incremental characteristic state}) and $N_{v,inc}$ characterized by Eqs. (\ref{state incremental DF1}), (\ref{state incremental DF2}), (\ref{state incremental DF3}), and (\ref{state incremental DF4}); and calculates the magnitude and phase of the frequency response using Eqs. (\ref{state FR mag}) and (\ref{state FR phase}), respectively. The velocity limits are set with an upper limit at free-flow speed $v_{max}=20m/s$ and a lower limit of $v_{min}=0m/s$. With the equilibrium speed $v_e=10m/s$ chosen, it can be readily obtained that $\tilde{v}_{max}=-\tilde{v}_{min}=10m/s$. Scenarios across various $R/\tilde{v}_{bound}$ ratios are investigated. The comparison of frequency response is illustrated in Fig. \ref{fig:state FR}.

\begin{figure}[h]
    \centering
    \setlength{\abovecaptionskip}{0pt}
    \subcaptionbox{}{\includegraphics[width=0.4\textwidth]{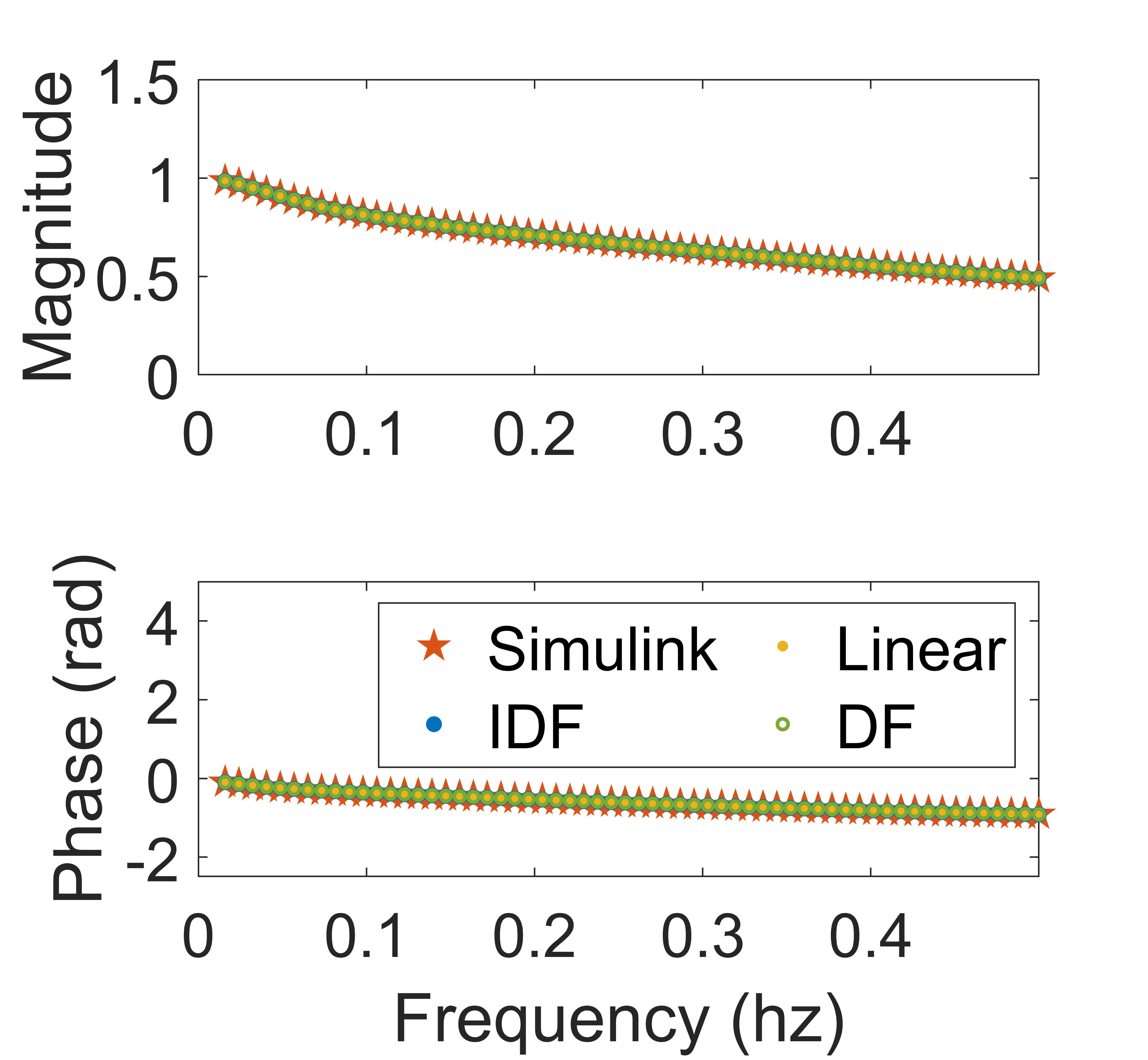}}
    \subcaptionbox{}{\includegraphics[width=0.4\textwidth]{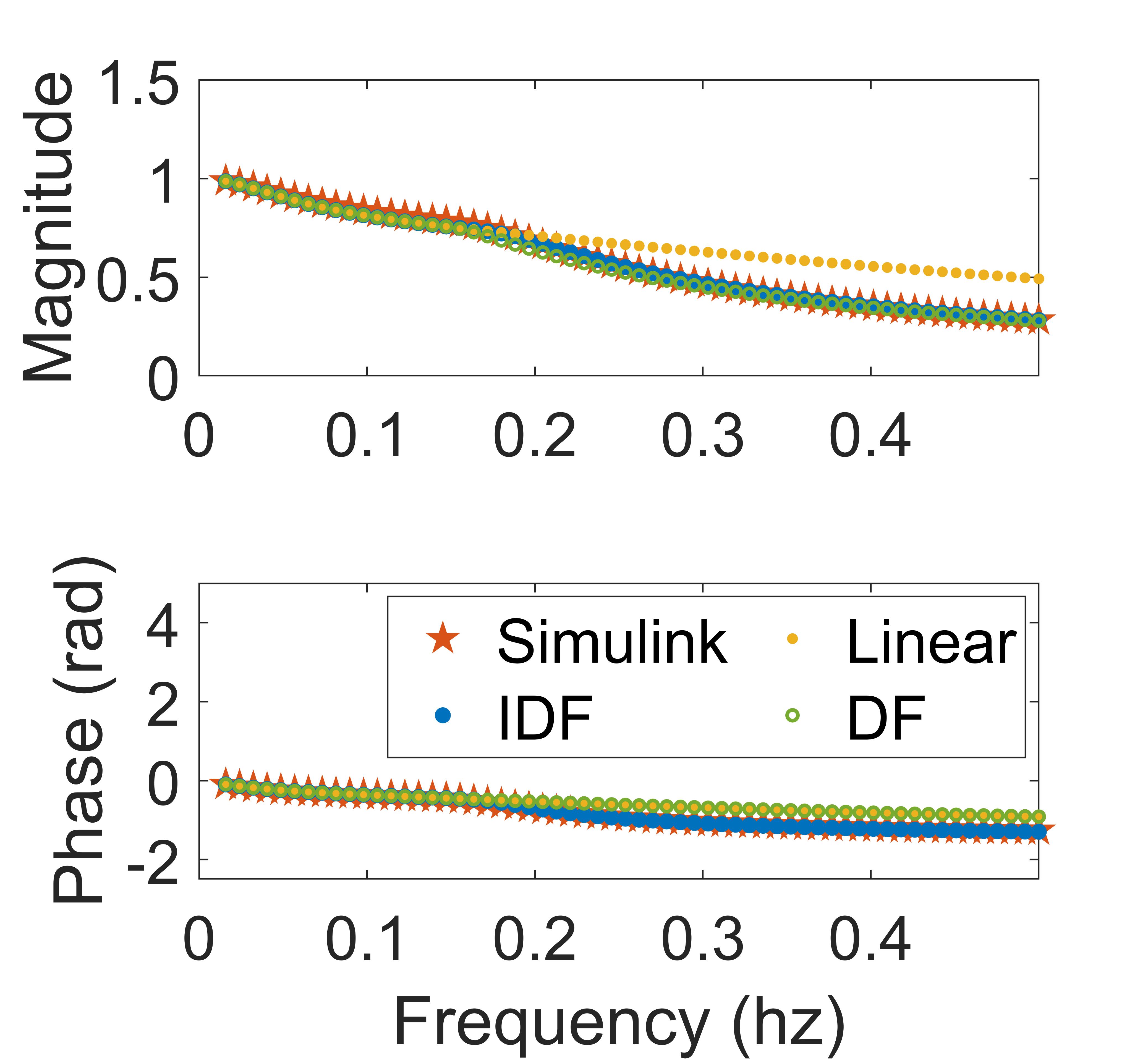}}
    \subcaptionbox{}{\includegraphics[width=0.4\textwidth]{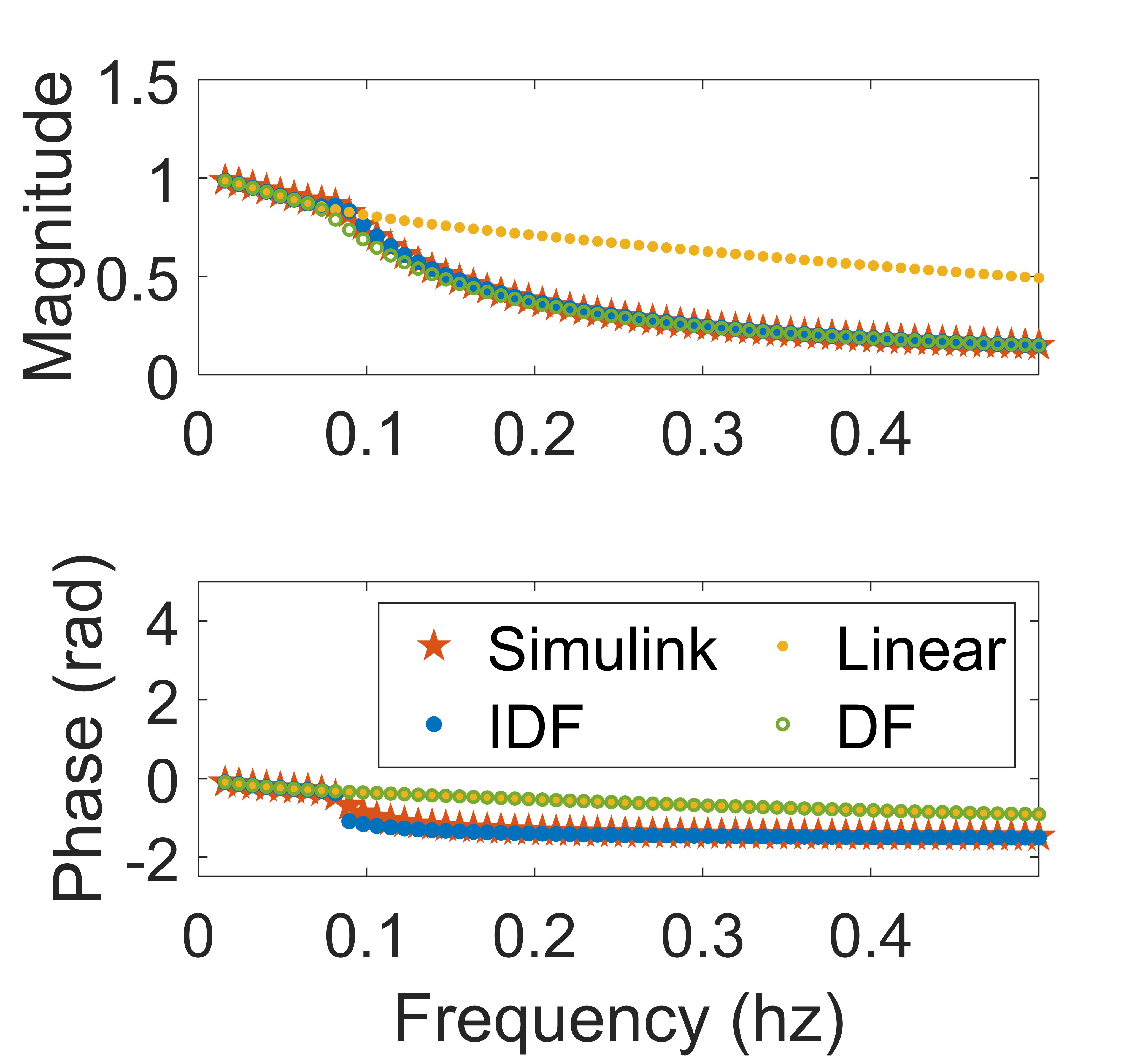}}
    \subcaptionbox{}{\includegraphics[width=0.4\textwidth]{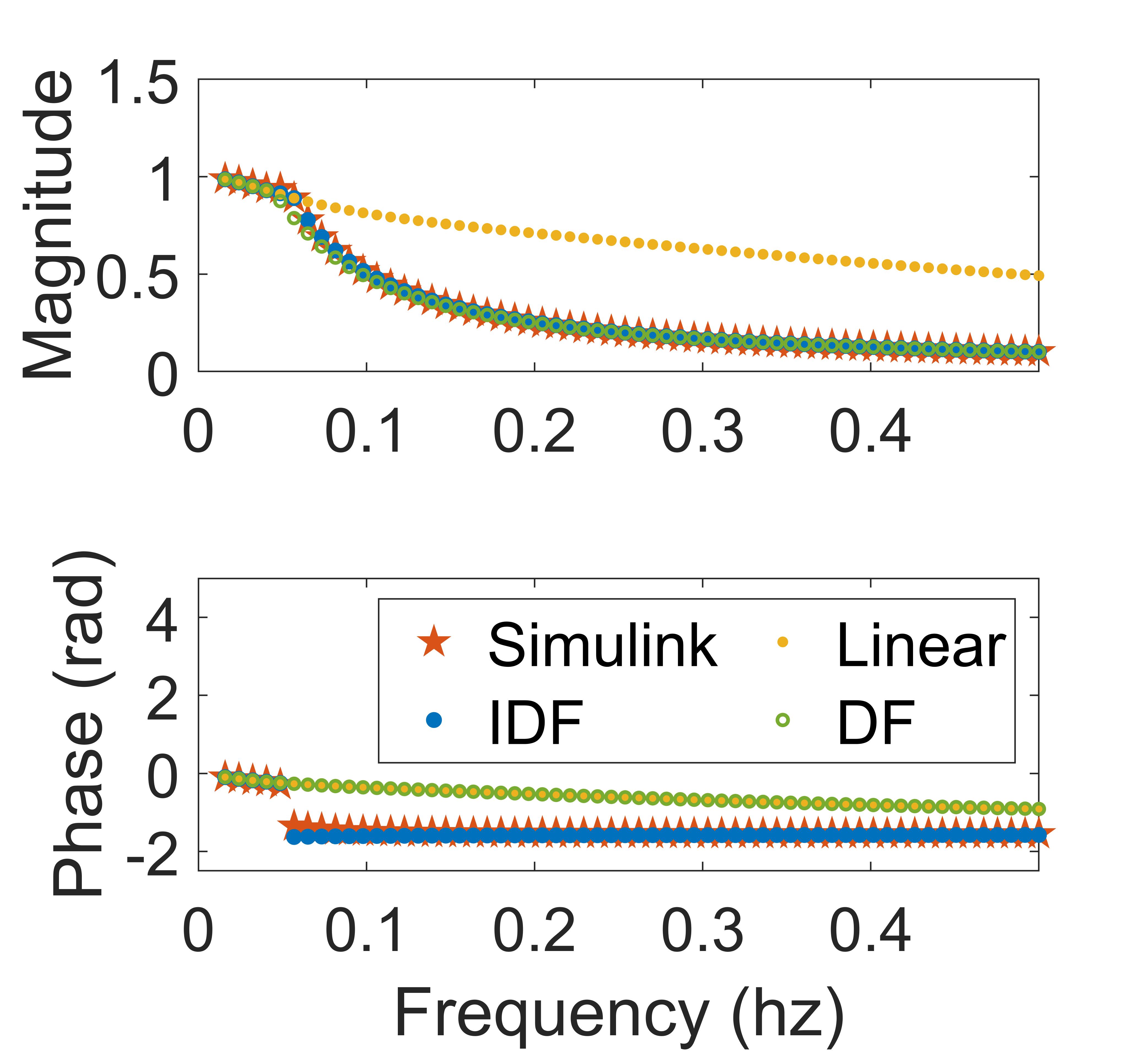}}
    \caption{Comparison of frequency responses of systems containing solely state saturation derived by different methods: (a) $R/\tilde{v}_{bound}=0.1$; (b) $R/\tilde{v}_{bound}=1.4$; (c) $R/\tilde{v}_{bound}=2.7$; (d) $R/\tilde{v}_{bound}=4$}
    \label{fig:state FR}
\end{figure}

From Figs. \ref{fig:control FR}(a) and \ref{fig:state FR}(a), it is evident that for smaller values of $R/a_{bound}$ and $R/v_{bound}$, the system operates within its linear regime as the saturation limits are not breached. Consequently, all three analytical approaches yield identical frequency responses, matching perfectly with the Simulink estimation. As we transition to Figs. \ref{fig:control FR}(b)-(d) and \ref{fig:state FR}(b)-(d), it becomes clear that with increasing $R/a_{bound}$ and $R/v_{bound}$ values, indicating more pronounced saturation effects, there's a noticeable decline in the magnitude of the frequency response. Simultaneously, there's a more negative shift in the phase as compared to the case where limits are not reached. The IDF method's predictions align almost seamlessly with the Simulink estimations, reaffirming its accuracy. The DF approximation presented by \citep{zhou2023data}, while slightly underestimating the magnitude, commendably traces the magnitude's evolution, especially for frequencies exceeding 0.25Hz. However, its approximation falls short of capturing the phase shifts due to saturation. This is because this method essentially models the system by positioning the non-linear component outside the feedback loop and doesn't take the harmonic balance into account. Meanwhile, the linear theoretical approach, as seen across Figs. \ref{fig:control FR}(a)-(d) and \ref{fig:state FR}(a)-(d), remains the same and is unable to account for the system's nonlinearity caused by saturation.

The results demonstrate that upon reaching the saturation limits, the linear theoretical method, which is commonly employed in much of the current research on string stability, significantly overestimates the magnitude of the frequency response, $|F|$. In other words, an underestimation of the attenuation effect in traffic oscillation largely exists for linear analysis. On the other hand, the DF approximation introduced by \citep{zhou2023data} marginally overestimates the oscillation attenuation. Both the linear method and DF approximation considerably overestimate the phase of the frequency response, $\angle F$. Consequently, the response time (defined as the duration before the AV responds to a predecessor's driving behavior change), which can be calculated by $-\frac{\angle F}{\omega}$, is underestimated.

\subsection{Simulation of the system containing both control and state saturation\label{sec6.3}}

\begin{figure}[t]
    \centering
    \setlength{\abovecaptionskip}{0pt}
    \subcaptionbox{}{\includegraphics[width=0.4\textwidth]{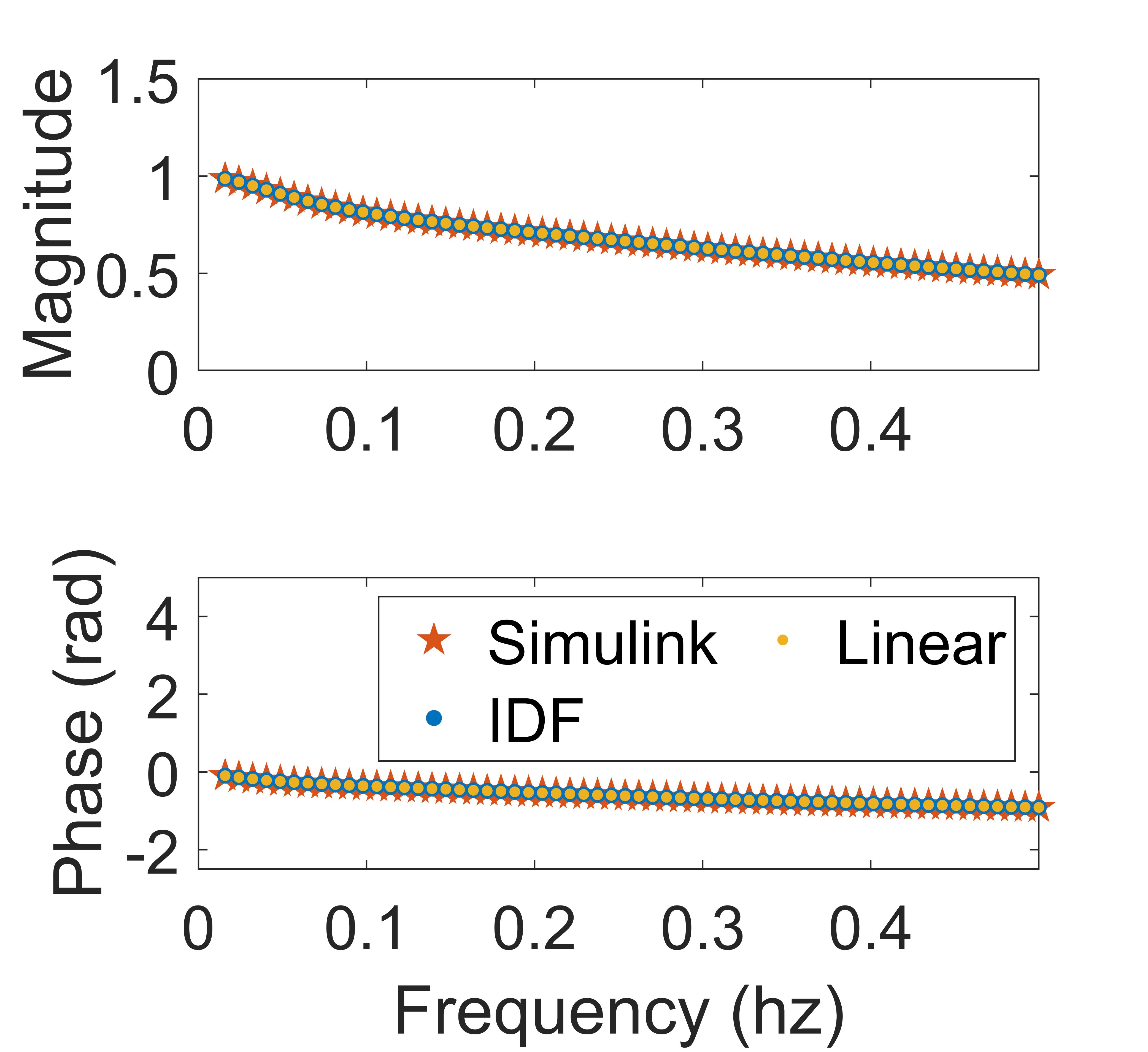}}
    \subcaptionbox{}{\includegraphics[width=0.4\textwidth]{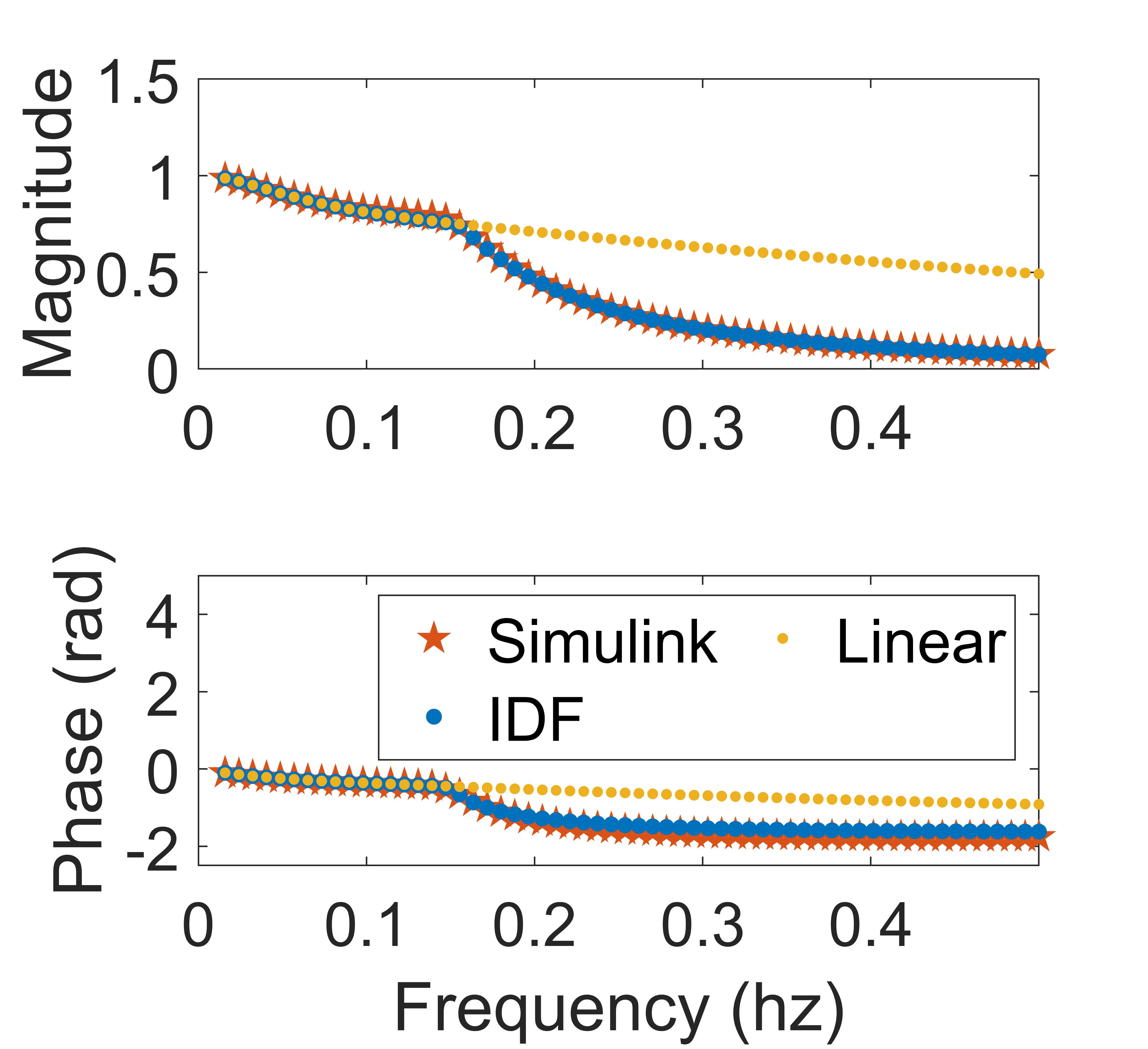}}
    \subcaptionbox{}{\includegraphics[width=0.4\textwidth]{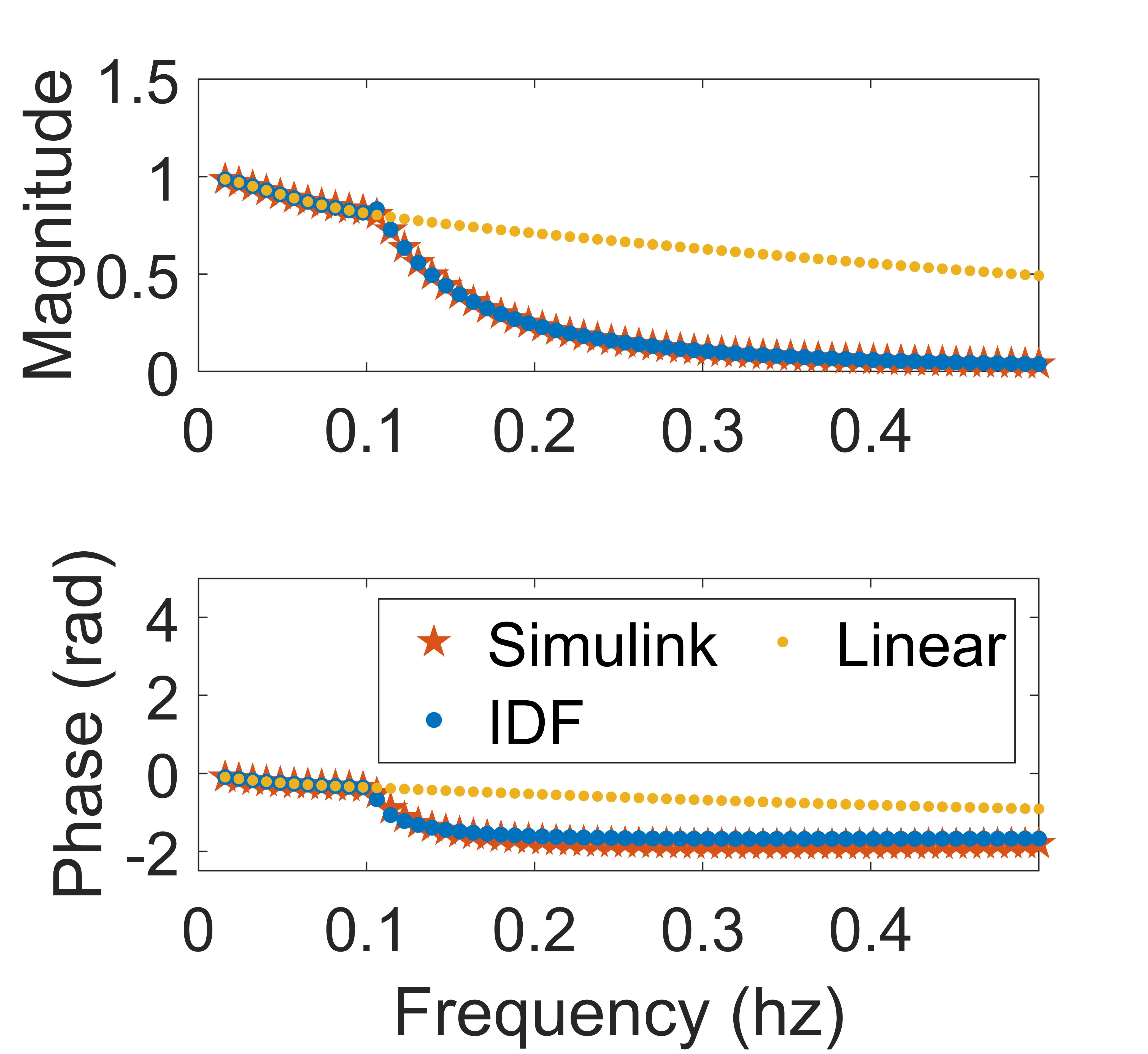}}
    \subcaptionbox{}{\includegraphics[width=0.4\textwidth]{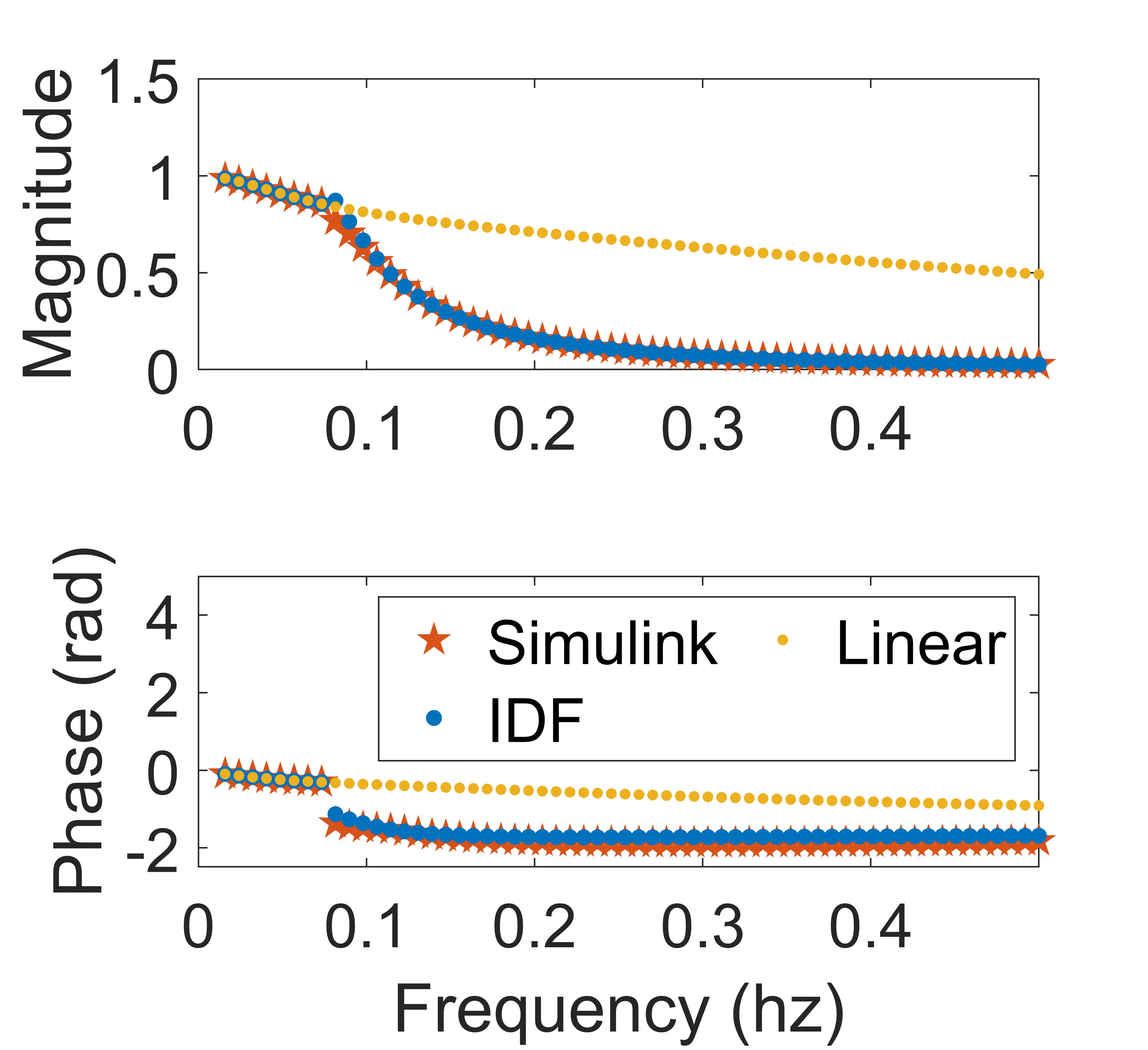}}
    \caption{Comparison of frequency responses of systems containing both control and state saturation derived by different methods: (a) $R/a_{bound}=0.1, R/\tilde{v}_{bound}=0.05$; (b) $R/a_{bound}=1.4, R/\tilde{v}_{bound}=0.7$; (c) $R/a_{bound}=2.7, R/\tilde{v}_{bound}=1.35$; (d) $R/a_{bound}=4, R/\tilde{v}_{bound}=2$}
    \label{fig:both FR}
\end{figure}

For AV systems integrated with both control and state saturation, which is exactly represented by Eqs. (\ref{eq:AV model1})-(\ref{eq:AV model3}) and Fig. \ref{fig:block diagram}. For this type of systems, the IDF method characterizes control saturation with Eqs. (\ref{eq:a DF 1}), (\ref{control DF2}), (\ref{control DF3}) and (\ref{control DF4}), and the state saturation with Eqs. (\ref{state DF1}), (\ref{state DF2}), (\ref{state DF3}), and (\ref{state DF4}); determines the amplitude and phase of potential oscillations by addressing Eqs. (\ref{both triag1})-(\ref{both triag2}), and examines the stability of these oscillation candidates using the methodology outlined in Subsection \ref{sec5.3}, where $H(s)$ is denoted by Eq. (\ref{incremental characteristic both}), while $N_{a,inc}$ characterized by Eqs. (\ref{control incremental DF1}), (\ref{control incremental DF2}), (\ref{control incremental DF3}) and (\ref{control incremental DF4}), and $N_{v,inc}$ characterized by Eqs. (\ref{state incremental DF1}), (\ref{state incremental DF2}), (\ref{state incremental DF3}), and (\ref{state incremental DF4}); and computes the magnitude and phase of the frequency response using Eqs. (\ref{both FR mag}) and (\ref{both FR phase}), respectively. The limits for acceleration and velocity adhere to the specifications from Subsection \ref{sec6.1}. We explore varying ratios of $R/{a}_{bound}$ and $R/\tilde{v}_{bound}$. A comparison between the frequency response's magnitude and phase as determined by both the IDF and the linear theoretical methods is drawn, using Simulink estimations for validation. This is visualized in Fig. \ref{fig:both FR}. The DF approximation proposed in \citep{zhou2023data} is excluded from this comparison, given its inability to address systems with dual nonlinearities.

It is observed from Fig. \ref{fig:both FR}(a) that when $R/a_{bound}$ and $R/\tilde{v}_{bound}$ are both small, the saturation limits are not reached and the system operates linearly. In Fig. \ref{fig:both FR}(b), only the $a_{bound}$ saturation limit is attained, leaving $\tilde{v}_{bound}$ unaffected, producing results consistent with Fig. \ref{fig:control FR}(b). Fig. \ref{fig:both FR}(c)-(d) represent conditions where both $a_{bound}$ and $\tilde{v}_{bound}$ are exceeded. Here, a noticeable decline in frequency response magnitude is observed, accompanied by a further decline in phase into the negative spectrum, comparing to Fig. \ref{fig:control FR}(c)-(d). The IDF method consistently demonstrates its proficiency in accurately representing the frequency response for systems encompassing both control and state saturation.

The findings align with the previous section. When compared with scenarios where only control or state saturation limits are reached, the underestimation of the oscillation attenuation effect and response time becomes even more pronounced.

\subsection{Sensitivity analysis\label{sec6.new}}

Having demonstrated the accuracy of the proposed method in the previous subsections, we now undertake a more comprehensive analysis of how the frequency and amplitude of the predecessor's oscillation affect the frequency response of an AV, considering the presence of saturation limits. 

To this end, we analyze the system using both the linear method and the proposed IDF method, considering both state and control input saturation, with the saturation limits set as specified in Subsections \ref{sec6.1} and \ref{sec6.3}, and using the default controller parameters. The frequency response is examined over a range from $0.1$ to $0,5Hz$ and an extended range of $R/a_{bound}$ from $0$ to $8$ (correspondingly, $0$ to $4$ for $R/v_{bound}$). The results are presented as heat maps in Fig. \ref{fig:heatmap stable}, where darker colors represent smaller magnitudes and more negative phases. From, Fig. \ref{fig:heatmap stable}(a)-(b), using the linear method, we observe that the magnitude and phase remain unchanged with varying $R/a_{bound}$ since the linear method does not account for saturation nonlinearity. However, as shown in Fig. \ref{fig:heatmap stable} (c)-(d), given the state and control input limits, the frequency response is in fact dependent on the oscillation amplitude of the predecessor. Generally, when the saturation limits are reached, the magnitude decreases and the phase becomes more negative compared to the linear method. This indicates that the linear method overestimates the amplification and underestimates the response time for an AV when these saturation limits are present. As $R/a_{bound}$ increases, the frequency at which the saturation limits are reached decreases. Furthermore, for the same value of $R/a_{bound}$, as the frequency increases, the magnitude gradually decreases after the limits are reached, while the phase undergoes a sudden shift when $R/a_{bound}$ exceeds approximately 5. These patterns are consistent with and generalize the results illustrated in Subsections \ref{sec6.1} and \ref{sec6.3}.

\begin{figure}[t]
    \centering
    \setlength{\abovecaptionskip}{0pt}
    \subcaptionbox{}{\includegraphics[width=0.35\textwidth]{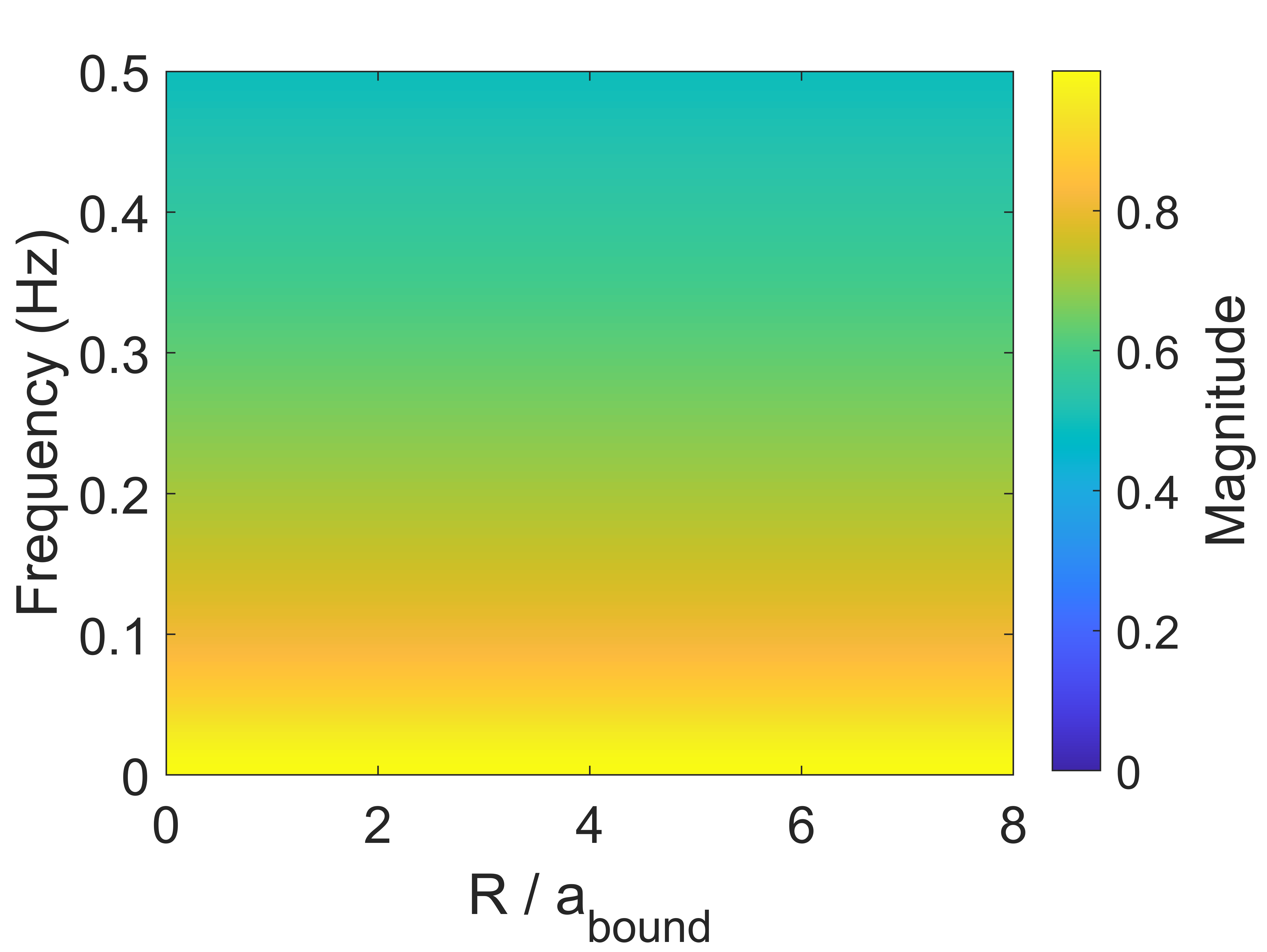}}
    \subcaptionbox{}{\includegraphics[width=0.35\textwidth]{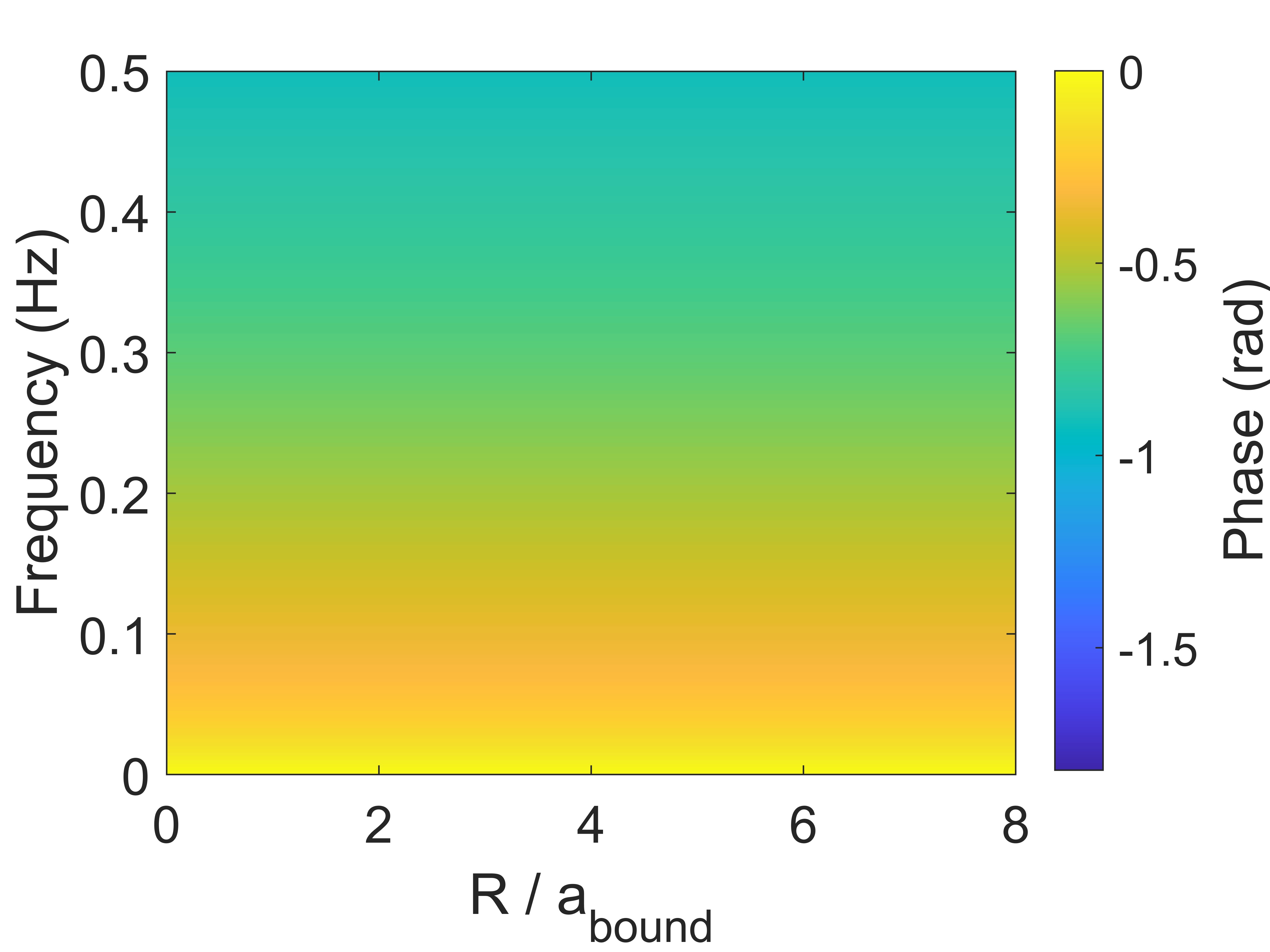}}
    \subcaptionbox{}{\includegraphics[width=0.35\textwidth]{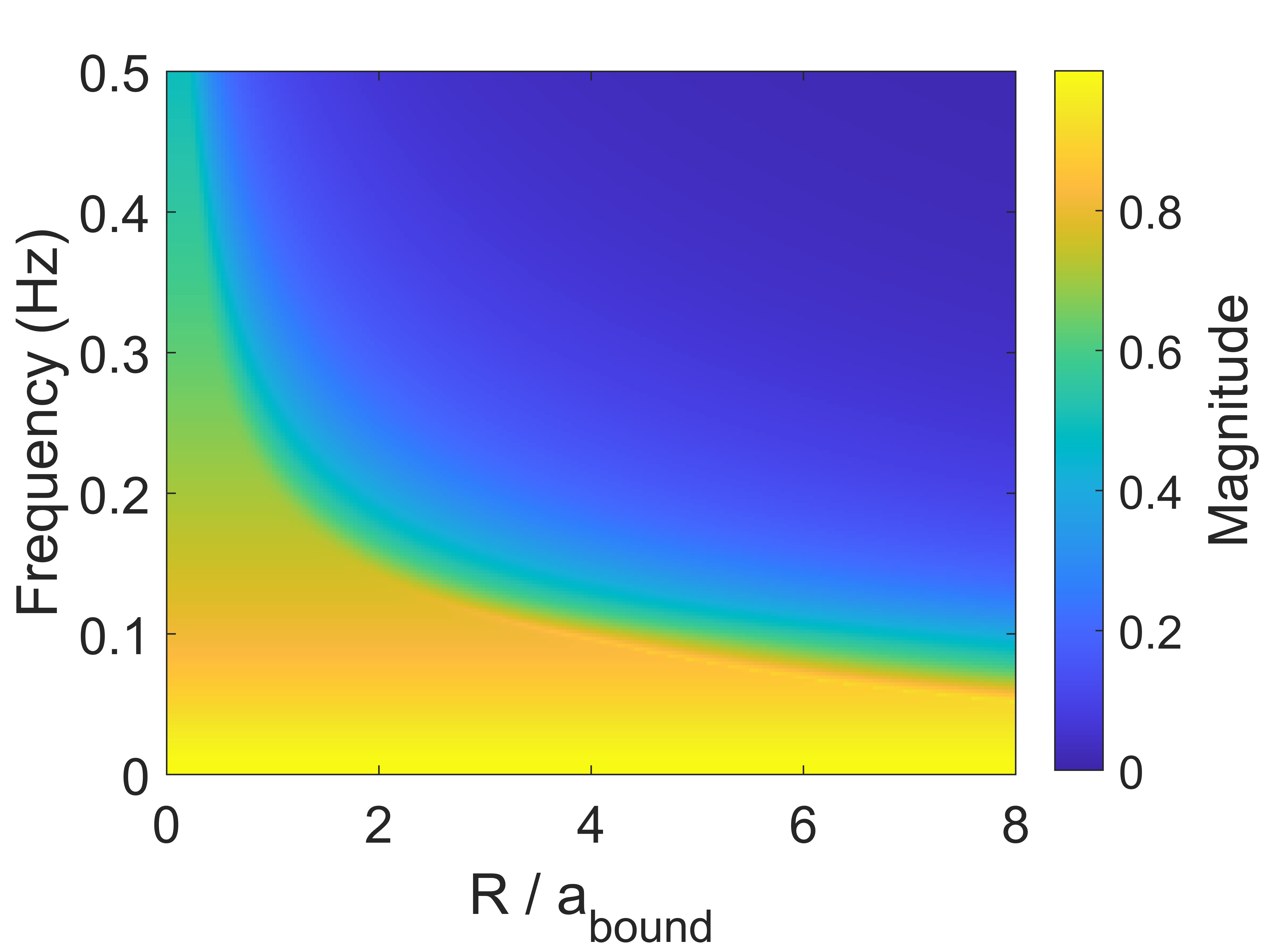}}
    \subcaptionbox{}{\includegraphics[width=0.35\textwidth]{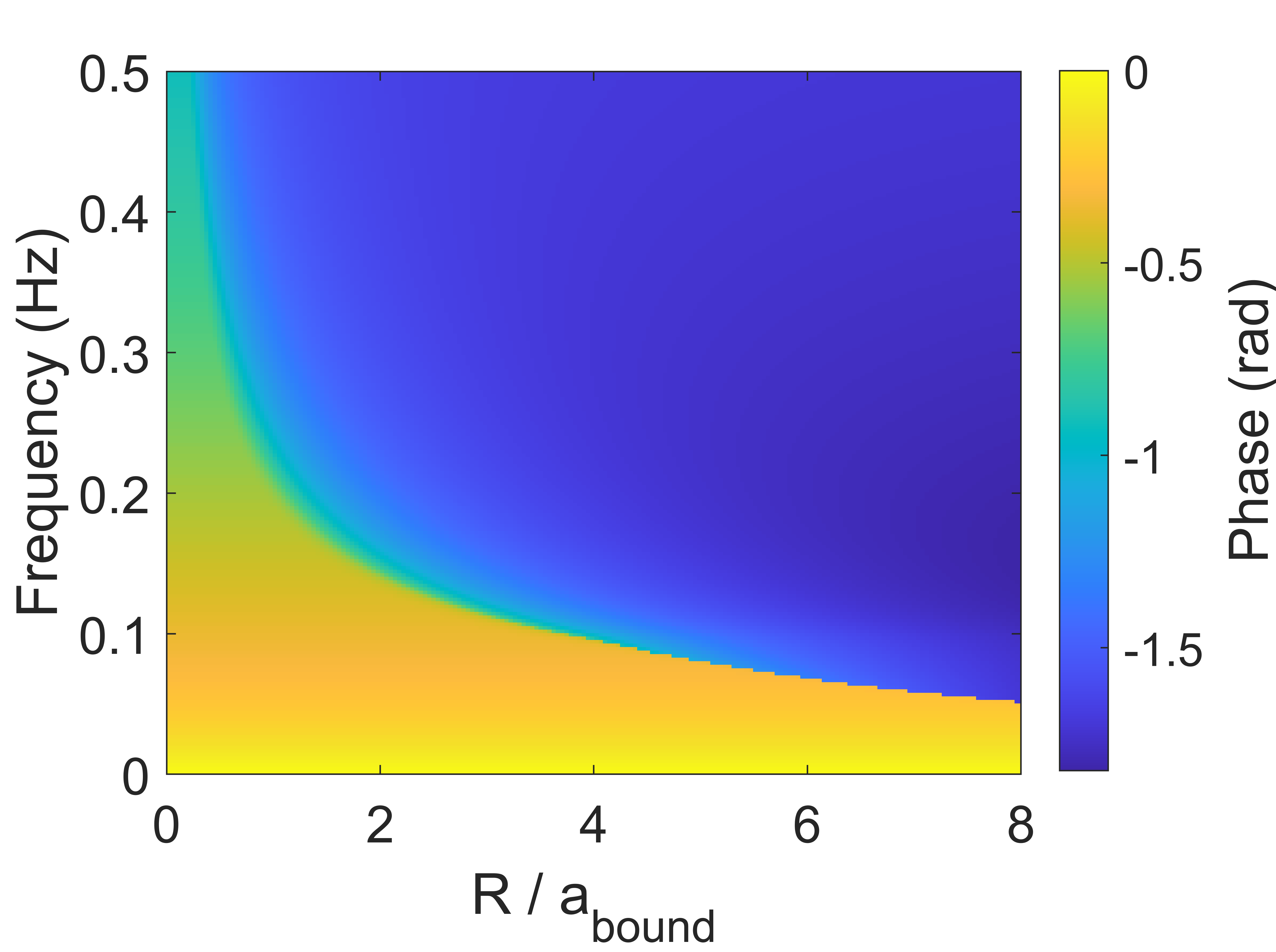}}
    \caption{Frequency responses of systems containing both control and state saturation with a linearly string-stable controller: (a) magnitude from linear analysis; (b) phase from linear analysis; (c) magnitude from IDF analysis; (d) phase from IDF analysis}
    \label{fig:heatmap stable}
\end{figure}

\begin{figure}[t]
    \centering
    \setlength{\abovecaptionskip}{0pt}
    \subcaptionbox{}{\includegraphics[width=0.35\textwidth]{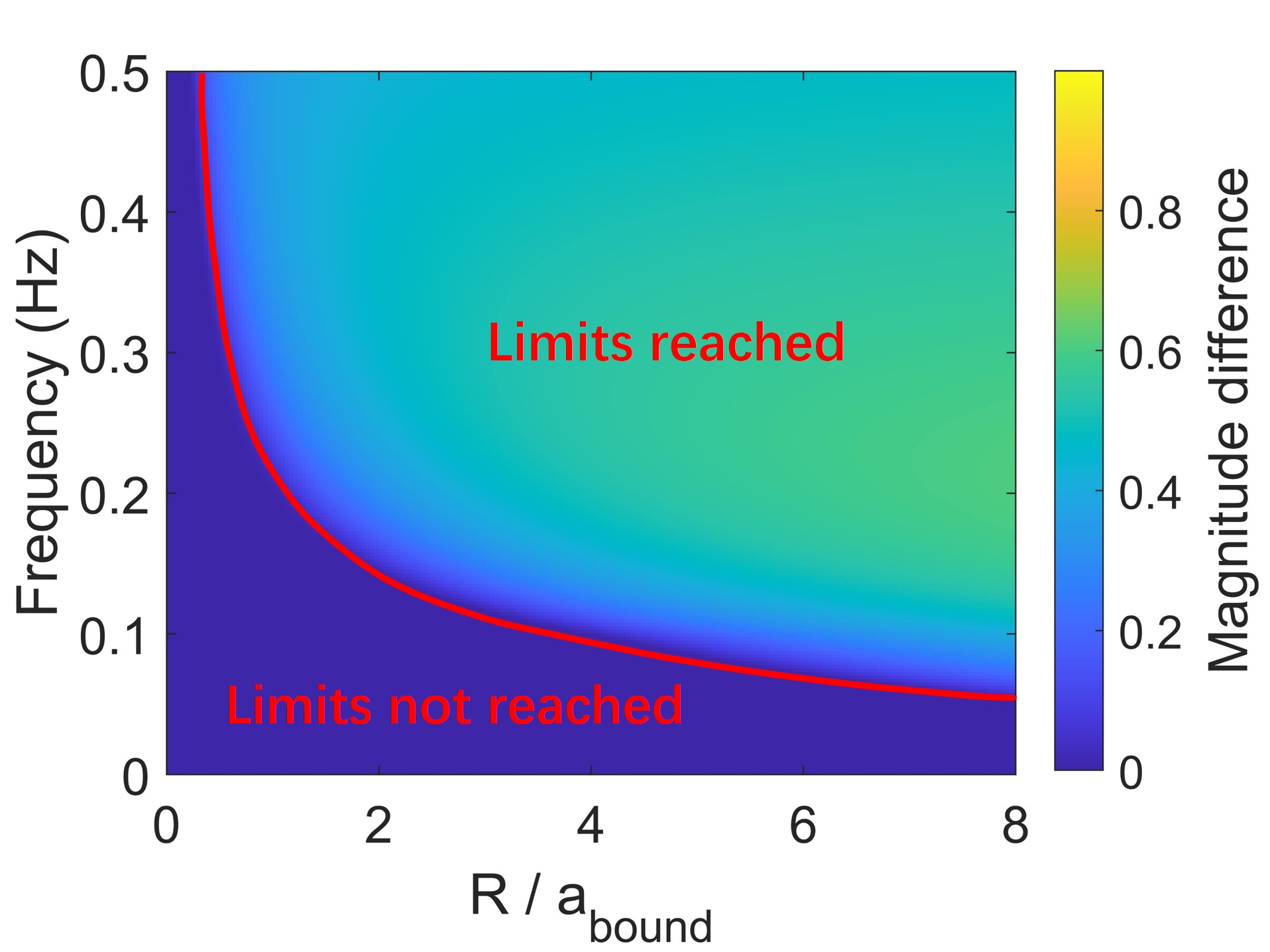}}
    \subcaptionbox{}{\includegraphics[width=0.35\textwidth]{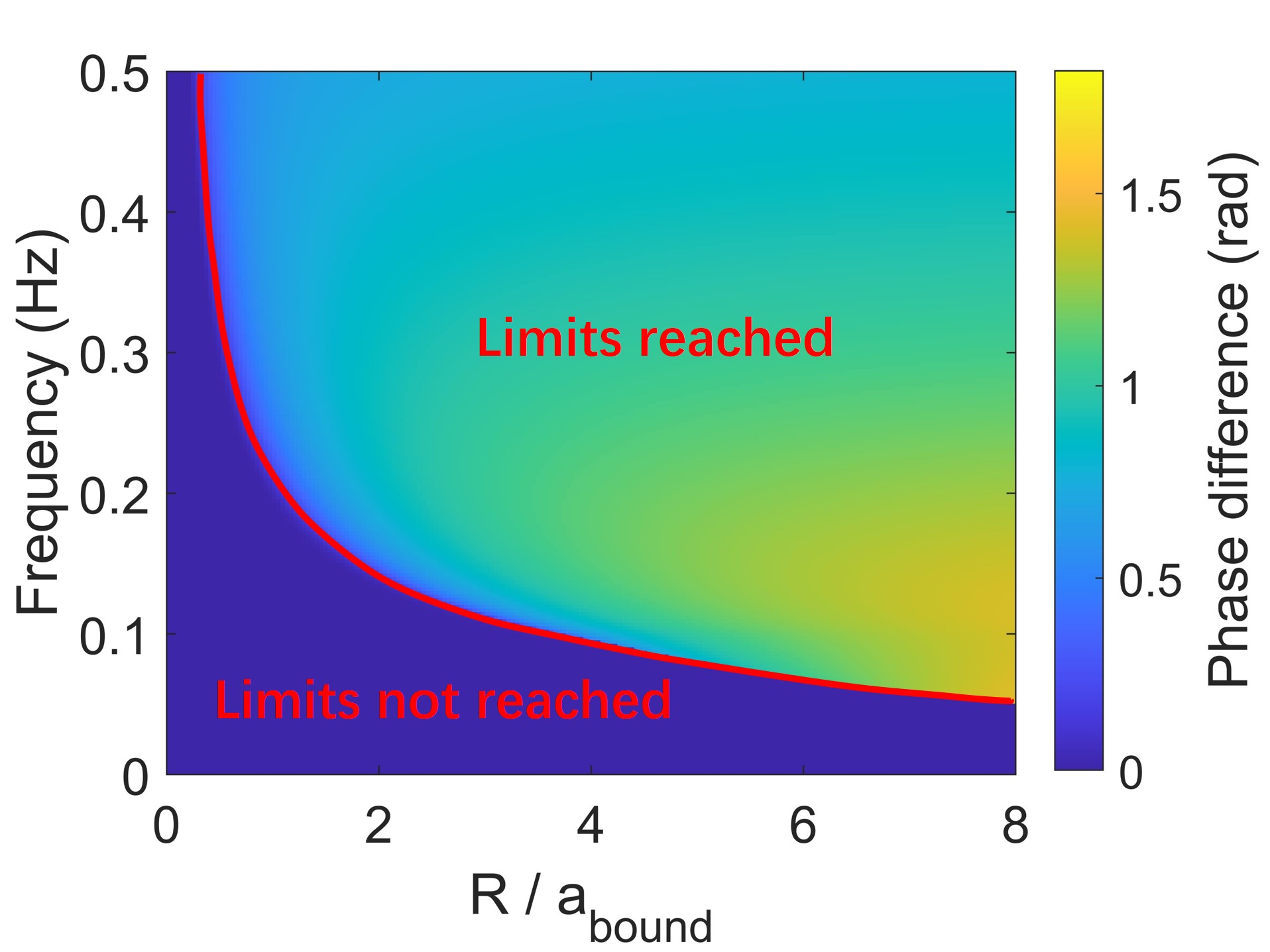}}
    \caption{Frequency response differences: (a) magnitude difference; (b) phase difference}
    \label{fig:heatmap stable dif}
\end{figure}

For clearer visualization of the discrepancies, the magnitude difference, $|F|_{lin}-|F|_{IDF}$, and phase difference, $\angle F_{lin}-\angle F_{IDF}$, are shown in Fig. \ref{fig:heatmap stable dif}. Here, the subscripts "$lin$" and "$IDF$" represent results obtained using the linear and $IDF$ methods, respectively. The regions where the limits are not reached and where they are reached are separated by a red line and appropriately labeled.

\subsection{Misjudged string stability of the linear theoretical method \label{sec6.4}}

In studies focusing on string stability of vehicular platoons, a widely accepted criterion indicating string stability is that the frequency response magnitude of all vehicles in the platoon should consistently stay at or below the threshold of 1 across the entire frequency spectrum (\citep{naus2010string,li2024sequencing}). However, as demonstrated in Subsections \ref{sec6.1}-\ref{sec6.new}, if the saturation limits are reached, the linear method tends to overestimate the amplification of AVs, which can lead to misleading results. This raises a natural question: might the linear method incorrectly classify a genuinely string-stable platoon as string-unstable? This potential error is significant given the widespread use of the linear method in current research. Highlighting such discrepancies, we turn our attention to a specific scenario. Since the dependence on the input amplitude of DF introduces additional complexity, we provide a simple indicative example of how the parameters are selected. Consider a linearly string-unstable, fully loaded automated truck with acceleration limit $\tilde{a}_{bound}=1ms^{-2}$ following a passenger car freely oscillating with an acceleration of amplitude $a_p=3ms^{-2}$ at frequency $f=0.02Hz$. The angular frequency is obtained as $\omega=2\pi f$, and the amplitude of the passenger car is $R=\omega^{-2}a_p=189.98m$. For a linearly string-unstable controller, we adjust the controller parameters to $\tau=0.4s$, $k_d=1s^{-2}$ and $k_v=0.4s^{-1}$, hence $k_1=1s^{-2}$, $k_2=0.4s^{-1}$ and $k_3=-0.8s^{-1}$. Further with the specified $\tilde{a}_{bound}$ and $R$, we analyze the response of the following automated truck within the whole frequency range of interest. A comparison of the magnitude and phase of the frequency response, derived from the IDF method, the linear theoretical method, the DF approximation proposed by \citep{zhou2023data}, and the Simulink estimation is showcased in Fig. \ref{fig:string misjudge}. As depicted by the yellow dots in Fig. \ref{fig:string misjudge}, the magnitudes obtained by the linear theoretical method exceed 1 at lower frequencies less than $0.2Hz$, indicating string-instability. However, the oscillatory speed limits are reached for the whole frequency range. As shown by the red pentagrams, the true magnitudes stay within the stable threshold for all frequencies. This observation is in line with the results presented by both the IDF method and DF approximation.

\begin{figure}[h]
    \centering
    \setlength{\abovecaptionskip}{0pt}
    \includegraphics[width=0.4\textwidth]{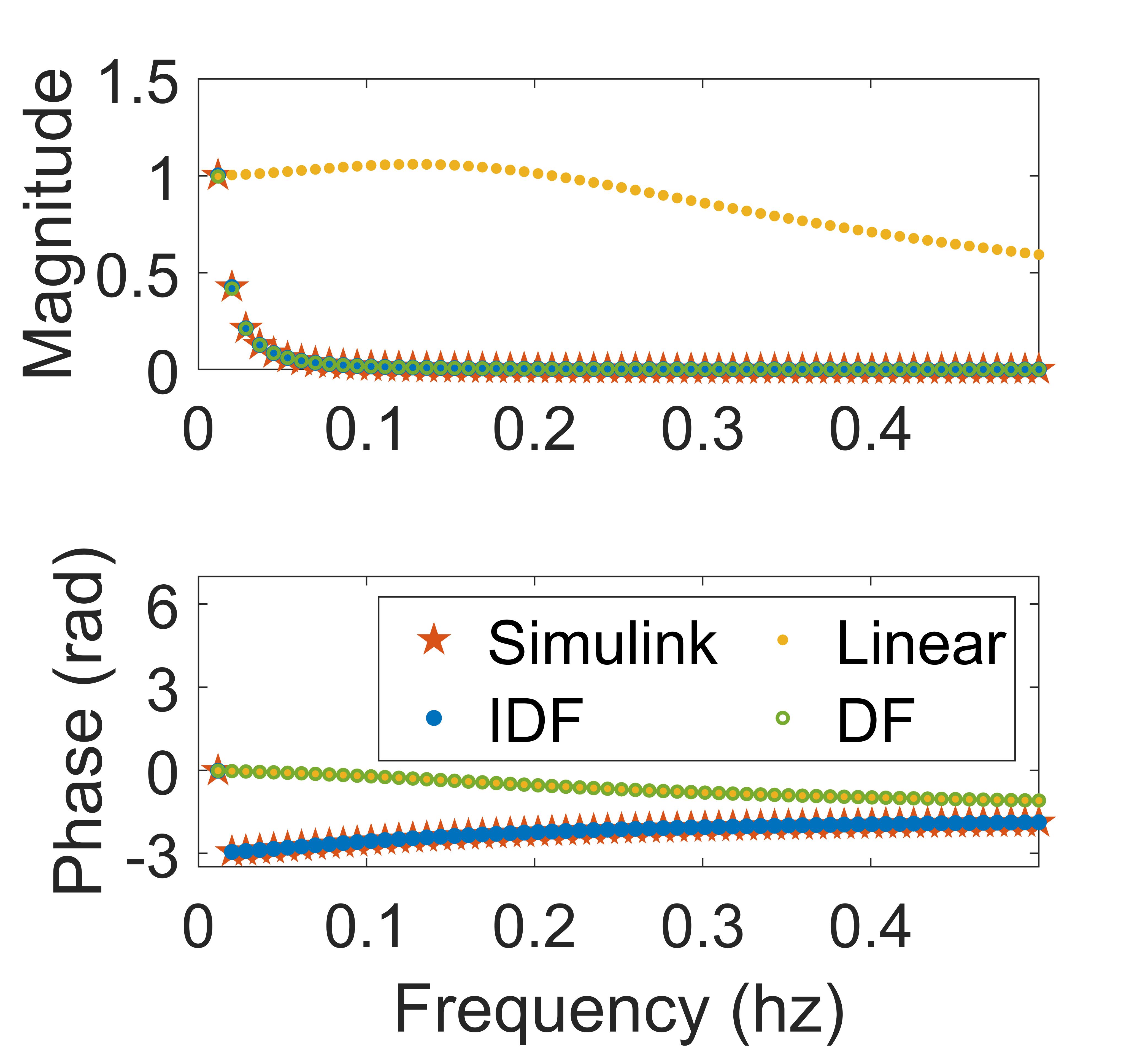}
    \caption{Exemplary comparison of frequency response indicating misjudged string stability of the linear theoretical method}
    \label{fig:string misjudge}
\end{figure}

\section{Conclusion\label{sec7}}
This paper presents an incremental-input DF method to analytically obtain the frequency response of AV systems with nonlinearity induced by traffic state and control input saturation. The main contribution of the proposed framework lies in analytically and accurately discerning the CF oscillation characteristics of AV closed-loop systems with nonlinearity induced by saturation. We derive the DF of control and state saturation in AV systems and replace the saturation elements with their DFs. Based on that, we further analyze the closed-loop frequency response of the AV systems and obtain oscillation candidates by first harmonics balancing. Subsequently, we derived the incremental-input DF of the control and state saturation for each candidate, to characterize nonlinear saturation's impact on perturbations. By further applying the Nyquist method, we evaluated the stability of each oscillation candidate. 

The proposed method is evaluated for AV systems integrated with solely the control saturation, solely the state saturation, and a combination of both. The theoretical frequency response results are consistent with the Simulink estimations and surpass other theoretical frameworks in terms of analyzing the oscillation attenuation and response time of AVs. It is found that when a saturation limit is reached, generally speaking, the widely used linear string stability analysis tends to underestimate both the attenuation effect and response time of AVs. When both traffic state and control input limits are reached, this phenomenon is even more pronounced. Our study also reveals that when saturation limits are reached for all frequencies, the commonly used linear method can misjudge string stability, while the proposed method remains accurate. {Future work includes investigating highly asymmetric limits using the dual-input describing function.}

\printcredits

\bibliographystyle{cas-model2-names}
\bibliography{main}


\end{document}